\newcommand{\mc}[1]{\ensuremath{\mathcal{#1}}}
\renewcommand{\v}[1]{\ensuremath{\boldsymbol{\mathrm{#1}}}}
\renewcommand{\b}[1]{\ensuremath{\overline{#1}}}
\newcommand{\reals}{\ensuremath{\mathbb{R}}}
\renewcommand{\P}{\ensuremath{\mathbb{P}}}
\newcommand{\Q}{\ensuremath{\mathbb{Q}}}
\newcommand{\E}{\ensuremath{\mathbb{E}}}
\newcommand{\GTE}{\ensuremath{\mathsf{GTE}}}
\newcommand{\UB}{\ensuremath{\mathsf{UB}}}
\newcommand{\ADE}{\ensuremath{\mathsf{ADE}}}
\newcommand{\FPP}{\ensuremath{\mathsf{FPP}}}
\newcommand{\FNP}{\ensuremath{\mathsf{FNP}}}
\newcommand{\Var}{\ensuremath{\mathrm{Var}}}
\newcommand{\Cov}{\ensuremath{\mathrm{Cov}}}
\newcommand{\DM}{\ensuremath{\widehat{\GTE}}}
\definecolor{cobalt}{rgb}{0.0, 0.28, 0.67}
\newcommand{\anushka}[1]{{\color{red} \textbf{[A: #1]}}}
\newcommand{\delam}[1]{{\color{red} \sout{#1}}}
\newcommand{\rj}[1]{{\color{blue} \textbf{[R: #1]}}}
\newcommand{\delrj}[1]{{\color{blue} \sout{#1}}}
\newcommand{\gw}[1]{{\color{cyan} \textbf{[G: #1]}}}
\newcommand{\delgw}[1]{{\color{cyan} \sout{#1}}}
\newcommand{\hl}[1]{{\color{magenta} \textbf{[H: #1]}}}
\newcommand{\delhl}[1]{{\color{magenta} \sout{#1}}}
\newcommand{\anushka}[1]{}
\newcommand{\delam}[1]{}
\newcommand{\rj}[1]{}
\newcommand{\delrj}[1]{}
\newcommand{\gw}[1]{}
\newcommand{\delgw}[1]{}
\newcommand{\hl}[1]{}
\newcommand{\delhl}[1]{}
\newtheorem{theorem}{Theorem}[section]
\newtheorem{proposition}{Proposition}[section]
\newtheorem{lemma}{Lemma}[section]
\newtheorem{corollary}{Corollary}[section]
\newtheorem{assumption}{Assumption}[section]
\newtheorem{definition}{Definition}[section]
\theoremstyle{definition}
\newtheorem{example}{Example}[section]
\title{When Does Interference Matter?\\Decision-Making in Platform Experiments}
\author[1]{Ramesh Johari}
\author[2]{Hannah Li}
\author[1]{Anushka Murthy}
\author[3]{Gabriel Y.~Weintraub}
\affil[1]{Management Science and Engineering, Stanford University}
\affil[2]{Columbia Business School}
\affil[3]{Stanford Graduate School of Business}
\date{\today}
\begin{document}

\maketitle

\begin{abstract}
This paper investigates decision-making in A/B experiments for online platforms and marketplaces. In such settings, due to constraints on inventory, A/B experiments typically lead to biased estimators because of {\em interference} between treatment and control groups; this phenomenon has been well studied in recent literature.  By contrast, there has been relatively little discussion of the impact of interference on decision-making.  In this paper, we analyze a benchmark Markovian model of an inventory-constrained platform, where arriving customers book listings that are limited in supply; our analysis builds on a self-contained analysis of general A/B experiments for Markov chains.  We focus on the commonly used frequentist hypothesis testing approach for making launch decisions based on data from customer-randomized experiments, and we study the impact of interference on (1) false positive probability and (2) statistical power.

We obtain three main findings. First, we show that for {\em sign-consistent} treatments---i.e., those where the treatment changes booking probabilities in the same direction relative to control for all states of inventory availability---the false positive probability of a test statistic using the standard difference-in-means estimator with a corresponding na\"{i}ve variance estimator is correctly controlled. We obtain this result by characterizing the false positive probability via analysis of A/A experiments with arbitrary dependence structures, which may be of independent interest. Second, we demonstrate that for sign-consistent treatments in realistic settings, the statistical power of this naïve approach is higher than that of any similar pipeline using a debiased estimator.  Taken together, these two findings suggest that platforms may be better off not debiasing when treatments are sign-consistent: using the naïve approach, the false positive probability is controlled, and the power is strictly higher.  Third, using numerics, we investigate false positive probability and statistical power when treatments are sign-inconsistent, and we show that in principle, the performance of the naïve approach can be arbitrarily worse in such cases.

Our results have important implications for the practical deployment of debiasing strategies for A/B experiments.  In particular, they highlight the need for platforms to carefully define their objectives and understand the nature of their interventions when determining appropriate estimation and decision-making approaches.  Notably, when interventions are sign-consistent, the platform may actually be worse off by pursuing a debiased decision-making approach.
\end{abstract}

\section{Introduction}
\label{sec:intro}

Online platforms and marketplaces routinely use randomized controlled trials, also known as {\em A/B experiments} to test changes to their market design, such as the introduction of new algorithms, new features, or pricing and fee changes.  For example, an online marketplace for lodging might use A/B experiments to test the impact of a change in the description of listings on overall bookings; or a ride-sharing marketplace might use A/B experiments to test the impact of lowering the surge price on overall rides.  In a typical approach for such experiments, customers (i.e., users looking to purchase) are randomized to {\em treatment} (i.e., the new product feature) or {\em control} (i.e., the existing product feature), using a simple i.i.d.~Bernoulli randomization scheme.  Of interest to the platform is the {\em global treatment effect} ($\GTE$): the difference in aggregate sales if the entire market is in treatment as compared to control.  Once outcomes (e.g., the number of bookings or rides) are collected after the experiment, a simple difference-in-means (DM) estimator (denoted $\DM$) is used to estimate the $\GTE$. 

A key challenge for these platforms is that this simple estimation approach suffers from {\em interference} between treatment and control units, because of the constrained inventory on each side of the platform.  For example, consider an A/B experiment in an online marketplace for lodging that randomizes arriving customers to treatment or control; this is commonly referred to as {\em user-level randomization} in industry, and such experiments are a typical approach to testing a wide range of platform design features.  Because these customers interact with the same inventory of listings, customers' booking outcomes impact the state of the market as seen by subsequent arriving customers.  This interference effect leads $\DM$ to be {\em biased} relative to the $\GTE$.  Extensive recent literature has investigated conditions characterizing the magnitude of this bias, conditions under which it is magnified, and methods for {\em debiasing} (i.e., combinations of designs and/or estimators that estimate $\GTE$ with low bias).  See Section \ref{sec:related} for related references. 

Despite this extensive attention on bias and debiasing of estimates of the treatment effect, thus far there has been limited investigation of the impact of bias on {\em decision-making}.  In the typical use case of A/B experiments, beyond estimation of the treatment effect, platforms are also making a decision about whether or not to launch the treatment change being tested to the entire marketplace.  {\em What is the impact of interference on these decisions?}  Our paper focuses on this question.

Our main contribution is an analysis and characterization of the impact of interference in a benchmark model of decision-making.  As we show, for a wide range of interventions, despite the presence of interference a platform may actually be {\em no worse off} (and possibly better off, in a sense we make precise) if they make decisions using the ``na\"{i}ve'' DM estimation approach, along with the associated classical variance estimator.  We further discuss conditions under which debiasing can be essential to make correct decisions.

We consider a frequentist decision-making process that is quite commonly used after A/B experiments in online platforms, based on frequentist hypothesis testing.  After calculating $\DM$, the platform computes an associated na\"{i}ve variance estimator assuming observations were i.i.d.~(i.e., that there is no interference or correlation between observations); we denote this variance estimator $\widehat{\Var}$.  (Note that because interference is present, in general $\widehat{\Var}$ will also be biased for the true variance of $\DM$.)  Using these quantities, the platform forms the standard {\em t-test statistic}, $T = \DM/\sqrt{\widehat{\Var}}$.  The platform then supposes that under the {\em null hypothesis} $H_0$ that $\GTE = 0$, $\hat{T}$ is approximately distributed according to a standard normal random variable; this assumption is valid if data is independent across observations (and the sample size is sufficiently large), but not necessarily in the presence of interference.  In particular, $\hat{T}$ is then compared to the tail quantiles of a standard normal random variable, and $H_0$ is rejected if $|\hat{T}|$ is sufficiently large.  Commonly, $H_0$ is rejected if $|\hat{T}| > 1.96$, corresponding to 95\% statistical significance; this practice is equivalent to rejecting $H_0$ if zero is not in the 95\% normal confidence interval around $\DM$. (Note that in practice, the platform will typically only launch the intervention if $H_0$ is rejected, {\em and} the estimate $\DM$ is in a beneficial direction for the platform, e.g., positive for metrics like revenue, negative for metrics like cost.  It can be shown that our results can extend to incorporate this additional sign constraint as well.\footnote{In particular, the decision rule that first tests whether $|\hat{T}|$ is sufficiently large, then checks that the sign of $\hat{T}$ is positive or negative, is equivalent to a {\em one-sided} hypothesis testing procedure; we briefly discuss this in Section \ref{sec:cc}.  For simplicity of presentation in this paper, we consider the {\em two-sided} hypothesis testing procedure as described above.})

Although the frequentist hypothesis testing paradigm faces many criticisms in practice, it is also widely deployed and prevalent in the decision-making practices of technology companies generally, and online platforms and marketplaces in particular.  Typically this decision-making pipeline is evaluated based on two criteria.  {\em First} is the {\em false positive probability} (or type I error rate): what is the chance of mistakenly rejecting $H_0$, when $H_0$ is true, i.e., $\GTE = 0$, and in particular, does the false positive probability match the desired control in the decision rule?  (For example, when a cutoff of $|\hat{T}| > 1.96$ is used, the false positive probability should be no more than $5\%$.)  {\em Second} is the {\em statistical power}, or the complement of the {\em false negative probability} (type II error rate): what is the chance of correctly rejecting $H_0$, under a specific alternative for which $\GTE \neq 0$?  We analyze and characterize these two quantities in the presence of interference.

Formally, to carry out our analysis, we consider a general continuous-time Markovian model of an inventory-constrained platform with finitely many listings; similar models have been considered by prior papers as well to model interference (see Section \ref{sec:related}).  Customers arrive over time according to a Poisson process, and can book a listing if one is available; our model allows customers to have heterogeneous preferences, but listings are homogeneous. (We briefly discuss listing heterogeneity in Section \ref{sec:conclusion}.)  Once booked, a listing remains occupied for some time before becoming available to book again.  Our model assumes that an arriving customer's {\em booking probability} is lower when fewer listings are available, as is the case in real-world platforms.  We consider a natural class of treatments: those that change the state-dependent booking probabilities of customers.  We suppose that the platform runs a Bernoulli customer-randomized experiment, collects data on the booking outcomes of $N$ arriving customers, and executes the decision-making pipeline above.

A challenge arises here because in general, the null hypothesis $H_0$ that $\GTE = 0$ does {\em not} completely determine the distribution of observations: informally, this is because there are many configurations of treatment and control booking probabilities that lead to the {\em same} value of $\GTE$. For example, an intervention could increase booking probabilities in some states while decreasing booking probabilities in other states, resulting in $\GTE=0$; and there are many such interventions. (In statistical terms, $H_0: \GTE = 0$ is a {\em composite} null hypothesis, rather than a {\em simple} null hypothesis.) To make progress, we focus our attention on sign-consistent treatments: these are treatments where in all states, the average booking probability of customers moves in the same direction (i.e., higher or lower) relative to control. 
Informally, sign-consistency is a reasonable assumption for  those treatments that are not state-dependent; for example, a platform might provide more information about cancellation policies or payment procedures, or it might enable a new checkout flow.  The platform may be uncertain whether users find the additional friction beneficial or detrimental; but because the intervention is not state-dependent, it may be plausible to assume in advance that the sign of the average treatment effect across customers is not state-dependent either. Sign-consistency can also be a reasonable assumption for some state-dependent settings, where one can reasonably expect that the change in booking probability will be the same in all states (e.g., if a platform raises fees, then it is reasonable to assume that regardless of the state, this will lower booking probabilities).
Prior literature has also studied the bias in estimation of $\GTE$ from platform experiments when treatments are sign-consistent; see, e.g., \cite{holtz2020reducing, johari2022experimental, li2022interference, bright2022reducing, dhaouadi2023price}.

Crucially, we show that if the treatment is known to be sign-consistent {\em and} $H_0$ holds, then all booking probabilities in treatment and control are {\em identical}. In other words, the assumption of sign-consistent treatments, together with $\GTE = 0$, determines that the experimental observations correspond to an {\em A/A experiment}.  A/A experiments are tests where the same version of a product or feature is tested against itself (so that the joint distribution of observations in both the global treatment and global control conditions is the same), and are routinely used in practice to validate testing infrastructure \cite{geteppoWhatTest,optimizelyTestWhat,statsigConductTest}).

Our {\em first} main contribution is to show that {\em when treatments are sign-consistent, the false positive probability of the decision-making pipeline above is correctly controlled} (Section \ref{sec:fpp}). We obtain this insight by considering a more general setting of A/A experiments with arbitrary (even non-Markovian) dependence structure between observations.  We show via a probabilistic exchangeability argument in this more general setting that the estimator $\widehat{\Var}$ is {\em unbiased} for the true variance of $\GTE$, despite no estimation of covariance between observations.  %
An associated central limit theorem is also given in a general setting of Markovian system dynamics (see Theorem \ref{thm:aa_clt}).  Application of these results to our inventory-constrained platform yields the desired control of false positive probability under sign-consistent treatments.

Our {\em second} main contribution is to show that {\em when treatments are sign-consistent, the statistical power of the decision-making pipeline above is {\em significantly} higher than that achieved by any similar pipeline using a debiased estimator}, in the realistic setting of large state spaces (Section \ref{sec:power}).  We show this by imagining that the platform has access to an alternative estimator $\hat{\theta}$ which is unbiased for the $\GTE$, and alongside $\hat{\theta}$ is able to exactly compute the variance $\Var(\hat{\theta})$.  We suppose the platform could form a test statistic $\hat{U} = \hat{\theta}/\sqrt{\Var(\hat{\theta})}$.  Our key finding is that when the state space is large, the magnitude of $\hat{T}$ is exponentially larger than that of $\hat{U}$.  This requires two steps.  First, using stochastic monotonicity arguments applied to the underlying experiment Markov chain, we show that both in finite samples and asymptotically, $\DM$ is {\em larger} than $\GTE$ (Theorems \ref{thm: pos bias DM} and \ref{thm: pos bias limit}); in finite samples in particular, this is a result that has not been shown previously.  On the other hand, leveraging a Cram\'{e}r-Rao lower bound on variance of any unbiased estimator given in \cite{farias2022markovian}, we show that $\Var({\hat{\theta}})$ generically grows {\em exponentially} with the size of the state space in any capacity-constrained platform experiment of the type we study in this paper, while $\widehat{\Var}$ remains bounded.  (In \cite{farias2022markovian}, a specific example with this property is constructed; our result shows the exponential growth is generic and unavoidable.)  Together, these results suggest that using $\hat{T}$ yields significantly {\em higher} power than using the unbiased estimation strategy $\hat{U}$ as the state space grows, as we demonstrate via numerical examples.  

These two contributions together suggest the surprising finding that when treatments are sign-consistent, and the platform uses the decision-making pipeline above, then the platform is likely strictly better off {\em not} debiasing.  Our {\em third} main contribution is to investigate the robustness of this finding, by studying via numerics the consequences when treatments are sign-inconsistent (Section \ref{sec:sims}).  We consider a natural class of treatments: those that might {\em increase} booking probabilities when many listings are available, but {\em lower} booking probabilities when few listings are available.  For example, a ride-sharing platform may want to test changing prices in a state-dependent manner, lowering prices relative to control when many drivers are available, but raising prices relative to control when few drivers are available.  We show via numeric example that with sign-inconsistent treatments, in principle, the performance of the na\"{i}ve decision-making pipeline above using $\hat{T}$ can be arbitrarily worse than a debiased strategy using $\hat{U}$, both in terms of false positive probability and in terms of statistical power.

Taken together, our findings have important implications for the deployment of debiasing strategies in practice.  Many of the debiasing methods suggested in the literature are nontrivial, and from a practical standpoint, there can be significant organizational friction in adopting these alternatives.  Our work suggests that understanding the nature of the intervention is important to determining whether the additional effort in debiasing is worthwhile; and indeed, for sign-consistent interventions and customer-randomized experiments, it may be strictly preferable {\em not} to debias.  Of course, platforms have many other goals as well in A/B experimentation; for example, often the precise estimate of the treatment effect is of interest (e.g., when evaluating the benefits of an intervention against the cost of deployment), in which case debiasing is essential to obtain an accurate estimate of the true $\GTE$.  Broadly, our work argues that platforms should carefully define their objectives and inferential goals in determining an appropriate approach to estimation and decision-making.

The remainder of the paper is organized as follows.  In Section \ref{sec:related}, we present related work.  In Section \ref{sec:cc}, we introduce our benchmark inventory-constrained platform model, as well as the decision-making pipeline outlined above, and introduce the concept of sign-consistent interventions.  In Section \ref{sec:fpp}, we present our general results for A/A experiments (Section \ref{subsec:aa_general}), and use it to characterize false positive probability under sign-consistent treatments, both in finite samples and asymptotically (Section \ref{sec:aa_monotone}).  In Section \ref{sec:power}, we study statistical power, again in finite samples (Section \ref{sec: finite sample power}) and asymptotically (Section \ref{sec: asymptotic power}).  Finally, in Section \ref{sec:sims}, we present numerics results investigating false positive probability and statistical power when treatments are sign-inconsistent.  We conclude in Section \ref{sec:conclusion}.

We collect together supplementary material in several appendices.  Appendix \ref{app:MC} may be of independent interest to other researchers working on experimentation in Markovian settings, where we present together (in a self-contained manner) central limit theorems and associated analysis for both A/B and A/A experiments when the underlying treatment and control systems are general Markov chains.  (We use these results in our analysis of our inventory-constrained platform model.)  Appendix \ref{app:proofs} contains proofs of several results in the main paper.  Appendix \ref{app:sims} contains additional numerics results.

\section{Related work}
\label{sec:related}

In this section we discuss three related streams of work: (1) Interference in experiments, particularly in networks and markets; (2) the use of Markov chain models to study experimental design and estimation; and (3) the practice of A/B experimentation, and particularly making decisions from A/B experiments.

\paragraph{Interference in experiments.}  A rich literature in causal inference broadly, and more recently in the study of networks  and markets, has considered interference between treatment and control groups in experiments.  We refer the reader to \cite{Hudgens2008,rosenbaum2007interference,imbens2015causal,hu2022average} for broader discussion of interference.  In the literature on social networks, a range of papers have studied the role of interference on bias of estimation, as well as approaches to obtain unbiased estimates of direct and indirect treatment effects; see, e.g., \cite{aronow2017estimating,Manski2013,ugander2013graph,eckles2017design} for early influential work in this area.  We note that \cite{athey2018exact} provides exact p-values in a randomization inference framework for network experiments, which correctly control false positive probability.

More recently, extensive attention has also been devoted to interference in marketplace and platform experiments; see, e.g., \cite{blake2014marketplace,wager2021experimenting,johari2022experimental,li2022interference,bajari2023experimental,munro2021treatment,bright2022reducing,shirani2024causal,han2023detecting} as examples of this line of work.  As discussed in Section \ref{sec:intro}, this prior work primarily focuses on the presence of {\em bias} in the use of ``naive'' estimators of $\GTE$, such as the difference-in-means estimator, and often investigates designs and/or estimators to reduce that estimation bias.  Many of the papers that study Markov chain models in the context of experimentation are also specifically motivated by similar questions in marketplace experimentation, as we discuss below.  In contrast to these works, our emphasis is on understanding false positive probability and statistical power when the difference-in-means estimator (and its associated naive variance estimator) are used for decision-making, in spite of interference and the resulting bias.

We note that a number of papers have specifically considered interference in the context of {\em price} experimentation, including \cite{blake2014marketplace,wager2021experimenting,li2023experimenting,dhaouadi2023price,roemheld2024interference}.  One interesting issue that can arise there is that optimization based on the estimated effect of a price change on revenue or profit can lead the decision-maker astray, as (due to interference) the estimator can have the wrong sign compared to the true $\GTE$ (see \cite{dhaouadi2023price,roemheld2024interference}).  We consider bookings as our primary outcome metric in our paper, so our findings should be viewed as distinct from and complementary to this line of work.

\paragraph{Markov chain models of experiments.}  A number of recent papers have employed Markov chain models as a structural representation of treatment and control in experimental settings; see, e.g., \cite{glynn2020adaptive,hu2022switchback,johari2022experimental,farias2022markovian,farias2023correcting,li2023experimenting,boutilier2024randomized,hays2025double}, as well as related work on off-policy evaluation for Markov decision processes \cite{liao2021off,kallus2020double,hu2023off,mehrabi2024off}. 

Of these papers, the most related to our own work are \cite{johari2022experimental,farias2022markovian,li2023experimenting}, all of which consider a queueing model to capture interference due to limited inventory, as arises in marketplaces.  In \cite{johari2022experimental}, the paper focuses on analysis of estimation bias in a mean-field regime, as well as an alternative two-sided randomized design and associated estimator to mitigate bias; a similar design was contemporaneously proposed and studied in \cite{bajari2023experimental}.  In \cite{li2023experimenting}, the paper exploits knowledge of the queueing model to construct low-variance estimators in the presence of congestion, in the context of pricing experiments that impact arrival rates.  In \cite{farias2022markovian}, the paper studies an estimator based on an (estimated) difference in {\em $Q$-values}, to reduce bias in Markov chain experiments in general, and marketplace experiments in particular.  We leverage a Cram\'{e}r-Rao lower bound from \cite{farias2022markovian} in our analysis of asymptotic statistical power in Section \ref{sec: asymptotic power}.

These papers on Markov chain models for experiments do not consider the impact of interference on the resulting decisions that are made based on the experiment.   One recent exception is \cite{boutilier2024randomized}; they study a setting with interference arising due to capacity constrained interventions, and (using a queueing-theoretic approach) show that the statistical power to detect a positive effect is not monotonic in the number of subjects recruited, as is the case in standard experiments without interference.  We also note that more recently (subsequent to our work in this paper), \cite{ni2025decision} studies optimal decision-making after switchback experiments.

\paragraph{The practice of A/B experimentation.}  Finally, our paper is connected to a wide literature on the use of A/B experiments in industry to make decisions about features, algorithms, and products.  For a broad overview of the relevant considerations in this space, we refer the reader to \cite{kohavi2020trustworthy}.  In general, this literature considers experiments where data are independent observations, and so does not consider the role of interference.

A number of different lines of work consider potential challenges in the decision-making pipeline that follows A/B experiments.  For example, a range of papers (see, for example, \cite{Johari2017,johari2022always,howard2021time}) study methods to ameliorate the inflation of false positive probability that arises when decision-makers continuously monitor experiments.  
A recent paper notes that because most experiments do not succeed in practice, false positives may be more common than practitioners realize \cite{kohavi2024false}.

Several papers highlight the fact that if companies evaluate experiments based on the returns generated by the ``winning" variations, and if there are opportunity costs to experimentation, then running many, shorter experiments is ideal to find the potentially big winners (see, e.g., \cite{feit2019test,schmit2019optimal,azevedo2020b}).  More generally, we note that if one is interested in simply picking the best alternative among many possible features, products, or algorithm designs, then the extensive literature on {\em multi-armed bandit} algorithms (see, e.g., \cite{lattimore2020bandit} for a recent textbook treatment) provides appropriate methodology.  In our paper, by contrast, the emphasis is specifically on the impact of interference for decisions made from A/B experiments that follow the paradigm of randomized controlled trials.

\section{Experiments and decision-making in inventory-constrained platforms}
\label{sec:cc}

In this section, we introduce a benchmark model for interference, experimentation, estimation, and decision-making in an inventory-constrained platform.  Using this model, we articulate the key questions studied in this paper: If a platform follows common practice in running experiments by using a hypothesis test for the null hypothesis that the treatment effect is zero, what is the probability of false positives (type I error rate) and false negatives (type II error rate, or the complement of statistical power)? 

In Section \ref{sec:cc_model}, we present the details of the stochastic model we study.  In Section \ref{sec:cc_expts}, we describe experiments and, estimation.  In Section \ref{sec:cc_clt}, we provide asymptotic descriptions of the estimators: a central limit theorem for the difference-in-means estimator, and an associated limit for the variance estimator).  In Section \ref{sec:cc_ht} we formalize the null hypothesis that the global treatment effect is zero, and describe the standard t-test statistic for this hypothesis.
Finally, in Section \ref{sec:cc_error}, we also specialize our setting to sign-consistent interventions, and define the false positive probability and statistical power.

\subsection{A model of an inventory-constrained platform}
\label{sec:cc_model}

We consider a platform where {\em customers} (the demand side) arrive over time, and can choose to book from a finite supply of {\em listings}; for example, such a model is a reasonable abstraction of a two-sided marketplace.  At a high level, we model such a platform as a Markovian birth-death queueing system.  In our model customers arrive over time, and can choose to book a listing if one is available when they arrive.  If they book, then the listing is made unavailable for a period of time, before being made available again for booking.  Similar models have previously been considered in the context of experiments (see, e.g., \cite{johari2022experimental, farias2022markovian, li2023experimenting}).

The formal details of our model are as follows.  Throughout we use \textbf{boldface} to denote vectors and matrices.

\paragraph{Time.}  The system evolves over an infinite continuous time horizon $t \in [0, \infty)$. 

\paragraph{Listings.} The system consists of $K$ homogeneous listings.

\paragraph{State description.}  At each time $t$, each listing can be either {\em available} or {\em booked} (i.e., occupied by a customer who previously booked it).  We let $X_t \in \{0, \dots, K\}$ denote the number of booked listings at time $t$. 

\paragraph{Customers.} Customers arrive sequentially to the platform and can book a listing if one is available, i.e., if $X_t < K$ for a customer who arrives at time $t$.  We give arriving customers sequential indices $1, 2, \ldots$.  Each customer $i$ has a type $\gamma_i \in \Gamma$, where $\Gamma$ is a finite nonempty set that represents customer heterogeneity. Customers of type $\gamma$ arrive according to a Poisson process with rate $\lambda_\gamma > 0$.  

We assume that when a customer of type $\gamma$ arrives to the platform with $k$ booked listings, the customer will book with probability $p_\gamma(k)$.  We let $Y_i \in \{0,1\}$ denote the booking outcome of customer $i$ (where $1$ denotes that they book a listing, and $0$ denotes that they do not make a booking). 

Throughout the paper, we make the natural assumption that the booking probabilities are strictly decreasing in the number of booked listings, as formalized below; notably, this assumption is satisfied by related models studied in prior literature, e.g., \cite{johari2022experimental,farias2022markovian,bright2022reducing,li2023experimenting}.  For completeness we also define $p_\gamma(K) = 0$ for all $\gamma$: no listings can be booked if none are available.  (We assume the inequalities are strict in the following definition largely for technical simplicity of the remainder of our presentation.) 

\begin{assumption}
\label{as:booking_prob_monotone}
For any $\gamma \in \Gamma$, there holds $p_\gamma(0) > p_\gamma(1) > \cdots > p_\gamma(K-1) > p_\gamma(K) = 0$.
\end{assumption}

In our subsequent development it will be convenient to abstract away from customer heterogeneity, by defining $\lambda = \sum_\gamma \lambda_\gamma$, and:
\[ p(k) = \frac{1}{\lambda} \sum_\gamma \lambda_\gamma p_\gamma(k). \]
Note that $p(k)$ is the probability a listing is booked conditional on %
a customer arrival, if $k$ listings are booked.  Further, note that as long as each $\v{p}_\gamma$ satisfies Assumption \ref{as:booking_prob_monotone}, then $\v{p}$ satisfies this assumption as well.  

\paragraph{Listing holding times.}  We assume that when $X_t = k$, i.e., $k$ listings are booked, the time until at least one of those listings becomes available is exponential with parameter $\tau(k)$.  

We make the following natural monotonicity assumption on the holding time parameters $\tau(k)$.  Qualitatively, this assumption guarantees that as the available inventory of listings becomes more scarce, the rate at which listings become available again can only increase.  Note that this allows for a wide range of specifications.  For example, if $\tau(k) = k \tau$ for a fixed constant $\tau$, then booked listings have independent exponential holding times with mean $1/\tau$.  
On the other hand, if $\tau(k) = \tau$ for all $k$, then the service system operates as a single server queueing system with finite buffer capacity $K$.

\begin{assumption}
\label{as:tau_monotone}
For $k = 1, \ldots, K-1$, there holds $0 < \tau(k) \leq \tau(k+1)$.
\end{assumption}

\paragraph{Steady state.} With the preceding assumptions, the state $X_t$ is a continuous time Markov chain, with generator $\v{Q}$ defined as follows:
\begin{equation}
\label{eq:generator}
\v{Q} = \begin{bmatrix}
  -\lambda p(0)   & \lambda p(0)  & 0 & \dots & 0\\
  \tau(1) & -\tau(1)- \lambda p(1)  & \lambda p(1)  & \dots & 0\\
  \vdots & \vdots & \vdots& \vdots & \vdots \\
  0 & \dots & 0 & \tau(K) & -\tau(K)
\end{bmatrix}.
\end{equation}

Under our assumptions, we note that the Markov chain above is a birth-death chain that is irreducible on a finite state space, with a unique steady state distribution $\v{\pi} = (\pi(0), \ldots, \pi(K))$ defined by $\v{\pi} \v{Q} = 0$ and $\sum_{k=1}^K \pi(k) = 1$.  Note that $\v{\pi}$ depends on the parameters of the system through the generator $\v{Q}$, in particular, the aggregate arrival rate $\lambda$, the average booking probabilities $p(k)$ for each $k$, and the holding time parameters $\tau(k)$ for each $k$.

\paragraph{Steady state average booking probability.}  A key statistic of interest to us is the {\em steady state average booking probability} of the service system, denoted $\rho$.  This is the probability an arriving customer books a listing in steady state, and because Poisson arrivals see time averages (PASTA), $\rho$ can be obtained as: 
\begin{equation}
\label{eq:ssabp}
\rho = \sum_{k=1}^K \pi(k) p(k).
\end{equation}

\subsection{Experiments and estimation}
\label{sec:cc_expts}

In this section, we model treatments that change the system dynamics by altering customer booking behavior.  We also define a canonical experiment design and difference-in-means estimator for the resulting treatment effect, as well as an associated variance estimator.  The formal details of our approach are as follows; similar models have been considered by prior work as well (see, e.g., \cite{johari2022experimental, farias2022markovian, li2023experimenting}).

\paragraph{Binary treatment.}  We consider interventions that change the booking probability of a customer of type $\gamma$ in state $k$. %
Formally, we denote treatment by $1$ and control by $0$, and consider an {\em expanded} type space for customers: for each type $\gamma \in \Gamma$, we let $(\gamma, 1)$ denote a treatment customer of type $\gamma$, and $(\gamma, 0)$ denote a control customer of type $\gamma$.  For treatment status $z \in (0,1)$, we let $p_{\gamma,z}(k)$ denote the probability a type $(\gamma, z)$ customer books in state $k$.  We again emphasize that we assume these booking probabilities satisfy Assumption \ref{as:booking_prob_monotone}.

As before, we average over types $\gamma$ and obtain:
\[ p_z(k) = \frac{1}{\lambda} \sum_\gamma \lambda_\gamma p_{\gamma,z}(k). \]
We refer to $p_1(k)$ (resp., $p_0(k)$) as the treatment (resp., control) {\em booking probability} in state $k$.  %

\paragraph{Bernoulli customer randomization.} We assume the platform runs an experiment on the first $N$ customers to arrive; these are also commonly referred to as {\em user-level} randomized experiments in industry.  We assume a parameter $a \in [0,1]$ such that each arriving customer is randomized independently to treatment with probability $a$, i.e., for each customer $i$, their treatment status $Z_i$ is an independent Bernoulli$(a)$ random variable. For notational convenience, we define $N_1 = \sum_i Z_i$, $N_0 = \sum_i (1 - Z_i)$.  

\paragraph{Experiment system dynamics.}  In an experiment, the system dynamics again evolve as a Markovian birth-death queueing system as before, but with arrival rates that are mediated by assignment to treatment or control.  Formally, define:
\begin{align}
q_a(k) &= (1-a) p_0(k) + a p_1(k); \label{eq:expt_booking}
\end{align}
this is the probability that an arriving customer books in an experiment with treatment allocation $a$, when the state is $k$.  Then the generator for the system dynamics in an experiment is again given by \eqref{eq:generator}, but with $\lambda p(k)$ replaced by $\lambda q_a(k)$.

We let $\v{\pi}_a = (\pi_a(0), \ldots, \pi_a(K))$ denote the steady state distribution of this Markov chain.  (Note, though, that the experiment will only last for a random finite time, since it involves a sample size of $N$ customers.)  

\paragraph{Estimand: The global treatment effect.}  With these definitions, the {\em global treatment} condition is the special case where $a = 1$, and the {\em global control} condition is the special case where $a = 0$.  We can then define steady state average booking probabilities: for $z = 0,1$, we have $\rho_z = \sum_k \pi_z(k) p_z(k)$. %

We define the estimand as the steady state {\em global treatment effect} (or global average treatment effect):
\begin{equation}
\label{eq:gte_cc}
 \GTE= \rho_1 - \rho_0. 
\end{equation}

\paragraph{Difference-in-means (DM) estimator.}  A common practice is to employ a {\em difference-in-means} (DM) estimator to formulate a test statistic.  We denote this estimator by $\DM_N$.  Formally:
\begin{equation}
\label{eq:DM}
\DM_N = \b{Y}(1) - \b{Y}(0), 
\end{equation}
where:
\[ \b{Y}(1) = \frac{\sum_i Z_i Y_i}{N_1}; \quad \b{Y}(0) =  \frac{\sum_i (1 - Z_i) Y_i}{N_0}. \]
(Recall $N_1$ and $N_0$ are the number of treated and control units, respectively.)  All summations are over all $N$ units.  

Note that $\DM_N$ is not well defined if either $N_1 = 0 $ or $N_0 = 0$; we will use conditioning to avoid this event in our analysis.  

\paragraph{The naive variance estimator and bias.}  Associated to the DM estimator is a ``naive'' variance estimator, defined as follows:
\begin{equation}
\label{eq:var_hat}
 \widehat{\Var}_N = \frac{1}{N_1(N_1 - 1)} \sum_i Z_i( Y_i  - \b{Y}(1))^2 + \frac{1}{N_0(N_0-1)} \sum_i (1 - Z_i) (Y_i - \b{Y}(0))^2.
\end{equation}
Note that this estimator assumes that observations are independent draws across treatment and control groups, and i.i.d.~draws within treatment and control groups; if these assumptions held, the variance estimator would be unbiased for the true variance of the DM estimator.  

Again, note that $\widehat{\Var}_N$ is not well defined if either $N_1 \leq 1$ or $N_0 \leq 1$; again, we will use conditioning to avoid this event in our analysis.  

\paragraph{Interference and bias.}  We note that the Bernoulli customer-randomized experiment together with the DM estimator and the naive variance estimator suffers from {\em interference}, i.e., violation of the {\em stable unit treatment value} assumption (SUTVA) \cite{imbens2015causal}.  This assumption requires that the outcome of one unit (in this case, the booking outcome of a customer) does not depend on the treatment assignment of other units (i.e., other customers).  However, because inventory is constrained in this service system, there is interference over time between customers: if a customer books a listing, then that listing may be unavailable for a subsequent customer.

In general, this interference effect will imply that both the estimator $\DM_N$ will be biased (as has been extensively discussed in the literature, cf.~Section \ref{sec:related}), and the variance estimator $\widehat{\Var}_N$ will be biased as well.

\subsection{A central limit theorem}
\label{sec:cc_clt}

In this section we state a central limit theorem (CLT) for $\DM_N$ and also provide an asymptotic characterization of $\widehat{\Var}_N$.  We prove this result in a more general setting, considering A/B experiments between two arbitrary Markov chains on a finite state space; see Appendix \ref{app:MC_cc} for details.  

We require the following definitions.   Fix a treatment allocation parameter $a$, and let $\v{\pi}_a$ be the steady state distribution as above.  We define the {\em average direct effect} ($\ADE_a$) as follows \cite{hu2022average}:
\begin{equation}
 \ADE_a = \sum_k \pi_a(k) p_1(k) - \sum_k \pi_a(k) p_0(k). \label{eq:cc_ade}
 \end{equation}
This is the difference in treatment and control booking probabilities, but when the distribution over states is given by the steady state distribution from the experiment.  Since, in general, the steady state distribution $\v{\pi}_a$ neither matches global treatment $\v{\pi}_1$ nor global control $\v{\pi}_0$, in general $\ADE_a \neq \GTE$.

Next, define the following quantities for $z,z' \in \{0,1\}$:
\begin{align}
V(z) &= \E_{\v{\pi}_a} \left[ \left( Y_1 - \sum_k \pi_a(k) p_z(k) \right)^2 \Bigg| Z_1 = z \right] \notag \\
&= \sum_k \pi_a(k) \left( p_z(k) \left( 1 - \sum_k \pi_a(k) p_z(k) \right)^2 + (1 - p_z(k)) \left( \sum_k \pi_a(k) p_z(k) \right)^2 \right); \notag \\
&= \Var_{\v{\pi}_a}(Y_1 | Z_1 = z); \label{eq:V_cc}\\
C_j(z,z') &= \E_{\v{\pi}_a} \left[ \left(Y_1 - \sum_k \pi_a(k) p_z(k)\right) \left(Y_j - \sum_k \pi_a(k) p_{z'}(k)\right) \Bigg| Z_1 = z, Z_j = z'\right]. \notag \\
&= \Cov_{\v{\pi}_a}(Y_1, Y_j | Z_1 = z, Z_j = z'). \label{eq:Cj_cc}
\end{align}
In these expressions, the subscript $\v{\pi}_a$ on the expectations indicates that the chain is initialized in the steady state distribution $\v{\pi}_a$ just prior to the arrival of the first customer.  These quantities are the conditional variance and covariance of rewards, respectively, given the treatment assignments.

The following central limit theorem comes from an application of Theorem \ref{thm:ab_clt} in Appendix \ref{app:MC} to this setting; see the discussion in Appendix \ref{app:MC_cc} for details.

\begin{theorem}
\label{thm:ab_clt_cc}
Suppose $0 < a < 1$.  Then regardless of the initial distribution, $\DM_N \to^p \ADE_a$ as $N \to \infty$, and $\DM_N$ obeys the following central limit theorem as $N \to \infty$:  
\begin{equation}
\label{eq:dm_clt_cc}
\sqrt{N} ( \DM_N - \ADE_a )  \Rightarrow \mc{N}\left(0, \tilde{\sigma}_a^2\right),
\end{equation}
where:
\begin{equation}
\tilde{\sigma}_a^2 = \left(\frac{1}{1-a}\right) V(0) + \left( \frac{1}{a} \right) V(1) + 2 \sum_{j > 1} C_j(0,0) + C_j(1,1) - C_j(0,1) - C_j(1,0), \label{eq:tilde_sigma_cc}
\end{equation} 
with $\tilde{\sigma}_a^2 > 0$.
\end{theorem}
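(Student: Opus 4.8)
The plan is to obtain this result as a specialization of the general A/B Markov chain central limit theorem, Theorem \ref{thm:ab_clt}, by casting the inventory-constrained experiment in the abstract framework of that theorem and then simplifying the resulting variance expression. First I would identify the relevant Markov chain: sampling the continuous-time occupancy process $X_t$ at customer arrival epochs yields a discrete-time chain on the finite state space $\{0,\dots,K\}$, and by PASTA an arriving customer in steady state sees state distribution $\v{\pi}_a$. Augmenting each arrival with its independent assignment $Z_i \sim \mathrm{Bernoulli}(a)$ and its conditionally Bernoulli booking outcome $Y_i$ (success probability $p_{Z_i}(X_i)$) places the experiment squarely within the hypotheses of Theorem \ref{thm:ab_clt}: Assumptions \ref{as:booking_prob_monotone} and \ref{as:tau_monotone} make the birth--death generator \eqref{eq:generator} irreducible on a finite state space, and $0 < a < 1$ guarantees both arms are sampled. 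Finiteness and irreducibility also yield geometric forgetting of the initial condition, which is what underlies the ``regardless of the initial distribution'' clause.

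The consistency claim $\DM_N \to^p \ADE_a$ would follow from the ergodic (law of large numbers) component of the general theorem: $N_1/N \to a$ and $N_0/N \to 1-a$, while $(1/N)\sum_i Z_i Y_i \to a\sum_k \pi_a(k) p_1(k)$ and $(1/N)\sum_i (1-Z_i)Y_i \to (1-a)\sum_k \pi_a(k)p_0(k)$, so that $\b{Y}(1) \to \sum_k \pi_a(k)p_1(k)$ and $\b{Y}(0) \to \sum_k \pi_a(k)p_0(k)$; their difference is exactly $\ADE_a$ by \eqref{eq:cc_ade}.

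For the central limit theorem itself, I would reduce the ratio estimator to a single centered sum. Writing $\mu_z = \sum_k \pi_a(k)p_z(k)$ and using $N_1/N \to a$, $N_0/N \to 1-a$ together with Slutsky's theorem, $\sqrt{N}(\DM_N - \ADE_a)$ is asymptotically equivalent to $(1/\sqrt{N})\sum_i g(W_i)$ where
\[
g(W_i) = \frac{Z_i(Y_i - \mu_1)}{a} - \frac{(1-Z_i)(Y_i-\mu_0)}{1-a}.
\]
This functional has mean zero under $\v{\pi}_a$, so Theorem \ref{thm:ab_clt} delivers asymptotic normality with variance equal to the long-run variance $\Var_{\v{\pi}_a}(g(W_1)) + 2\sum_{j>1}\Cov_{\v{\pi}_a}(g(W_1),g(W_j))$. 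The remainder is bookkeeping: because only one of $Z_i, 1-Z_i$ is nonzero at each step, $\Var_{\v{\pi}_a}(g(W_1))$ splits into $\tfrac{1}{a}V(1) + \tfrac{1}{1-a}V(0)$, and expanding $g(W_1)g(W_j)$ and taking expectations over the independent assignments $Z_1, Z_j$ collapses the four cross terms into $C_j(0,0)+C_j(1,1)-C_j(0,1)-C_j(1,0)$, matching \eqref{eq:tilde_sigma_cc} exactly.

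Two points deserve care, and I expect the positivity claim to be the main obstacle. The first is justifying the Slutsky reduction rigorously, since the random denominators $N_1, N_0$ are functionals of the same sample path as the numerators; this is routine once a joint CLT for the vector $(\tfrac{1}{N}\sum_i Z_iY_i, \tfrac{1}{N}\sum_i(1-Z_i)Y_i, N_1/N, N_0/N)$ is in hand from Theorem \ref{thm:ab_clt}, followed by the delta method applied to the smooth ratio map. The second, harder, point is $\tilde{\sigma}_a^2 > 0$: a long-run variance of an ergodic functional is automatically nonnegative but can vanish if $g$ is a coboundary. Here I would argue non-degeneracy from the fresh randomness injected at each arrival---conditional on the state $X_i$, the independent coin $Z_i$ (with $0<a<1$) and the booking draw $Y_i$ contribute a strictly positive conditional variance that cannot be written as an increment of a function of the chain---so the martingale-difference component of the decomposition underlying Theorem \ref{thm:ab_clt} is non-degenerate, forcing $\tilde{\sigma}_a^2 > 0$.
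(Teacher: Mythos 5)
Your proposal follows essentially the same route as the paper: the paper proves Theorem \ref{thm:ab_clt_cc} by embedding the experiment at customer arrival epochs into the general Markov chain framework of Appendix \ref{app:MC} (augmenting the state with the booking outcome so the reward is a deterministic function of the state, and using PASTA to identify the stationary distribution, cf.~Appendix \ref{app:MC_cc}) and then invoking Theorem \ref{thm:ab_clt}, whose own proof is precisely the centered-functional, Markov-chain-CLT, Cram\'{e}r--Wold, and Slutsky argument you outline. The one notable difference is the positivity claim $\tilde{\sigma}_a^2 > 0$: the paper asserts it without an explicit argument, so your coboundary/martingale-difference non-degeneracy sketch is, if anything, more careful on that point than the paper itself.
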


Note that since the mean of $\DM_N$ tends to $\ADE_a$ as $N \to \infty$, in general $\DM_N$ is biased as an estimator of $\GTE$.  Further, 
note that the variance expression in the limit has an intuitive explanation: the true variance of the DM estimator includes positive contributions from covariances {\em within} the same treatment groups, and negative contributions from covariances {\em across} treatment groups.  Also note that if $C_j(0,0) = C_j(1,1) = C_j(0,1) = C_j(1,0)$ for each $j$, then the covariance terms will cancel; this observation plays an important role in our analysis of false positive probability, cf.~Section \ref{sec:aa_monotone}.

It is straightforward using the ergodic theorem for Markov chains to study the variance estimator $\widehat{\Var}_N$ in the limit as $N \to \infty$.  In particular, we can establish the following result as an instantiation of Theorem \ref{thm:var_est} in Appendix \ref{app:MC}; see the discussion in Appendix \ref{app:MC_cc}.

\begin{theorem}
\label{thm:var_est_cc}
Suppose $0 < a < 1$.  Then $\widehat{\Var}_N$ satisfies
\begin{equation}
N \widehat{\Var}_N \to^p \left(\frac{1}{a}\right)V(1) + \left(\frac{1}{1-a}\right) V(0).
\end{equation}
\end{theorem}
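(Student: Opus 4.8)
The plan is to reduce $N\widehat{\Var}_N$ to a product of factors whose probability limits follow from the law of large numbers and the ergodic theorem, and then combine them by Slutsky's theorem (the continuous mapping theorem). After multiplying by $N$, I would rewrite the treatment term as
\[
N\cdot\frac{1}{N_1(N_1-1)}\sum_i Z_i\big(Y_i-\b{Y}(1)\big)^2 = \frac{N}{N_1}\cdot\frac{N_1}{N_1-1}\cdot\frac{1}{N_1}\sum_i Z_i\big(Y_i-\b{Y}(1)\big)^2,
\]
and treat the control term symmetrically. Since $0<a<1$ and the $Z_i$ are i.i.d.~Bernoulli$(a)$, the strong law gives $N_1/N\to^p a$ and $N_0/N\to^p 1-a$; hence $N/N_1\to^p 1/a$, $N/N_0\to^p 1/(1-a)$, and both $N_1/(N_1-1)\to^p 1$ and $N_0/(N_0-1)\to^p 1$. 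These limits also guarantee $N_1,N_0\to\infty$ almost surely, so the estimator is eventually well defined and the conditioning to avoid $N_1\le 1$ or $N_0\le 1$ is harmless.

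Next I would identify the limit of each centered within-group average. Writing $\frac{1}{N_1}\sum_i Z_i(Y_i-\b{Y}(1))^2 = \frac{1}{N_1}\sum_i Z_i Y_i^2 - \b{Y}(1)^2$, I would apply the ergodic theorem to the discrete-time chain $(X_{t_i},Z_i,Y_i)$ embedded at arrival epochs: under Assumptions \ref{as:booking_prob_monotone} and \ref{as:tau_monotone} the experiment CTMC is irreducible on a finite state space, by PASTA an arriving customer sees the state distributed as $\v{\pi}_a$ in steady state, $Z_i\sim$ Bernoulli$(a)$ is independent, and $Y_i\mid X_{t_i}=k,Z_i=z\sim$ Bernoulli$(p_z(k))$. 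Applying the theorem to the bounded functionals $Z_iY_i^2$ and $Z_iY_i$ gives $\frac{1}{N_1}\sum_i Z_iY_i^2\to^p \E_{\v{\pi}_a}[Y_1^2\mid Z_1=1]$ and $\b{Y}(1)\to^p \E_{\v{\pi}_a}[Y_1\mid Z_1=1]$, so that $\frac{1}{N_1}\sum_i Z_i(Y_i-\b{Y}(1))^2\to^p \Var_{\v{\pi}_a}(Y_1\mid Z_1=1)=V(1)$, and symmetrically the control sum converges to $V(0)$. Multiplying the three factors and summing the two groups then yields $N\widehat{\Var}_N\to^p \tfrac{1}{a}V(1)+\tfrac{1}{1-a}V(0)$ by Slutsky. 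Since this is an instantiation of Theorem \ref{thm:var_est}, most of the work is simply verifying that the hypotheses of that theorem (finite irreducible chain, bounded rewards, $0<a<1$) are met here.

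The step that deserves the most care is the ergodic argument, specifically the observation that the within-group sum of squares converges to the \emph{marginal} stationary variance $V(z)$ and not to a long-run variance that would include the temporal covariances $C_j(z,z)$. Time averages of functionals of an ergodic chain converge to the stationary expectation of the functional regardless of how the observations are correlated, so the covariance structure of the $Y_i$ simply does not enter the limit here. This is precisely why $\widehat{\Var}_N$ is biased for the true asymptotic variance $\tilde{\sigma}_a^2$ of Theorem \ref{thm:ab_clt_cc}, which retains those covariance terms, and it is the feature our later false-positive analysis exploits. Beyond this conceptual point no genuine obstacle arises: the functionals are bounded, the chain is finite and irreducible, and the ergodic limits hold for any initial distribution, so the finiteness of the experiment horizon plays no role as $N\to\infty$.
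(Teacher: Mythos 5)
Your proposal is correct and takes essentially the same route as the paper: the paper derives this result as an instantiation of Theorem \ref{thm:var_est} via the mapping in Appendix \ref{app:MC_cc} (embedded discrete-time chain at arrival epochs, PASTA for the stationary distribution, reward equal to the booking outcome), and the proof of Theorem \ref{thm:var_est} uses exactly your steps—the decomposition $\sum_i Z_i (Y_i - \b{Y}(1))^2 = \sum_i Z_i Y_i^2 - N_1 \b{Y}(1)^2$, the ergodic theorem for the bounded functionals, the law of large numbers for $N_1, N_0$, and Slutsky's theorem. Your closing observation, that time averages converge to stationary expectations so the temporal covariances $C_j(z,z)$ never enter the limit (in contrast to $\tilde{\sigma}_a^2$ in Theorem \ref{thm:ab_clt_cc}), matches the paper's own remark following the theorem.
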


Note that in comparison to the true variance $\tilde{\sigma}_a^2$, the estimator $\widehat{\Var}_N$ misses all the covariance terms.  Whether this is an overestimate or underestimate of the true variance depends on whether the within-group covariances are stronger are weaker than the across-group covariances, cf.~\eqref{eq:tilde_sigma_cc}.

\subsection{Hypothesis testing}
\label{sec:cc_ht}

In this section, we present the canonical approach to frequentist hypothesis testing and decision making using the estimators $\DM_N$ and $\widehat{\Var}_N$.
Of particular interest for the platform is the {\em null hypothesis} $H_0$ that the GTE is zero:
\begin{equation}
H_0 : \GTE = 0. \label{eq:null}
\end{equation}
Informally, a typical approach to making decisions involves computation of a t-test statistic using $\DM_N$ and $\widehat{\Var}_N$, and rejecting the null hypothesis if this statistic exceeds a threshold.  In this section we formalize this process.

\paragraph{Test statistic.}  Given the DM estimator and associated variance estimator, the platform forms the following test statistic:
\begin{equation}
\label{eq:tstat}
 \hat{T}_N = \frac{\DM_N}{\sqrt{\widehat{\Var}_N}}. 
\end{equation}
If observations were normally distributed and independent, and identically distributed within groups, then this would be the standard $t$-test statistic for testing the null hypothesis of zero treatment effect.  Of course these assumptions do not hold in our setting in general.

\paragraph{Decision rule.}  Nevertheless, common practice involves comparing the test statistic $\hat{T}_N$ to Student's $t$ distribution to determine whether sufficient evidence exists to reject the null hypothesis $H_0$.  As we will be primarily interested in large sample behavior as $N$ grows, we compare the test statistic to a reference standard normal distribution to determine whether to reject $H_0$ (i.e., we consider an asymptotic $z$-test).  Formally, given a desired false positive probability $\alpha$, we assume that the platform uses the following decision rule:
\begin{equation}
\label{eq:decision}
\text{Reject } H_0 \quad \Leftrightarrow \quad |\hat{T}_N| > \Phi_{\alpha/2},
\end{equation}
where $\Phi_q$ is the upper quantile of a standard normal distribution, i.e., the unique value such that $P(W \geq \Phi_q) = q$ for a standard normal random variable $W$.

The decision rule we have described is a {\em two-sided} hypothesis testing procedure: the null hypothesis is $H_0 = 0$, and thus the test involves comparing only the {\em magnitude} of $\hat{T}_N$ to a cutoff.  In practice, it is often the case that decision-makers are interested in testing not only the magnitude of $\hat{T}_N$, but also the sign; in particular, a common policy when deciding whether to launch a feature is to {\em first} check whether $|\hat{T}_N|$ is larger than a cutoff (e.g., $1.96$ for a 95\% statistical significance level), and {\em second}, only launch if in addition the sign of $\hat{T}_N$ (or equivalently, $\widehat{\GTE}_N$) is in a beneficial direction for the platform (e.g., positive for metrics like bookings).  Such a decision rule is equivalent to a {\em one-sided} hypothesis test, of the composite null $\tilde{H}_0: \GTE \leq 0$.  Analysis of this procedure adds technical complexity, since in principle the distribution of $\hat{T}_N$ can be asymmetric.  Nevertheless, we expect that all our results extend to this setting; however, for simplicity of technical presentation, we adopt the two-sided hypothesis test decision rule described above.

\subsection{Sign-consistent interventions and error rates}
\label{sec:cc_error}

In this section, we formalize the false positive probability and statistical power of the decision rule in \eqref{eq:decision}.  Again, if observations were normally distributed and independent, and identically distributed within groups, then the decision rule \eqref{eq:decision} has the property that (asymptotically, as the number of observations $N \to \infty$ the false positive probability converges to $\alpha$.  In other words, the probability that the null hypothesis is rejected when $H_0$ is true approaches $\alpha$.  We have no such guarantee {\em a priori} in our setting due to the interdependence of observations within and across groups. 

Before proceeding with a formal specification, a challenge arises in the inventory-constrained platform model, because the null hypothesis $H_0$ is {\em underspecified}: there are {\em many} specifications of treatment and control that lead to $\GTE =0$.  (In statistical terms, $H_0: \GTE = 0$ is a {\em composite} null hypothesis, rather than a {\em simple} null hypothesis.)  Formally, this is simply because $\GTE = 0$ is only a one-dimensional constraint, while the number of parameters specifying the treatment and control systems is much higher dimensional.  Without further assumptions, only assuming $H_0$ holds is insufficient to specify the data distribution, and thus prevents us from specifying the false positive probability.  

To make progress, we restrict attention to {\em sign-consistent interventions}; informally, these are interventions where the booking probability either rises in every state, or falls in every state.  We have the following definition.\footnote{We note that the definition here is a type of {\em convex cone} constraint on the booking probabilities; see \cite{wei2019geometry} for more on the structure of hypothesis testing under such constraints.}
\begin{definition}[Sign-consistent interventions]
\label{def:monotone}
Given treatment booking probabilities $\v{p}_1$ and control booking probabilities $\v{p}_0$, we say the treatment is a {\em positive} (resp., {\em negative}) intervention if $p_1(k) \geq p_0(k)$ (resp., $p_1(k) \leq p_0(k)$) for all states $k$.  A \em{sign-consistent treatment} is one that is either negative or positive (or both, in case treatment is identical to control).

We say the treatment is {\em strictly positive} (resp., {\em strictly negative}) if the corresponding inequality in the definition is strict for at least one state $k$.  A \em{strictly sign-consistent} treatment is one that is either strictly positive or strictly negative.
\end{definition}

Informally, sign-consistency is a reasonable assumption for those treatments that are not state-dependent.  Inventory-constrained platforms test many features that fall in this category.  To take just a few representative examples, a platform may choose to provide more information about some aspect of their service (e.g., cancellation policies, insurance, payment procedures, etc.); it might enable a new checkout flow (e.g., adding a mobile payment service as an option); or it might manipulate the user interface to add additional calls-to-action (e.g., buttons to sign up for notifications).  In all of these cases, the platform may be uncertain whether users find the additional friction beneficial or detrimental; but because the intervention is not state-dependent, it may be plausible to assume in advance that the sign of $p_1(k) - p_0(k)$ is not state-dependent either.  Monotonicity can also be a reasonable assumption for some state-dependent settings, where one can reasonably expect that the direction of the change in booking probability will be the same in all states.  For example, if a platform raises fees, then it is reasonable to assume that regardless of the state, this should lower booking probabilities in each state.  (Of course, in the latter case, the platform would expect the treatment can only yield $\GTE \leq 0$; and so an experiment would primarily be relevant to judge whether the estimated effect is statistically significant.)  We also note that monotonicity is an assumption on the {\em average} booking probabilities $\v{p}_0, \v{p}_1$ across customer types $\gamma$; in particular, there may be customer types whose sign does not match the sign of $p_1(k) - p_0(k)$ in all states $k$, and yet the overall intervention can still be sign-consistent, cf.~Definition \ref{def:monotone}.  A number of previous papers have considered and motivated sign-consistent interventions in platform experiments as well; see, e.g., \cite{holtz2020reducing, johari2022experimental, li2022interference, bright2022reducing, dhaouadi2023price}.

The following proposition shows that if the treatment is sign-consistent, and $H_0$ is true, then treatment must be identical to control.  See Appendix \ref{app:proofs} for the proof.  The proof uses an application of stochastic comparison of the equilibrium distributions of the treatment and control chains; we develop the necessary results for this approach later in our analysis of the finite sample bias of $\DM$, in Section \ref{sec: finite sample power}.  Informally, we show that if the treatment is strictly positive (resp., strictly negative), then $\v{\pi}_1$ (resp., $\v{\pi}_0$) strictly first order stochastically dominates $\v{\pi}_0$ (resp., $\v{\pi}_1$), so $H_0$ cannot hold.  (The proof of this result leverages the fact that our Markov chain model is stochastically monotone, cf.~\cite{keilson1977monotone}.)

\begin{proposition}
\label{pr:null_monotone}
Suppose that the treatment is sign-consistent (cf.~Definition \ref{def:monotone}) and that $H_0$ is true, i.e., $\GTE = 0$.  Then $p_1(k) = p_0(k)$ for all $k$.
\end{proposition}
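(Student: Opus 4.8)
The plan is to prove the statement in contrapositive form: assuming the treatment is sign-consistent but \emph{not} identically equal to control, I would derive $\GTE \neq 0$. By the symmetry in Definition~\ref{def:monotone} (swap the roles of treatment and control), it suffices to handle the \emph{positive} case, so suppose $p_1(k) \geq p_0(k)$ for all $k$ with strict inequality at some state, and aim to show $\GTE = \rho_1 - \rho_0 > 0$. The first and most important step is to replace the direct expression $\rho_z = \sum_k \pi_z(k) p_z(k)$, in which a higher booking probability and the resulting increased congestion pull in opposite directions, by a throughput identity involving only the (common) departure rates. Summing the birth--death balance equations $\pi_z(k)\lambda p_z(k) = \pi_z(k+1)\tau(k+1)$ over $k = 0, \dots, K-1$ and using $p_z(K) = 0$ gives
\[
\lambda \rho_z \;=\; \lambda \sum_{k=0}^{K} \pi_z(k) p_z(k) \;=\; \sum_{k=1}^{K} \pi_z(k)\tau(k) \;=\; \E_{\pi_z}[\tau(X)],
\]
where we set $\tau(0) := 0$. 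Since $\lambda > 0$, the sign of $\GTE$ is the sign of $\E_{\pi_1}[\tau(X)] - \E_{\pi_0}[\tau(X)]$, a comparison of the \emph{same} nondecreasing function $\tau$ under two stationary laws.

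Next I would compare the two stationary laws via stochastic monotonicity. Because the death rates $\tau$ are common to both chains while the treatment birth rates $\lambda p_1(k)$ dominate the control birth rates $\lambda p_0(k)$ pointwise, the treatment chain first-order stochastically dominates the control chain, $\pi_1 \succeq_{st} \pi_0$; this is the stochastic-comparison fact developed in Section~\ref{sec: finite sample power} (cf.~\cite{keilson1977monotone}), and it also follows directly from the product-form stationary distribution $\pi_z(k) \propto \prod_{j=0}^{k-1} \lambda p_z(j)/\tau(j+1)$. The product form also supplies the strictness I need at the boundary state: writing $\pi_z(0) = \big(1 + \sum_{k=1}^K \prod_{j=0}^{k-1} \lambda p_z(j)/\tau(j+1)\big)^{-1}$, the strict inequality $p_1(j^*) > p_0(j^*)$ (which must occur at some $j^* \leq K-1$, since $p_z(K) = 0$) strictly inflates every summand with $k > j^*$, while Assumption~\ref{as:booking_prob_monotone} keeps all the factors positive. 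Hence the normalizing sum is strictly larger under treatment, giving $\pi_1(0) < \pi_0(0)$.

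I would then combine these facts through an Abel-summation (layer-cake) expansion. Writing $\tau(k) = \sum_{j=1}^k \big(\tau(j) - \tau(j-1)\big)$ and exchanging the order of summation,
\[
\E_{\pi_1}[\tau(X)] - \E_{\pi_0}[\tau(X)] \;=\; \sum_{j=1}^{K} \big(\tau(j) - \tau(j-1)\big)\Big( P_{\pi_1}(X \geq j) - P_{\pi_0}(X \geq j)\Big).
\]
Each increment $\tau(j) - \tau(j-1)$ is nonnegative by Assumption~\ref{as:tau_monotone}, and each tail difference is nonnegative by $\pi_1 \succeq_{st} \pi_0$, so every term is $\geq 0$; moreover the $j=1$ term equals $\tau(1)\big(\pi_0(0) - \pi_1(0)\big)$, which is \emph{strictly} positive because $\tau(1) > 0$ and $\pi_1(0) < \pi_0(0)$. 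Therefore $\E_{\pi_1}[\tau(X)] > \E_{\pi_0}[\tau(X)]$, so $\GTE > 0$, contradicting $H_0$. Consequently no state can carry a strict inequality, i.e.\ $p_1(k) = p_0(k)$ for all $k$.

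The hard part is the final strictness upgrade. Assumption~\ref{as:tau_monotone} only gives a \emph{weakly} increasing $\tau$ (for instance, constant above state $1$ in a single-server queue), so weak stochastic dominance by itself does not force a strict gap in throughput, and one could a priori worry that $\GTE$ vanishes despite a genuine intervention. The resolution, and the crux of the argument, is that strict positivity of the intervention always propagates through the product form to a strict gap at exactly the boundary state, $\pi_1(0) < \pi_0(0)$ --- which is precisely where the one increment guaranteed to be strictly positive, $\tau(1) - \tau(0) = \tau(1) > 0$, resides. Pinning the strictness to this single matched coincidence is what makes the conclusion robust across all admissible $\tau$.
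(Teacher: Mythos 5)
Your proof is correct, and its skeleton matches the paper's own argument: both proceed by contraposition, reduce to the positive case by symmetry, convert $\lambda \rho_z$ into the expected departure rate $\sum_{k \geq 1} \tau(k)\pi_z(k)$ via detailed balance, and conclude by comparing this expectation of the nondecreasing function $\tau$ under the two stationary laws. The differences lie in how the comparison is justified. The paper obtains the strict dominance $\v{\pi}_1 \succ^d \v{\pi}_0$ by citing Corollary \ref{cor:invariant_start_dist}, which rests on the stochastic-monotonicity and coupling machinery of Proposition \ref{prop:distributiondom} (an induction over customer arrivals plus PASTA); you instead read the dominance, and crucially the strict gap $\pi_1(0) < \pi_0(0)$, directly off the product-form stationary distribution of the birth--death chain, which makes your argument self-contained and elementary. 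Your Abel-summation step also makes explicit a point the paper's proof leaves implicit: since Definition \ref{def:stochdom} only guarantees a strict tail inequality at \emph{some} state, while Assumption \ref{as:tau_monotone} permits $\tau$ to be constant above state $1$ (e.g., the single-server case $\tau(k) \equiv \tau$), strict dominance by itself does not force a strict gap in $\sum_k \tau(k)\pi_z(k)$; one needs the strictness located at a state where $\tau$ has a strictly positive increment, and the only such guaranteed increment is $\tau(1) - \tau(0) = \tau(1) > 0$. Your identification of $\pi_1(0) < \pi_0(0)$ supplies exactly this, so your write-up is, if anything, tighter on the strictness than the paper's one-line justification. The trade-off is that your route leans on the explicit product form, available here because the chain is birth--death, whereas the paper's coupling-based dominance argument would transfer to models where no closed-form stationary distribution exists.
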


Thus if interventions are sign-consistent, then the {\em false positive probability} ($\FPP$) of the decision rule can be defined as follows:
\begin{equation}
\label{eq:FPP} 
\FPP_N = \P(\hat{T}_N > \Phi_{\alpha/2} | H_0) = \P(|\hat{T}_N| > \Phi_{\alpha/2} | \v{p}_1 = \v{p}_0).
\end{equation}

For a specific alternative with $\v{p}_0 \neq \v{p}_1$, we analogously define the {\em false negative probability} ($\FNP$):
\begin{equation}
\label{eq:FNP} 
\FNP_N(\v{p}_0, \v{p}_1) = 1 - \P(|\hat{T}_N| > \Phi_{\alpha/2} | \v{p}_0, \v{p}_1).
\end{equation}
The {\em statistical power} is then $1 - \FNP_N(\v{p}_0, \v{p}_1) = \P(|\hat{T}_N| > \Phi_{\alpha/2} | \v{p}_0, \v{p}_1)$.

In the remainder of our paper, we study the false positive probability and statistical power of the decision rule \eqref{eq:decision}.  In Section \ref{sec:fpp}, we study the false positive probability, and in Section \ref{sec:power}, we study the statistical power.

\section{False positive probability under sign-consistent interventions}
\label{sec:fpp}

In this section we study the false positive probability of the decision rule \eqref{eq:decision}.  From Proposition \ref{pr:null_monotone}, recall that if the treatment is sign-consistent and $H_0$ is true, i.e., $\GTE =0$, then all booking probabilities in treatment and control are {\em identical}.  In other words, in this case the experimental observations correspond to an {\em A/A experiment}, i.e., tests where the same version of a product or feature is tested against itself (so that the joint distribution of observations in both the global treatment and global control conditions is the same).

Companies routinely use A/A experiments to validate the experimental setup, and ensure the accuracy of their experimentation infrastructure (see, e.g., \cite{geteppoWhatTest,optimizelyTestWhat,statsigConductTest}).  For example, an e-commerce platform might run an A/A test to ensure that user traffic is being evenly split across different servers, or a streaming service could use an A/A test to confirm that their recommendation algorithm is not unintentionally favoring one group of users over another. Such experiments are a core aspect of the validation of experimentation within any organization that adopts A/B testing.  Given Proposition \ref{pr:null_monotone}, our approach in this section is to study the false positive probability of A/A experiments.

Our first key finding is that for an A/A experiment, not only is the estimator $\DM_N$ unbiased for $\GTE = 0$, but in addition, notably, {\em the na\"{i}ve variance estimator $\widehat{\Var}_N$ is unbiased for the true variance of $\DM_N$}, even though it ignores correlations. We prove this result in a far more general setting of A/A experiments with arbitrary dependence structure between observations, which may be of independent interest to practitioners. Our inventory-constrained platform model is a specific instance of this general setting. Our approach involves combining ideas closely related to classical results in design-based inference by both Neyman \cite{neyman1923,ding2024first} and Fisher \cite{fisher1971design, wager2024causal}.

Corresponding to this result, we show that for the model described in Section \ref{sec:cc} asymptotically the test statistic $\hat{T}_N$ has a standard normal distribution, and thus in particular the false positive probability converges to $\alpha$.  In other words, despite the presence of interference, as long as the treatment is sign-consistent, the decision rule \eqref{eq:decision} has a false positive probability that is (asymptotically) controlled at level $\alpha$, as it is in a setting without interference.

\subsection{A detour: Estimation for general A/A experiments}
\label{subsec:aa_general}

To establish unbiasedness of $\DM_N$ and $\widehat{\Var}_N$ for A/A experiments, we first introduce a completely general model of A/A experiments.  In our more general model in this section, we abstract away from the system dynamics described in Section \ref{sec:cc}, and instead assume nothing more than a joint distribution over the $N$ observations $Y_1, \ldots, Y_N$ that corresponds to the baseline system.  Our formal specification is as follows.

\paragraph{Sample and joint distribution.}  As before, $Y_1, \ldots, Y_N$ denote $N$ observations, constituting the sample in the experiment; we refer to each index $i$ as a {\em unit}.  We let $\Q$ denote the joint probability distribution of $Y_1, \ldots, Y_N$.  We interpret $\Q$ as the distribution of the observations under an A/A test.

A key feature of our definition is that we make {\em no assumptions} about the structure of $\Q$, allowing for general dependence structure between observations.  This is a novel aspect of our analysis, as it informs the analysis of A/A experiments even in settings with interference. Note that in the model of Section \ref{sec:cc}, the distribution $\Q$ corresponds to the joint distribution of booking outcomes induced by the Markov chain specification for a given global treatment condition.

For simplicity we assume observations are bounded, i.e., there exists $M$ such that $\sup_i |Y_i| \leq M$.  (In Section \ref{sec:cc}, $M = 1$.)

\paragraph{Bernoulli treatment assignment.}  As before we let $Z_1, \ldots, Z_N$ represent i.i.d.~Bernoulli($a$) random variables corresponding to treatment assignment; in an A/A experiment, these are {\em independent} of $Y_1, \ldots, Y_N$.  We assume that $0 < a < 1$, and as before define $N_1 = \sum_i Z_i$, $N_0 = \sum_i (1 - Z_i)$.

\paragraph{Null hypothesis.}  Because we focus on false positive probabilities of A/A experiments, informally our null hypothesis is that the treatment and control data generating processes are identical.  We formalize this as the following null hypothesis:
\begin{equation}
\label{eq:aa_H0}
 \tilde{H}_0 : (Y_1, \ldots, Y_N)\text{ is independent of } (Z_1, \ldots, Z_N).    
\end{equation} 
We refer to any experiment where $\tilde{H}_0$ holds as an {\em A/A experiment}.  For the duration of this subsection, we assume that $\tilde{H}_0$ holds.

$\tilde{H}_0$ is a strong null hypothesis, as it asserts that regardless of the treatment assignment, the distribution of the observations is {\em unchanged}.  It is analogous to Fisher's sharp null hypothesis in randomization or design-based inference; see, e.g., \cite{fisher1971design,imbens2015causal,wager2024causal}.  In particular, suppose we adopt the potential outcomes framework, and view each unit as having two (random) potential outcomes $Y_i(0), Y_i(1)$, with $Y_i = Y_i(Z_i)$ being the observed outcome.  Then \eqref{eq:aa_H0} can be alternatively defined as the hypothesis that $Y_i(0) = Y_i(1)$ for all $i$, which is exactly Fisher's sharp null hypothesis in randomization inference in the absence of interference.

Note under $\tilde{H}_0$ in the inventory-constrained platform model of Section \ref{sec:cc}, we trivially have $\GTE = 0$.  Indeed, we expect that any reasonable estimand of interest should be zero in an A/A experiment.

\paragraph{Estimation.} The difference-in-means estimator $\DM_N$ and the naive variance estimator $\widehat{\Var}_N$
are the same as \eqref{eq:DM} and \eqref{eq:var_hat}.

\paragraph{Main result: Finite sample expectations of $\DM_N$ and $\widehat{\Var}_N$.}  We have the following theorem.

\begin{theorem}
\label{thm:aa_expectations}
In an A/A experiment, there holds:
\begin{equation}
\E[\DM_N | N_1 > 0, N_0 > 0] = 0; \quad \text{and} \quad \E[\widehat{\Var}_N | N_1 > 1, N_0 > 1] = \Var(\DM_N | N_1 > 1, N_0 > 1).
\end{equation}
\end{theorem}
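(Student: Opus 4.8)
The plan is to condition on the full outcome vector $(Y_1,\dots,Y_N)$ together with the treatment-group size $N_1$, which reduces both claims to classical finite-population (design-based) identities, and then to integrate back out. The starting observation is that under $\tilde{H}_0$ the assignments $Z_1,\dots,Z_N$ are i.i.d.\ Bernoulli$(a)$ and independent of the outcomes, so conditional on $\{N_1 = n_1\}$ with $0 < n_1 < N$ the treated set is uniformly distributed over all $\binom{N}{n_1}$ subsets of size $n_1$ --- exactly the design of a completely randomized experiment on the fixed ``population'' $\{Y_1,\dots,Y_N\}$, with each unit included in treatment with probability $n_1/N$ by symmetry. For the first claim I would then use these inclusion probabilities to get $\E[\b{Y}(1)\mid Y, N_1 = n_1] = \b{Y}$ and $\E[\b{Y}(0)\mid Y, N_1 = n_1] = \b{Y}$, where $\b{Y}$ is the overall sample mean; subtracting gives $\E[\DM_N \mid Y, N_1 = n_1] = 0$ for every valid $n_1$, and averaging over the outcome distribution and over $n_1$ yields $\E[\DM_N \mid N_1 > 0, N_0 > 0] = 0$.

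The second claim is the substantive part, and I would establish it as an \emph{exact} conditional identity given $(Y, N_1 = n_1)$. Since $\E[\DM_N \mid Y, N_1 = n_1] = 0$, the conditional variance equals the conditional second moment, and the Neyman variance formula for the difference in means under the sharp null gives $\Var(\DM_N \mid Y, N_1 = n_1) = S^2\big(1/n_1 + 1/n_0\big)$, where $n_0 = N - n_1$ and $S^2 = \tfrac{1}{N-1}\sum_i (Y_i - \b{Y})^2$ is the population variance. On the other side, writing $\widehat{\Var}_N = s_1^2/N_1 + s_0^2/N_0$ with $s_z^2$ the within-arm sample variance, I would invoke the standard sampling-theory fact that the sample variance of a simple random sample without replacement is unbiased for $S^2$, so $\E[s_1^2 \mid Y, N_1 = n_1] = \E[s_0^2 \mid Y, N_1 = n_1] = S^2$. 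Combining, $\E[\widehat{\Var}_N \mid Y, N_1 = n_1] = S^2(1/n_1 + 1/n_0) = \Var(\DM_N \mid Y, N_1 = n_1)$, the two sides agreeing term by term.

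It remains to integrate out. The event $\{N_1 > 1, N_0 > 1\}$ depends only on $N_1$, so conditioning on it and applying the law of total variance with respect to $(Y, N_1)$ gives
\[
\Var(\DM_N \mid N_1 > 1, N_0 > 1) = \E\big[\Var(\DM_N \mid Y, N_1)\,\big|\,N_1 > 1, N_0 > 1\big] + \Var\big(\E[\DM_N \mid Y, N_1]\,\big|\,N_1 > 1, N_0 > 1\big).
\]
The ``between'' term vanishes because $\E[\DM_N \mid Y, N_1] \equiv 0$ by the first claim, so the conditional variance of $\DM_N$ equals the expected within-cell variance. By the exact conditional identity above, the latter is exactly $\E[\widehat{\Var}_N \mid N_1 > 1, N_0 > 1]$, which is the desired equality.

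I expect the main obstacle to be the careful bookkeeping across the two conditioning steps: verifying that fixing $(Y, N_1)$ genuinely reduces the setup to a completely randomized experiment on a fixed population (so that the Neyman variance identity and the unbiasedness of the within-arm sample variance apply verbatim), and confirming that the between-cell term in the law of total variance is identically zero under the conditioning event. The conceptual payoff, and the reason the naive estimator succeeds despite ignoring all cross-unit dependence in $\Q$, is that once $(Y, N_1)$ is held fixed the only surviving randomness in $\DM_N$ is the design randomness, which coincides exactly with that of an ordinary completely randomized experiment.
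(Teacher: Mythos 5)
Your proof is correct, but it takes a genuinely different route from the paper's. The paper conditions in the opposite direction: it first shows (Proposition \ref{pr:reduction_to_exchangeable}) that applying an independent uniform random permutation to the outcomes leaves the joint law of $(\DM_N,\widehat{\Var}_N)$ unchanged, so one may assume without loss of generality that the $Y_i$ are exchangeable under $\Q$; it then computes moments conditional on the assignment vector $\v{Z}$ in this exchangeable super-population model (Proposition \ref{pr:gte_and_var_aa_exchangeable}), obtaining $\E[\widehat{\Var}_N\mid\v{Z}]=\Var(\DM_N\mid\v{Z})=\left(\frac{1}{N_1}+\frac{1}{N_0}\right)(V-C)$ with $V=\Var(Y_i)$ and $C=\Cov(Y_i,Y_j)$. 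You instead condition on $(Y_1,\ldots,Y_N)$ and $N_1$, which freezes the population and reduces the problem verbatim to a completely randomized design under Fisher's sharp null, so that Neyman's variance formula and the unbiasedness of the within-arm sample variance under sampling without replacement apply off the shelf; the law of total variance, whose between term vanishes because the conditional mean is identically zero, then does the integration (and since the event $\{N_1>1,N_0>1\}$ is $\sigma(N_1)$-measurable, the conditioning bookkeeping you flag as the main obstacle does go through). It is worth noting that the paper's appendix motivates its permutation device by observing that Neyman's design-based argument ``does not apply directly'' in this setting, since outcomes remain random and dependent given the assignment; your argument shows that it does apply once one conditions on the outcomes rather than the assignment, precisely because the A/A hypothesis (independence of $Y$ and $Z$) leaves the assignment mechanism intact after this conditioning. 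The trade-off: the paper's route produces the super-population variance expression involving $V-C$, which it reuses to interpret the covariance cancellation and to connect with the asymptotics of Theorem \ref{thm:aa_clt_cc}, and its exchangeability reduction is advertised as of independent interest; your route is shorter and self-contained modulo classical finite-population facts, expressing the variance through the realized $S^2$, which is all the equality in the theorem requires.
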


The first part of the theorem shows that $\DM_N$ has zero expectation, which is unsurprising.  The second part of the theorem is more surprising, and shows that despite the generality of the model, the naive variance estimator $\widehat{\Var}_N$ is {\em unbiased} for the true variance of $\DM_N$.  

We next establish this result below in two steps.  First, we show in Proposition \ref{pr:reduction_to_exchangeable} that we can assume without loss of generality that the observations $Y_1, \ldots, Y_N$ are {\em stochastically exchangeable} under $\Q$.  Next, we show in Proposition \ref{pr:gte_and_var_aa_exchangeable} that the result of Theorem \ref{thm:aa_expectations} holds when observations are exchangeable. Combining these propositions yields the theorem.

\paragraph{Step 1: Reduction to stochastically exchangeable observations.} We start by establishing  that we can analyze general A/A experiments by assuming that the distribution of observations is {\em exchangeable} (in the probabilistic sense).  We recall that a vector $\v{X} = (X_1, \ldots, X_N)$ of random variables is exchangeable if $\v{X}$ has the same joint distribution as any permutation of the elements of $\v{X}$.  We emphasize that in our model, observations are generally not exchangeable due to capacity constraints, but our reduction establishes that, without loss of generality, it suffices to consider a system where observations are exchangeable.

To prove the reduction, let $\sigma$ be an independent permutation of $\{1,\ldots,N\}$ chosen uniformly at random.  Define $\tilde{Y}_i = Y_{\sigma(i)}$.   Observe that the vector $(\tilde{Y}_1, \ldots, \tilde{Y}_N)$ is trivially exchangeable, because $\sigma$ was uniformly chosen.

We let $\tilde{\Q}$ denote the joint distribution of $(\tilde{Y}_1, \ldots, \tilde{Y}_N)$. Define:
\[ \widetilde{\GTE}_N = \tilde{\b{Y}}(1) - \tilde{\b{Y}}(0), \]
where:
\[ \tilde{\b{Y}}(1) = \frac{\sum_i Z_i \tilde{Y}_i}{N_1}; \quad \tilde{\b{Y}}(0) =  \frac{\sum_i (1 - Z_i) \tilde{Y}_i}{N_0}. \]
(Recall that $N_1 = \sum_i Z_i$ and $N_0 = \sum_i (1 - Z_i)$.)  Analogously, define the variance estimator:
\[ \widetilde{\Var}_N = \frac{1}{N_1(N_1 - 1)} \sum_i Z_i( \tilde{Y}_i  - \tilde{\b{Y}}(1))^2 + 
        \frac{1}{N_0(N_0-1)} \sum_i (1 - Z_i) (\tilde{Y}_i - \tilde{\b{Y}}(0))^2.\]
(As before, $\DM_N$ and $\widetilde{\GTE}_N$ are only well defined if $N_1, N_0 > 0$; and $\widehat{\Var}_N$ and $\widetilde{\Var}_N$ are only well defined if $N_1, N_0 > 1$.)

We have the following result; the proof is in Appendix \ref{app:proofs}.

\begin{proposition}
\label{pr:reduction_to_exchangeable}
Conditional on $N_1 > 1$ and $N_0 > 1$, the pair $(\DM_N, \widehat{\Var}_N)$ and the pair $(\widetilde{\GTE}_N, \widetilde{\Var}_N)$ have the same joint distribution.
\end{proposition}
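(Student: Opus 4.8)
The plan is to show that both $(\DM_N, \widehat{\Var}_N)$ and $(\widetilde{\GTE}_N, \widetilde{\Var}_N)$ arise by applying the \emph{same} functional to two outcome/label data sets that share a common joint law. The first step is to observe that, on the event $\{N_1 > 1, N_0 > 1\}$, the pair $(\DM_N, \widehat{\Var}_N)$ depends on the data only through the unordered collection of labeled pairs $\{(Z_i, Y_i)\}_{i=1}^N$: each of $\sum_i Z_i Y_i$, $\sum_i (1-Z_i) Y_i$, $N_1$, $N_0$, and the two sums of squared deviations is invariant under a simultaneous permutation of the indices of $(Z_1, Y_1), \ldots, (Z_N, Y_N)$. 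Writing $g$ for this symmetric functional, we have $(\DM_N, \widehat{\Var}_N) = g(\{(Z_i, Y_i)\}_i)$, and since $\widetilde{\GTE}_N, \widetilde{\Var}_N$ are built by the identical recipe with $\tilde{Y}_i = Y_{\sigma(i)}$ in place of $Y_i$, also $(\widetilde{\GTE}_N, \widetilde{\Var}_N) = g(\{(Z_i, \tilde{Y}_i)\}_i)$.

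The heart of the argument is a reindexing step. Substituting $j = \sigma(i)$ in the tilde collection gives $\{(Z_i, Y_{\sigma(i)})\}_i = \{(Z_{\sigma^{-1}(j)}, Y_j)\}_j$, so that $(\widetilde{\GTE}_N, \widetilde{\Var}_N) = g(\{(Z'_j, Y_j)\}_j)$ where $Z'_j := Z_{\sigma^{-1}(j)}$. In other words, relabeling the \emph{outcomes} by $\sigma$ is equivalent to applying $g$ to the original outcomes $Y_1, \ldots, Y_N$ with the \emph{treatment labels} permuted by $\sigma^{-1}$, i.e.\ to $Z' = Z \circ \sigma^{-1}$. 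It therefore suffices to prove the distributional identity $(Z', Y) \stackrel{d}{=} (Z, Y)$ for $Y = (Y_1, \ldots, Y_N)$, since applying $g$ to both sides immediately yields equality of the two estimator pairs in joint distribution.

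To establish $(Z', Y) \stackrel{d}{=} (Z, Y)$ I would condition on $(Y, \sigma)$. Under the A/A null $\tilde{H}_0$ the vector $Z$ is i.i.d.\ $\mathrm{Bernoulli}(a)$ and independent of $(Y, \sigma)$; hence, given $\sigma = s$ and $Y$, the vector $Z' = Z \circ s^{-1}$ is a fixed permutation of an i.i.d.\ (and thus exchangeable) Bernoulli vector, so it is conditionally i.i.d.\ $\mathrm{Bernoulli}(a)$. Because this conditional law depends on neither $s$ nor $Y$, we conclude that $Z' \perp Y$ with $Z' \sim \mathrm{Bernoulli}(a)^{\otimes N}$, which is exactly the (unconditional) law of $(Z, Y)$. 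This is the one place where both hypotheses are genuinely used: the A/A independence $Z \perp Y$ is what keeps $Z'$ independent of the outcomes, and the exchangeability of the Bernoulli labels is what keeps the permuted labels equal in law.

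Finally I would reconcile the conditioning. The functional $g$ is well defined only on $\{N_1 > 1, N_0 > 1\}$, but permutations preserve sums, so $\sum_j Z'_j = \sum_i Z_i = N_1$; the treated count, and hence the event $\{N_1 > 1, N_0 > 1\}$, takes identical values under $Z$ and $Z'$. Consequently the pair (estimators, indicator of the conditioning event) has the same law in the two constructions, and conditioning on $N_1 > 1, N_0 > 1$ preserves the distributional identity, giving the proposition. I expect the main obstacle to be conceptual rather than computational: the temptation is to compare the two data sets observation-by-observation, whereas the clean route is to recognize that the estimators see the data only through a within-group symmetric functional, so that permuting outcomes is interchangeable with permuting assignments — a move that leaves the joint law invariant precisely because the assignment vector is exchangeable and, under A/A, independent of the outcomes. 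One must also take care that the relabeling does not alter the group sizes, which is what makes the conditioning step legitimate.
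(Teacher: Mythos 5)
Your proposal is correct and follows essentially the same route as the paper's proof: the paper likewise defines the permuted assignment vector $W_i = Z_{\sigma^{-1}(i)}$ (your $Z'$), notes it is i.i.d.~Bernoulli($a$) and independent of $Y_1, \ldots, Y_N$ with $\sum_i W_i = N_1$ preserved, and rewrites $(\widetilde{\GTE}_N, \widetilde{\Var}_N)$ as the same functional of $(W, Y)$ that $(\DM_N, \widehat{\Var}_N)$ is of $(Z, Y)$. Your write-up merely makes the reindexing step and the conditioning-on-$(Y,\sigma)$ argument more explicit than the paper does.
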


Proposition \ref{pr:reduction_to_exchangeable} shows that we can assume (probabilistic) exchangeability of the observations without loss of generality when studying the bias of $\DM_N$ and $\widehat{\Var}_N$.  We note that the concept of probabilistic exchangeability is also leveraged in Bayesian inference \cite{bernardo2009bayesian}. A related concept plays a role in causal inference \cite{robins1992identifiability,hernan2006estimating}, where (for data from experiments), ``exchangeability'' is a definition that guarantees that potential outcomes are independent of treatment assignment; note that even the original outcomes $Y_1, \ldots, Y_N$ in our experiment satisfy this criterion.  (There have been works \cite{lindley1981role,saarela2023role} aiming to unify these definitions.)  In all these lines of work, exchangeability is an assumption imposed on the causal model, in contrast to our result in Proposition \ref{pr:reduction_to_exchangeable}.

\paragraph{Step 2: Expected values of $\DM_N$ and $\widehat{\Var}_N$ in A/A experiments with exchangeable observations.} Proposition \ref{pr:reduction_to_exchangeable} shows that in analyzing the distribution of $\DM_N$ and $\widehat{\Var}_N$, we can assume without loss of generality that the observations in the A/A experiment are exchangeable.  For the remainder of the section, therefore, we make the following assumption.

\begin{assumption}
\label{as:exchangeable}
The observations $Y_1, \ldots, Y_N$ are exchangeable under the measure $\Q$.     
\end{assumption}

We refer to an A/A experiment with observations satisfying Assumption \ref{as:exchangeable} as an {\em A/A experiment with exchangeable observations}.   
To fix notation, for an A/A experiment with exchangeable observations, we let $\mu = \E[Y_i]$ and $V = \Var(Y_i)$ for each $i$, and $C = \Cov(Y_i, Y_j)$ for $i \neq j$, with all expectations taken under $\Q$.  The following proposition establishes unbiasedness of $\DM_N$ and $\widehat{\Var}_N$; see Appendix \ref{app:proofs} for the proof.

\begin{proposition}
\label{pr:gte_and_var_aa_exchangeable}
Consider an A/A experiment with exchangeable observations, i.e., Assumption \ref{as:exchangeable} holds.  For all $\v{Z}$ with $N_0 > 0$ and $N_1 > 0$, there holds $\E[\DM_N | \v{Z}] = 0$.  In particular , $\E[\DM_N | N_0 > 0, N_1 > 0] = 0$.  Further:
\[ \E[\widehat{\Var}_N | \v{Z}] = \Var(\DM_N | \v{Z})  = \left( \frac{1}{N_1} + \frac{1}{N_0} \right) (V - C). \]
In particular:
\[ \E[\widehat{\Var}_N | N_0 > 1, N_1 > 1] =
    \Var(\DM_N | N_0 > 1, N_1 > 1)  =  
    \E \left[\frac{1}{N_1} + \frac{1}{N_0} \Bigg| N_0 > 1, N_1 > 1 \right] (V - C). \]
\end{proposition}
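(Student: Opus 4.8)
The plan is to condition throughout on the treatment assignment vector $\v{Z}$ (restricting to the event $N_0, N_1 > 0$) and to exploit that, under the A/A null, the observations $Y_1, \ldots, Y_N$ are independent of $\v{Z}$. Combined with Assumption \ref{as:exchangeable}, this independence means that conditioning on $\v{Z}$ leaves the joint law of the $Y_i$ unchanged, so that $\E[Y_i \mid \v{Z}] = \mu$, $\Var(Y_i \mid \v{Z}) = V$, and $\Cov(Y_i, Y_j \mid \v{Z}) = C$ for every $i \neq j$. Unbiasedness of $\DM_N$ then follows immediately: $\E[\b{Y}(1) \mid \v{Z}] = \mu = \E[\b{Y}(0) \mid \v{Z}]$, whence $\E[\DM_N \mid \v{Z}] = 0$, and averaging over $\v{Z}$ gives the first claim.

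For the conditional variance I would compute $\Var(\DM_N \mid \v{Z})$ directly by expanding $\DM_N = \b{Y}(1) - \b{Y}(0)$. Writing $N_z$ for the size of group $z$, the within-group variances are $\Var(\b{Y}(z) \mid \v{Z}) = (V + (N_z - 1)C)/N_z$, while the cross-group covariance is $\Cov(\b{Y}(1), \b{Y}(0) \mid \v{Z}) = C$, since a treated index and a control index are always distinct so every cross term contributes exactly $C$. Collecting terms produces the exact cancellation $\Var(\DM_N \mid \v{Z}) = (1/N_1 + 1/N_0)(V - C)$: the cross-group covariance $C$ precisely removes the within-group covariance contributions.

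The crux is then to match $\E[\widehat{\Var}_N \mid \v{Z}]$ to this expression. Treating the two summands of \eqref{eq:var_hat} separately, the key identity is $\E[\sum_{i : Z_i = 1}(Y_i - \b{Y}(1))^2 \mid \v{Z}] = (N_1 - 1)(V - C)$. I would derive this by writing the sum of squared deviations as $\sum_i Z_i Y_i^2 - N_1 \b{Y}(1)^2$ and taking conditional expectations, using $\E[Y_i^2 \mid \v{Z}] = V + \mu^2$ together with $\E[\b{Y}(1)^2 \mid \v{Z}] = \Var(\b{Y}(1)\mid \v{Z}) + \mu^2$ from the previous step. This is the exchangeable analogue of the classical unbiasedness of the sample variance, with the twist that the correct target is $V - C$ rather than $V$, because positive dependence across units deflates the expected within-group sum of squares. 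Dividing by the normalization $N_1(N_1 - 1)$ yields $(V - C)/N_1$; the identical argument for control gives $(V-C)/N_0$, and summing recovers $(1/N_1 + 1/N_0)(V-C) = \Var(\DM_N \mid \v{Z})$.

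The main obstacle, and the conceptually surprising point, is precisely this last identity: the naive estimator makes no attempt to estimate the covariance $C$, yet its conditional expectation lands on $V - C$ rather than $V$. The reason is that subtracting the in-sample mean $\b{Y}(1)$ already discounts the within-group covariance, and this discount exactly offsets the omission. Finally, to pass from the conditional statements to the unconditional ones, I would invoke the law of total variance: since $\E[\DM_N \mid \v{Z}] \equiv 0$ on the event $\{N_0 > 1, N_1 > 1\}$, the between-group (conditional-mean) variance term vanishes, so that $\Var(\DM_N \mid N_0 > 1, N_1 > 1) = \E[ 1/N_1 + 1/N_0 \mid N_0 > 1, N_1 > 1](V - C)$, which coincides with the tower-property evaluation of $\E[\widehat{\Var}_N \mid N_0 > 1, N_1 > 1]$, completing the proof.
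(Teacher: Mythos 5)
Your proposal is correct and follows essentially the same route as the paper's own proof: conditioning on $\v{Z}$, using exchangeability (together with independence of the $Y_i$ from $\v{Z}$) to reduce all moments to $\mu$, $V$, $C$, expanding $\sum_i Z_i (Y_i - \b{Y}(1))^2 = \sum_i Z_i Y_i^2 - N_1 \b{Y}(1)^2$ to land on $(N_1-1)(V-C)$, and finally removing the conditioning via the tower property and the law of total variance with vanishing conditional-mean term. The only differences are cosmetic (you route the variance computation through $\Var(\b{Y}(z) \mid \v{Z})$ and $\Cov(\b{Y}(1),\b{Y}(0)\mid \v{Z})$ explicitly, where the paper writes the same algebra in one line), so there is nothing to add.
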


Proposition \ref{pr:gte_and_var_aa_exchangeable} reveals that even if observations are correlated, the commonly used estimators $\DM_N$ and $\widehat{\Var}_N$ are both {\em unbiased} for their respective targets in A/A experiments.  The first result is straightforward under Assumption \ref{as:exchangeable}, since $\E[Y_i] = \mu$ for all $i$.  The second result reveals that conditional on the realization of $\v{Z}$, the exact variance of the DM estimator is $(1/N_1 + 1/N_0) (V - C)$.  Although the estimator $\widehat{\Var}_N$ does not include any estimation of covariance terms, the covariance between each observation and the group sample mean ($\b{Y}(1)$ and $\b{Y}(0)$ for treatment and control, respectively) exactly ``corrects'' for covariance in the estimator.

Taken together, Propositions \ref{pr:reduction_to_exchangeable} and \ref{pr:gte_and_var_aa_exchangeable} establish  Theorem \ref{thm:aa_expectations}.
Our results in these propositions are related to classical results on design-based inference by both Neyman \cite{neyman1923, ding2024first} and Fisher \cite{fisher1971design, wager2024causal}; indeed, our result can be considered a unification of ideas from their contributions.  In particular, our use of exchangeability is similar to Fisher's development of permutation testing, while our analysis of $\widehat{\Var}_N$ in the exchangeable setting is related to Neyman's result in a setting where potential outcomes are constant \cite{neyman1923}.  We discuss this relationship in more detail in Appendix \ref{app:fpp_proofs}.

We conclude by noting that statisticians have considered many variants of two-sample t-test statistics when observations might be correlated; see, e.g., \cite{rosner2006fundamentals}, Chapter 8.  A leading example is the {\em paired} two-sample t-test, where treatment and control observations are collected on the same unit; other examples include settings with clustering or temporal correlation.  All these examples lead to changes to the t-test statistic to deal with correlation.  By contrast, in our case, the results in Propositions \ref{pr:reduction_to_exchangeable} and \ref{pr:gte_and_var_aa_exchangeable} are strongly suggestive that the practitioner running an A/A experiment can use {\em the usual t-test statistic}, without any adjustment needed for correlation.  We study one such example next: inventory-constrained platforms with sign-consistent interventions.

\subsection{False positive probability in inventory-constrained platforms with sign-consistent interventions}
\label{sec:aa_monotone}

We now return to the inventory-constrained platform model of Section \ref{sec:cc}, and in particular characterize the behavior of the false positive probability $\FPP_N$ defined in \eqref{eq:FPP}.  

\paragraph{Finite sample analysis.}  We start by recalling that if the null hypothesis $H_0 : \GTE = 0$ holds {\em and} the treatment is sign-consistent, then by Proposition \ref{pr:null_monotone} we know that $\v{p}_0 = \v{p}_1$, i.e., we are in the setting of an A/A experiment: both treatment and control Markov chains are identical, and of course this is a special case of the general model of A/A experiments considered in the previous section.  Therefore, by Theorem \ref{thm:aa_expectations}, we conclude that $\E[\DM_N | H_0, N_0 > 0, N_1 > 0] = \GTE = 0$, and $\E[\widehat{\Var} | H_0, N_0 > 1, N_1 > 1] = \Var(\DM | H_0, N_0 > 1, N_1 > 1)$, i.e., both are unbiased for their target estimands. %

\paragraph{Asymptotic false positive probability.}  Given the previous result, we should expect that asymptotically, the t-test statistic $\hat{T}_N = \DM_N/\sqrt{\widehat{\Var}_N}$ should be approximately a standard normal random variable, as long as $\DM_N$ obeys a central limit theorem.  Further, if this is the case, then the false positive probability should approach the target level $\alpha$ for the decision rule \eqref{eq:decision} as $N \to \infty$.  Indeed, the following theorem establishes this result.  The theorem is an application of Theorem \ref{thm:aa_clt} for a general setting of Markovian system dynamics; we refer the reader to Appendix \ref{app:MC_aa} and \ref{app:MC_cc} for details.

\begin{theorem}
\label{thm:aa_clt_cc}
Suppose $0 < a < 1$.  Suppose also that $H_0$ holds, i.e., $\GTE = 0$, and the treatment is sign-consistent (cf.~Definition \ref{def:monotone}).  By Proposition \ref{pr:null_monotone}, we have $\v{p}_0 = \v{p}_1$; thus define $V = V(0) = V(1)$, cf.~\eqref{eq:V_cc}.

Then $\DM_N$ obeys the following central limit theorem as $N \to \infty$:  
\begin{equation}
\sqrt{N} \DM_N \Rightarrow \mc{N}\left(0, \left(\frac{1}{a} + \frac{1}{1-a}\right) V\right)
\end{equation}
In addition, $\widehat{\Var}_N$ satisfies
\begin{equation}
N \widehat{\Var}_N \to^p \left(\frac{1}{a} + \frac{1}{1-a}\right) V.
\end{equation}
In particular, the t-test statistic $\hat{T}_N$ converges in distribution to a standard normal random variable as $N \to \infty$:
\[ \hat{T}_N = \frac{\DM_N}{\sqrt{\widehat{\Var}_N}} \Rightarrow \mc{N}(0,1).\]
Thus as $N \to \infty$, there holds $\FPP_N \to \alpha$.
\end{theorem}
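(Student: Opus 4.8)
The plan is to reduce the theorem to the two asymptotic results already established, Theorem~\ref{thm:ab_clt_cc} (the CLT for $\DM_N$) and Theorem~\ref{thm:var_est_cc} (the limit for $\widehat{\Var}_N$), specialized to the degenerate configuration $\v{p}_0=\v{p}_1$ forced by Proposition~\ref{pr:null_monotone}, and then to glue the marginal limit and the variance limit together with Slutsky's theorem before reading off the false positive probability.

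First I would record the structural consequence of $\v{p}_0=\v{p}_1$. Writing $p(k)$ for the common value, the experiment generator uses $q_a(k)=(1-a)p_0(k)+ap_1(k)=p(k)$ for every $a$, so the state process is a single Markov chain that does not reference the assignment labels, and each outcome $Y_i$ is $\mathrm{Bernoulli}(p(\cdot))$ in the state it sees, irrespective of $Z_i$. Hence $(Y_1,\dots,Y_N)$ is independent of $(Z_1,\dots,Z_N)$, which is exactly the A/A null $\tilde H_0$. Two deductions follow: $\ADE_a=\sum_k \pi_a(k)(p_1(k)-p_0(k))=0$, and each conditional covariance $C_j(z,z')$ in \eqref{eq:Cj_cc} equals the unconditional $\Cov_{\v{\pi}_a}(Y_1,Y_j)$, independent of $(z,z')$. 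In particular $C_j(0,0)+C_j(1,1)-C_j(0,1)-C_j(1,0)=0$ for every $j$, so every covariance contribution in \eqref{eq:tilde_sigma_cc} cancels; likewise $V(0)=V(1)=V$ is immediate from \eqref{eq:V_cc} once $p_0=p_1$.

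Now I would apply Theorem~\ref{thm:ab_clt_cc}: since $\ADE_a=0$ and the covariance terms cancel, $\tilde\sigma_a^2$ collapses to $(\tfrac1a+\tfrac1{1-a})V$, giving the first display, while Theorem~\ref{thm:ab_clt_cc} also furnishes $\tilde\sigma_a^2>0$, so the limiting variance is strictly positive. (Alternatively, the general Markovian A/A central limit theorem, Theorem~\ref{thm:aa_clt}, delivers this display directly.) For the second display, Theorem~\ref{thm:var_est_cc} gives $N\widehat{\Var}_N\to^p \tfrac1a V(1)+\tfrac1{1-a}V(0)=(\tfrac1a+\tfrac1{1-a})V$, which matches the CLT variance exactly—the asymptotic counterpart of the finite-sample unbiasedness of Theorem~\ref{thm:aa_expectations}. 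Setting $s^2=(\tfrac1a+\tfrac1{1-a})V>0$ and $\hat T_N=\sqrt N\,\DM_N/\sqrt{N\widehat{\Var}_N}$, Slutsky's theorem (numerator $\Rightarrow\mathcal N(0,s^2)$, denominator $\to^p s>0$) yields $\hat T_N\Rightarrow\mathcal N(0,1)$.

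Finally, for $\FPP_N\to\alpha$, I would invoke the portmanteau / continuous mapping theorem for the map $t\mapsto \mathbf 1\{|t|>\Phi_{\alpha/2}\}$, whose discontinuity set $\{\pm\Phi_{\alpha/2}\}$ is null under $\mathcal N(0,1)$; thus $\P(|\hat T_N|>\Phi_{\alpha/2})\to\P(|W|>\Phi_{\alpha/2})=\alpha$ for $W\sim\mathcal N(0,1)$, the last equality holding by the definition of $\Phi_{\alpha/2}$ and symmetry. The only bookkeeping point is that $\DM_N,\widehat{\Var}_N$, and the conditioning in \eqref{eq:FPP}, are defined only on $\{N_1>1,N_0>1\}$, an event of probability tending to one for fixed $a\in(0,1)$, so it does not affect any limit. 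I expect the single substantive step to be the covariance cancellation in the second paragraph—verifying that $\v{p}_0=\v{p}_1$ renders the outcomes independent of the treatment labels, so the four quantities $C_j(z,z')$ coincide; the remainder is assembling the cited limits through Slutsky and the portmanteau theorem.
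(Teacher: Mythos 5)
Your proposal is correct and follows essentially the same route as the paper: reduce to an A/A experiment via Proposition \ref{pr:null_monotone}, observe that $\v{p}_0=\v{p}_1$ makes the outcomes independent of the treatment labels so that $V(0)=V(1)$ and all four covariances $C_j(z,z')$ coincide (cancelling every covariance term in \eqref{eq:tilde_sigma_cc}), then combine the A/B CLT and the variance-estimator limit via Slutsky and conclude $\FPP_N\to\alpha$. The paper packages this as Theorem \ref{thm:aa_clt} and Corollary \ref{cor:aa_fpp_limit} in Appendix \ref{app:MC_aa} and then specializes, whereas you specialize Theorems \ref{thm:ab_clt_cc} and \ref{thm:var_est_cc} directly, but the substance — the covariance cancellation and the Slutsky/portmanteau assembly — is identical.
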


Theorem \ref{thm:aa_clt_cc} illustrates that despite the presence of interference, if we restrict attention to sign-consistent treatments, then the false positive probability is correctly controlled at level $\alpha$ by the decision rule \eqref{eq:decision}.  Note that because the test statistic is asymptotically standard normal, we can conclude that if the treatment is sign-consistent, then frequentist p-values (e.g., as are commonly displayed in experimentation dashboards in industry) are (asymptotically) correct under $H_0$ as well.  Taken together, the preceding results suggest that when treatments are sign-consistent, there are no benefits to false positive probability to be gained by implementing debiasing techniques for either $\DM_N$ or $\widehat{\Var}_N$.  

We note that in comparison with Proposition \ref{pr:gte_and_var_aa_exchangeable}, there is no covariance term in the limit.  Informally, this is because the analog of $C$ in Proposition \ref{pr:gte_and_var_aa_exchangeable} in this setting is the {\em average} covariance between the first $N$ observations.  This average covariance decays to zero as $N \to \infty$ in our stationary Markov chain setting, so we are only left with the variance.  We can also see this another way, by comparing the preceding result with Theorem \ref{thm:ab_clt_cc}, and in particular \eqref{eq:tilde_sigma_cc}.  In contrast to those results, the covariance terms exactly cancel each other out in Theorem \ref{thm:aa_clt_cc}; informally, this happens because the ``within-group" covariances $C_j(0,0) + C_j(1,1)$ cancel the ``across-group" covariances $C_j(0,1) + C_j(1,0)$ for each $j$, cf.~\eqref{eq:Cj_cc}.

 We conclude by noting that Theorem \ref{thm:aa_clt_cc} is similar in spirit to a longstanding literature on Fisherian randomization inference with ``studentized" statistics (such as the $t$-test statistic we consider). Here, the goal is typically to provide studentized test statistics that can correctly control asymptotic type I error under weaker null hypotheses than Fisher's sharp null (\cite{chungromano2013,romano1990,wu2021randomization}); by contrast, our result is in the setting where $H_0$ is equivalent to Fisher's sharp null.  Given these results, we conjecture that similar asymptotic control of the false positive probability could be obtained in any setting where the tail of the variance estimator $\widehat{\Var}$ is well controlled.

\section{Statistical power under sign-consistent interventions}
\label{sec:power}

Although $\DM_N$ and $\widehat{\Var}_N$ are unbiased under $H_0$ when treatments are sign-consistent (i.e., when $\v{p}_1 = \v{p}_0$), both estimators will be biased in general if $\v{p}_0 \neq \v{p}_1$.  
An important implication of the preceding section is that if treatments are sign-consistent, then the implementation of a debiased estimation and hypothesis testing strategy can only be beneficial if it increases the statistical power when $\v{p}_0 \neq \v{p}_1$.  In this section, we focus on the inventory-constrained platform model with sign-consistent treatments, and we study the asymptotic statistical power of the test statistic \eqref{eq:tstat}.   

In particular, we compare the estimator in \eqref{eq:tstat} to the test statistic obtained if the decision maker uses an unbiased estimator for $\GTE$ instead (with an associated unbiased variance estimator).  In Section \ref{sec: finite sample power}, we gain intuition for our asymptotic analysis by starting with a finite sample analysis: we show that if treatments are sign-consistent then $|\E[\DM_N]|$ will be larger than $|\GTE|$, and in particular positive (resp., negative) treatments result in positive (resp., negative) $\GTE$ bias.  

This finding hints that if the variance of any unbiased estimator for $\GTE$ is no smaller than $\widehat{\Var}$, then the power of the naive test statistic $\hat{T}_N$ in \eqref{eq:tstat} should be larger as well. In Section \ref{sec: asymptotic power}, we investigate this behavior.  We simulate a lower bound on the variance of an unbiased estimator provided by \cite{farias2022markovian}, and demonstrate via numerics that it is typically larger than $\widehat{\Var}_N$. As a consequence, we find in numerics that the naive test statistic $\hat{T}_N$ has higher power.  Taken together, these findings demonstrate that the interaction between the bias and variance of na\"{i}ve estimation can be more favorable for controlling false negatives than unbiased estimation.

\subsection{Finite sample analysis}
\label{sec: finite sample power}

We first consider the behavior of $\E[\widehat{\GTE}_N]$ for a fixed $N$, when the treatment is sign-consistent.  In particular, we establish that
if the treatment is strictly positive (resp., strictly negative) then $\E[\widehat{\GTE}_N] > \GTE$ (resp., $\E[\widehat{\GTE}_N] < \GTE$).  Our technical approach involves establishing stochastic dominance relations between the state distribution each arriving customer encounters, in global treatment, global control, and the experimental systems, respectively.  

\paragraph{Distribution seen by each customer.} Let $\v{Z} = (Z_1, \dots, Z_N) \in \{0,1\}^N$ denote the realized vector of treatments assigned to each customer.  For each customer $i=1,\dots,N$, let $A_i$ denote the arrival time of the $i$th customer. For each realization of $\v{Z}$, define:
\begin{equation}
    \label{eq: ith customer distribution}
    \nu_i^{\v{Z}}(k) := \liminf_{\epsilon \rightarrow 0} \P\left(X_{A_i-\epsilon}=k|\v{Z}\right) \text{ for } k = 0,\dots,K.
\end{equation}
In other words, $\v{\nu}^{\v{Z}}_i$ is the state distribution seen by the $i$'th customer when they arrive to the system, if the sequence of treatment assignments is $\v{Z}$.  Note that $\v{\nu}^{\v{Z}}_i$ only depends on the treatment assignments $Z_1, \ldots, Z_{i-1}$ prior to customer $i$.  Finally, we note that the probability above is over randomness in the entire state process $\{X_t\}$ (including the arrival times $\{A_j\}$ over all customers $j$), but conditional on the exact treatment assignment vector $\v{Z}$.
We let $\v{1}$ (resp., $\v{0}$) denote the vector of $N$ ones (resp., $N$ zeros); note that this vector is $N$-dimensional, though we suppress this dependence on $N$ to simplify notation.  Observe that $\v{\nu}_i^{\v{1}}$ (resp., $\v{\nu}_i^{\v{0}}$) denotes the distribution as seen by the $i$th customer when the system is in global treatment, i.e., $a = 1$ (resp., global control, i.e., $a = 0$).

\paragraph{Stochastic dominance.}  Our analysis requires stochastic comparison of the distributions $\v{\nu}_i^{\v{Z}}$ for different values of $\v{Z}$.  We have the following definition.

\begin{definition}
\label{def:stochdom}
If $\v{\nu}=(\nu(0),\dots,\nu(K))$ and $\v{\mu}=(\mu(0),\dots,\mu(K))$ are probability distributions on $(0, \ldots, K)$, then we say $\v{\nu}$ \emph{stochastically dominates} $\v{\mu}$ and write $\v{\nu} \succ^d \v{\mu}$ if for all $k \geq 0$ we have
$$\sum_{k'=k}^K \nu(k') \geq \sum_{k'=k}^K \mu(k'),$$
with strict inequality for at least one value of $k$.
\end{definition}

\paragraph{Bias of $\DM_N$.}  Our finite sample analysis of the bias of $\DM_N$ proceeds by establishing stochastic dominance relationships among $\v{\nu}_i^{\v{Z}}$, depending on the treatment assignment $\v{Z}$. Our key result (Proposition \ref{prop:distributiondom} in Appendix \ref{app:proofs}) establishes that for a strictly positive treatment, and for any treatment assignment $\v{Z}$ with $N_1, N_0 > 0$, for every arriving customer $i$ one of the following three cases holds:
\begin{enumerate}
    \item[(1)] $\v{\nu}_i^{\v{Z}} =\v{\nu}_i^{\v{1}}$; 
    \item[(2)] $\v{\nu}_i^{\v{Z}} = \v{\nu}_i^{\v{0}}$; or
    \item[(3)] $\v{\nu}_i^{\v{0}} \prec^d \v{\nu}_i^{\v{Z}} \prec^d \v{\nu}_i^{\v{1}}$.
\end{enumerate}
Cases (1) and (2) correspond to the case where every arriving customer prior to customer $i$ was in treatment or control, respectively; in this case, the state distribution seen by customer $i$ under $\v{Z}$ corresponds to what they would have seen under global treatment or global control, respectively.  Case (3) is the interesting case: if neither (1) or (2) hold, then the state distribution for the experimental treatment assignment lives strictly ``between" the global control and global treatment conditions (in the sense of stochastic dominance).  (If the treatment is strictly negative, these inequalities are reversed.)

Our proof approach leverages an elementary comparison argument together with stochastic monotonicity (cf.~\cite{keilson1977monotone}) of an appropriate underlying pure death chain.  Informally, stochastic monotonicity guarantees us that stochastic dominance relationships are preserved over time.   We use this approach to establish that under a strictly positive treatment, whenever an arriving customer is allocated to treatment, this ``increases" the state distribution (in the sense of stochastic dominance); and that increase is preserved over time.  Corollary \ref{cor:invariant_start_dist} in Appendix \ref{app:proofs} extends this argument to the steady state limit, to allow comparisons to the steady state distributions $\v{\pi}_0$ and $\v{\pi}_1$.

Once we have established the preceding insight, it is straightforward to establish that for strictly positive interventions, the DM estimator has a positive bias.  This is a finite sample analog of similar ``mean field" or fluid limit results seen in other platform settings; see, e.g., \cite{johari2022experimental,li2022interference,bright2022reducing}.  (We establish a similar asymptotic result in our setting in the next section.)  

Before stating this theorem, we impose the simplifying, benign assumption that the first arriving customer sees the system in global control; since the distribution on arrival of the first customer does not depend on any treatment assignments, we simply write $\v{\nu}_1$ for this distribution (without superscript $\v{Z}$).  (In fact our result holds for any initial distribution that lies between $\v{\pi}_0$ and $\v{\pi}_1$, in the sense of stochastic dominance.)
\begin{assumption}
     \label{as: first customer distr}
For the first customer, $\v{\nu}_1 = \v{\pi}_0$.
\end{assumption}

We have the following theorem.   See Appendix \ref{app:proofs} for the proof.
\begin{theorem}
    \label{thm: pos bias DM}
Suppose the treatment is strictly positive, cf.~Definition \ref{def:monotone}, and Assumption \ref{as: first customer distr} holds.  For all $\v{Z}$ with $N_0>0$ and $N_1>0$, there holds $\E[\widehat{\GTE}_N|\v{Z}]> \GTE > 0$. In particular, $\E[\widehat{\GTE}_N|N_0>0,N_1>0]>\GTE>0$.  If the treatment is strictly negative, these inequalities are reversed.  Thus for any strictly sign-consistent intervention, there holds $|\E[\widehat{\GTE}_N|N_0>0,N_1>0]|>|\GTE|$.
\end{theorem}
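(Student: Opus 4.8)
The plan is to condition on the realized treatment vector $\v{Z}$ (with $N_0, N_1 > 0$) and compute $\E[\widehat{\GTE}_N \mid \v{Z}]$ exactly in terms of the per-customer arrival distributions $\v{\nu}_i^{\v{Z}}$. Since customer $i$ books with probability $p_{Z_i}$ evaluated at the state seen on arrival, and $\v{\nu}_i^{\v{Z}}$ is the law of that arrival state given $\v{Z}$, we have $\E[Y_i \mid \v{Z}] = \sum_k \nu_i^{\v{Z}}(k)\,p_{Z_i}(k)$. Abbreviating $f_z(\v{\mu}) := \sum_k \mu(k)\,p_z(k)$,
\begin{equation*}
\E[\widehat{\GTE}_N \mid \v{Z}] = \frac{1}{N_1}\sum_{i:\,Z_i=1} f_1(\v{\nu}_i^{\v{Z}}) \;-\; \frac{1}{N_0}\sum_{i:\,Z_i=0} f_0(\v{\nu}_i^{\v{Z}}),
\end{equation*}
whereas $\GTE = f_1(\v{\pi}_1) - f_0(\v{\pi}_0) = \rho_1 - \rho_0$. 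The structural fact I will use repeatedly is that, because each $p_z$ is strictly decreasing (Assumption \ref{as:booking_prob_monotone}), the functional $f_z$ is monotone decreasing in the first-order stochastic order of Definition \ref{def:stochdom}, and strictly so when the dominance is strict: a distribution placing more mass on high-occupancy states yields a smaller booking value.

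First I would show that, for a strictly positive treatment and any $\v{Z}$ with $N_0,N_1>0$, each $\v{\nu}_i^{\v{Z}}$ lies between $\v{\pi}_0$ and $\v{\pi}_1$ in the stochastic order (equality permitted at the endpoints). This combines the three cases of Proposition \ref{prop:distributiondom} with Assumption \ref{as: first customer distr}: under that assumption the control chain started at $\v{\pi}_0$ is stationary, so $\v{\nu}_i^{\v{0}} = \v{\pi}_0$, while Corollary \ref{cor:invariant_start_dist} bounds the treatment chain started at $\v{\pi}_0$ between $\v{\pi}_0$ and $\v{\pi}_1$; each case then sandwiches $\v{\nu}_i^{\v{Z}}$ accordingly. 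Applying monotonicity of $f_z$ gives $f_1(\v{\nu}_i^{\v{Z}}) \ge f_1(\v{\pi}_1) = \rho_1$ for treatment customers and $f_0(\v{\nu}_i^{\v{Z}}) \le f_0(\v{\pi}_0) = \rho_0$ for control customers, whence $\E[\widehat{\GTE}_N \mid \v{Z}] \ge \rho_1 - \rho_0 = \GTE$.

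The delicate part is upgrading both inequalities to strict ones. For $\E[\widehat{\GTE}_N \mid \v{Z}] > \GTE$, I would single out the first treatment customer $j$ (which exists since $N_1>0$); all of its predecessors are control, so it sees exactly $\v{\nu}_j^{\v{Z}} = \v{\pi}_0$ (case (2) of Proposition \ref{prop:distributiondom}, i.e.\ stationarity of the control chain). Strict positivity gives $\v{\pi}_0 \prec^d \v{\pi}_1$ (recorded in the discussion preceding Proposition \ref{pr:null_monotone}), and strict monotonicity of $f_1$ then yields $f_1(\v{\pi}_0) > f_1(\v{\pi}_1) = \rho_1$; hence the $j$-th summand strictly exceeds $\rho_1$ while all other summands are at least $\rho_1$ (treatment) or at most $\rho_0$ (control), forcing the strict inequality. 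For $\GTE > 0$ I would use the birth--death flow-balance identity $\lambda\rho_z = \sum_{k\ge1}\tau(k)\pi_z(k) = \E_{\v{\pi}_z}[\tau(X)]$ with $\tau(0):=0$; Abel summation gives $\E_{\v{\pi}_z}[\tau(X)] = \sum_{k\ge1}\bigl(\tau(k)-\tau(k-1)\bigr)\,\P_{\v{\pi}_z}(X \ge k)$, and since $\tau$ is nondecreasing (Assumption \ref{as:tau_monotone}) with $\tau(1)>0$, while the product-form stationary law $\pi_z(k) \propto \prod_{j=0}^{k-1} \lambda p_z(j)/\tau(j+1)$ has $\pi_1(0) < \pi_0(0)$ strictly (its normalizer is strictly larger under treatment, as $p_1(j)\ge p_0(j)$ with strict inequality for some $j$), the $k=1$ term is strictly larger under $\v{\pi}_1$ and all others weakly larger, giving $\rho_1 > \rho_0$.

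To finish, averaging the conditional bound over the conditional law of $\v{Z}$ given $\{N_0>0,N_1>0\}$ preserves strictness, so $\E[\widehat{\GTE}_N \mid N_0>0,N_1>0] > \GTE > 0$. The strictly negative case is entirely symmetric: reversing the roles of treatment and control reverses every stochastic-dominance relation and flips the signs, yielding $\E[\widehat{\GTE}_N \mid \v{Z}] < \GTE < 0$. Combining the two gives $|\E[\widehat{\GTE}_N \mid N_0>0,N_1>0]| > |\GTE|$ for any strictly sign-consistent intervention. I expect the main obstacle to be the strictness bookkeeping rather than the weak bound: identifying one customer whose contribution is strictly off-target (resolved cleanly by the first treatment customer) and verifying that strict dominance $\v{\pi}_1 \succ^d \v{\pi}_0$ genuinely forces $\rho_1 > \rho_0$ even when $\tau(\cdot)$ is constant in $k$ — which is exactly where the $k=1$ Abel term and the strict inequality $\pi_1(0) < \pi_0(0)$ do the work.
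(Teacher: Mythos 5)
Your proposal is correct and follows essentially the same route as the paper's proof: condition on $\v{Z}$, sandwich each arrival distribution $\v{\nu}_i^{\v{Z}}$ between $\v{\pi}_0$ and $\v{\pi}_1$ using Proposition \ref{prop:distributiondom} together with Corollary \ref{cor:invariant_start_dist} (under Assumption \ref{as: first customer distr}), and integrate the strictly decreasing booking probabilities against these dominance relations to get the per-customer bounds, then average over $\v{Z}$. The only deviations are in the strictness bookkeeping---you use the first treatment customer, who sees exactly $\v{\pi}_0 \prec^d \v{\pi}_1$, where the paper instead invokes the final clause of Proposition \ref{prop:distributiondom} that case (3) eventually holds for all customers, and you give an explicit product-form/Abel-summation argument for $\rho_1 > \rho_0$ where the paper cites stochastic dominance plus detailed balance as in Proposition \ref{pr:null_monotone}---and both of your choices are valid, if anything slightly more careful about where the strict inequalities actually come from.
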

We remark that Assumption \ref{as: first customer distr} is not essential.  It is in fact possible to show the same result for any distribution $\v{\nu}_1$ such that $\v{\pi}_0 \prec^d \v{\nu}_1 \prec^d \v{\pi}_1$, as well as $\v{\nu}_1 = \v{\pi}_1$; we omit the details.

\subsection{Asymptotic analysis}
\label{sec: asymptotic power}

In this section, we consider the asymptotic behavior of statistical power when the treatment is sign-consistent.

\paragraph{Asymptotic bias of $\DM_N$.}  We start by noting that Theorem \ref{thm: pos bias DM} shows that the DM estimator systematically overestimates $\GTE$ in magnitude when the treatment is strictly sign-consistent.  Unsurprisingly, a similar result is true asymptotically as well, as we show in the next theorem; see Appendix \ref{app:proofs} for the proof.  Recall from Theorem \ref{thm:ab_clt_cc} that $\ADE_a$ is the approximate asymptotic mean of $\DM_N$. 
\begin{theorem}
    \label{thm: pos bias limit}
    If the treatment is strictly positive (cf.~Definition \ref{def:monotone}), then $\ADE_a>\GTE>0$. If the treatment is strictly negative,  then $\ADE_a<\GTE<0$.  Thus for any strictly sign-consistent intervention, there holds $|\ADE_a| > |\GTE| > 0$.
\end{theorem}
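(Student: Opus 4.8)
The plan is to reduce everything to the strict stochastic dominance ordering $\v{\pi}_0 \prec^d \v{\pi}_a \prec^d \v{\pi}_1$ of the three relevant steady states, and then to exploit that each booking-probability vector $\v{p}_0, \v{p}_1$ is \emph{strictly} decreasing (Assumption \ref{as:booking_prob_monotone}). Throughout let $\E_{\v{\pi}}[f] := \sum_k \pi(k) f(k)$. I would prove the strictly positive case in detail; the strictly negative case is identical after reversing every dominance relation, and the two together give $|\ADE_a| > |\GTE| > 0$. First I would record the ordering of the steady states. Since $0 < a < 1$ and the treatment is strictly positive, the experimental birth rate $q_a(k) = (1-a)p_0(k) + a p_1(k)$ obeys $p_0(k) \le q_a(k) \le p_1(k)$ for every $k$, with both inequalities strict at any state where $p_1(k) > p_0(k)$. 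Such a state exists and lies in $\{0,\dots,K-1\}$ (since $p_1(K) = p_0(K) = 0$), so the birth rates are strictly separated at an interior, reachable state; feeding this into the steady-state stochastic-monotonicity result already developed for the finite-sample analysis (Corollary \ref{cor:invariant_start_dist}) yields the strict chain $\v{\pi}_0 \prec^d \v{\pi}_a \prec^d \v{\pi}_1$.

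The second ingredient is an elementary monotone-comparison lemma: if $f$ is strictly decreasing on $\{0,\dots,K\}$ and $\v{\mu} \prec^d \v{\nu}$ strictly, then $\E_{\v{\mu}}[f] > \E_{\v{\nu}}[f]$. I would prove this by Abel summation, writing
\[
\E_{\v{\nu}}[f] - \E_{\v{\mu}}[f] = \sum_{j=1}^{K}\big(f(j) - f(j-1)\big)\big(S_{\v{\nu}}(j) - S_{\v{\mu}}(j)\big),
\]
where $S_{\v{\nu}}(j) = \sum_{k \ge j}\nu(k)$ is the survival function; each factor $f(j)-f(j-1)$ is strictly negative, each $S_{\v{\nu}}(j) - S_{\v{\mu}}(j)$ is nonnegative, and at least one is strictly positive by strict dominance, so the whole sum is strictly negative.

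For $\GTE > 0$ I would simply invoke Theorem \ref{thm: pos bias DM}, whose conclusion already contains $\GTE > 0$ for strictly positive treatments. Self-containedly, this also follows from the steady-state flow-balance identity $\lambda \rho_z = \sum_{k \ge 1}\pi_z(k)\tau(k) = \E_{\v{\pi}_z}[\tau]$ (setting $\tau(0)=0$), which rewrites $\rho_z$ as the expectation of the \emph{increasing} function $\tau$; the ordering $\v{\pi}_0 \prec^d \v{\pi}_1$ (with $\tau(1) > \tau(0)=0$ and $\pi_1(0) < \pi_0(0)$) then forces $\rho_1 > \rho_0$.

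The heart of the argument is $\ADE_a > \GTE$, and here the key maneuver addresses the main obstacle: the difference vector $g(k) = p_1(k) - p_0(k)$ appearing in $\ADE_a = \E_{\v{\pi}_a}[g]$ need \emph{not} be monotone, so one cannot directly compare $\E_{\v{\pi}_a}[g]$ against the expectations defining $\GTE$ by stochastic dominance. Instead I would telescope the difference so that each comparison involves one of the individually strictly-decreasing vectors $\v{p}_0, \v{p}_1$:
\[
\ADE_a - \GTE = \big(\E_{\v{\pi}_a}[p_1] - \E_{\v{\pi}_1}[p_1]\big) + \big(\E_{\v{\pi}_0}[p_0] - \E_{\v{\pi}_a}[p_0]\big).
\]
Applying the lemma to the strictly decreasing $\v{p}_1$ with $\v{\pi}_a \prec^d \v{\pi}_1$ makes the first bracket strictly positive, and applying it to the strictly decreasing $\v{p}_0$ with $\v{\pi}_0 \prec^d \v{\pi}_a$ makes the second bracket strictly positive; hence $\ADE_a > \GTE$. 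Combined with $\GTE > 0$ this gives $\ADE_a > \GTE > 0$, and the strictly negative case follows by symmetry. The only real subtlety to watch is strictness, which is guaranteed precisely because $\v{p}_0, \v{p}_1$ are strictly (not merely weakly) decreasing and the chain $\v{\pi}_0 \prec^d \v{\pi}_a \prec^d \v{\pi}_1$ is strict — both of which hold because the strict positivity of the treatment occurs at an interior state $k < K$.
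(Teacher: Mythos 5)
Your proof is correct and is essentially the paper's own argument: both hinge on establishing the strict dominance chain $\v{\pi}_0 \prec^d \v{\pi}_a \prec^d \v{\pi}_1$ and then telescoping $\ADE_a - \GTE$ into exactly the same two brackets, $\sum_k p_1(k)\pi_a(k) - \sum_k p_1(k)\pi_1(k)$ and $\sum_k p_0(k)\pi_0(k) - \sum_k p_0(k)\pi_a(k)$, each strictly positive because a strictly decreasing booking-probability vector is integrated against stochastically ordered distributions (with $\GTE>0$ itself coming from the earlier flow-balance/detailed-balance argument, just as in the paper). The only difference is bookkeeping in the first step: the paper derives the chain by applying Proposition \ref{prop:distributiondom} to the Bernoulli-randomized experiment and passing to the limit via PASTA, whereas you instantiate the two-chain comparison machinery (Corollary \ref{cor:invariant_start_dist}) pairwise to $(\v{p}_0, \v{q}_a)$ and $(\v{q}_a, \v{p}_1)$ --- both routes rest on the same stochastic-monotonicity results from the appendix.
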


\paragraph{Comparison to decision-making using unbiased estimators.} Building from Theorem \ref{thm: pos bias limit}, we compare and contrast two approaches to decision-making when $\GTE \neq 0$.  First, we consider the test statistic \eqref{eq:tstat}, formed using the DM estimator $\DM_N$ and the associated variance estimator $\widehat{\Var}_N$, and applying the decision rule \eqref{eq:decision}.

Second, we consider a decision-making procedure using a hypothetical unbiased estimator $\hat{\theta}_N$ for $\GTE$; for example, \cite{farias2022markovian} constructs an unbiased estimator using an off-policy evaluation (OPE) approach based on least-squares temporal-differences, or LSTD, learning.  Such an estimator $\hat{\theta}_N$ is a real-valued functional of the observations $Y_1, \ldots, Y_N$ generated from a Bernoulli customer-randomized experiment (cf.~Section \ref{sec:cc}).  We say that $\hat{\theta}_N$ is an {\em unbiased estimator} if for all feasible values of system parameters (i.e., $K$, $\lambda$, $\v{\tau}$, $\v{p}_1$, $\v{p}_0$) and the design parameter $a$, there holds $\E[\hat{\theta}_N] = \GTE$.  For the remainder of this section, we assume that $\hat{\theta}_N$ is an unbiased estimator, and we have access to the associated (true) variance $\Var(\hat{\theta}_N)$. 

Suppose we form the following test statistic:
\begin{equation}
\label{eq:ub_tstat}
 \hat{U}_N = \frac{\hat{\theta}_N}{\sqrt{\Var(\hat{\theta}_N)}}. 
\end{equation}
and apply the same decision rule as \eqref{eq:decision}, but with the test statistic $\hat{U}_N$ instead.  How does this method compare to the decision rule using $\hat{T}_N$?  Which method yields higher power depends on the relative behavior of the numerators and denominators of $\hat{T}_N$ and $\hat{U}_N$, respectively, as $N \to \infty$.  We note that since Theorem \ref{thm: pos bias limit} implies $|\ADE_a| > |\GTE|$, we expect the numerator of $\hat{T}_N$ is asymptotically larger than the numerator of $\hat{U}_N$.  Therefore our emphasis is on comparing the denominators of $\hat{T}_N$ and $\hat{U}_N$, respectively, i.e., comparing $\widehat{\Var_N}$ with $\Var(\hat{\theta}_N)$.

\paragraph{Bounding the variance of an unbiased estimator.}  To study $\Var(\hat{\theta}_N)$, we leverage Theorem 3 of \cite{farias2022markovian} which uses the multivariate Cram\'{e}r-Rao bound to give a lower bound on the variance of any unbiased estimator.\footnote{The result presented in \cite{farias2022markovian} is for $a = 1/2$, but it can be verified that the expression given there generalizes to any $0 < a< 1$.}  In particular, for any system and design parameters $K$, $\lambda$, $\v{\tau}$, $\v{p}_1$, $\v{p}_0$, and $a$, their theorem provides a quantity $\sigma^2_{\UB}$ such that for any $N$ and any unbiased estimator $\hat{\theta}_N$, there holds $N \Var(\hat{\theta}_N) \geq \sigma^2_{\UB}$.  This bound is also tight: the unbiased OPE LSTD estimator of \cite{farias2022markovian} achieves the lower bound; the bound is also met asymptotically as $N \to \infty$ by the consistent nonparametric maximum likelihood estimator proposed by \cite{glynn2020adaptive}.\footnote{The estimator of \cite{glynn2020adaptive} directly (unbiasedly) estimates the transition matrices and reward functions of treatment and control; this is the source of the variance estimates in the Cram\'{e}r-Rao bound.  In \cite{glynn2020adaptive}, these primitives are then used to estimate $\GTE$; the latter estimator may be biased, as it involves inversion of the estimated transition matrices to compute steady state distributions.}
Note that $\sigma^2_{\UB}$ depends on the system and design parameters (i.e., $K$, $\lambda$, $\v{\tau}$, $\v{p}_1$, $\v{p}_0$, and $a$), but {\em not} the sample size $N$.  We suppress the parameter dependence of $\sigma^2_{\UB}$ for notational simplicity.   %
In general, $\sigma^2_{\UB}$ can be quite large: in \cite{farias2022markovian}, it is shown that there is an example class of Markov chains where $\sigma^2_{\UB}$ scales exponentially in the size of the state space.  For the sake of brevity, we leave the formal presentation of Theorem 3 of \cite{farias2022markovian} and its mapping to our setting to Appendix \ref{app:CR}.  

In \cite{farias2022markovian}, an example is given of a sequence of Markov chains (together with associated reward functions) with state space size $K \to \infty$, for which the Cram\'{e}r-Rao lower bound increases exponentially with $K$ (see Theorem 4 of that paper).  Note that, by contrast, $N \widehat{\Var}_N$ is asymptotically bounded above by $(1/a + 1/(1-a))(1/4)$, since $V(0)$ and $V(1)$ in the Theorem \ref{thm:var_est} are each no larger than $1/4$.  In the worst case, this pair of facts suggests that unbiased estimators can have exponentially worse scaling with $K$ than the naive variance estimator.

We ask: is exponential growth of $\sigma^2_{\UB}$ idiosyncratic, or unavoidable?  We now formally show that this finding is in fact generic, for any capacity-constrained platform experiment of the type we consider in this paper, with a sign-consistent treatment: the exponential increase in variance for any unbiased estimator is in fact unavoidable for the practitioner.  Establishing such a result generically requires construction of an appropriate sequence of capacity-constrained platforms with sign-consistent treatments.  In the following theorem, we consider a sequence of treatment and control systems that approach appropriate ``mean-field'' limits, respectively, as $K \to \infty$.  We use this limiting system to show that for any such sequence, regardless of parameters, the Cram\'{e}r-Rao lower bound $\sigma^2_{\UB}$ grows exponentially in the size of the state space.  The proof of the theorem can be found in Appendix \ref{app:MC_cc}. 

\begin{theorem}
\label{thm:cr_expo_growth}
Fix $\b{\lambda} > 0$, and continuous functions $\b{p}_0, \b{p}_1, \b{\tau}: [0,1] \to \mathbb{R}$ such that $0 < \b{p}_0(s) < \b{p}_1(s) < 1$ for all $s < 1$; $\b{p}_0, \b{p}_1$ are strictly decreasing; $\b{\tau}(s) > 0$ for all $s > 0$; and $\b{\tau}$ is nondecreasing.  Further, assume that $\b{p}_z(1) = 0$, and $\b{\tau}_z(0) = 0$.

For each $K = 1, 2, \ldots$, define the following quantities:
\begin{enumerate}
    \item $\lambda^{(K)} = K \b{\lambda}$;
    \item $\tau^{(K)}(x) = K \b{\tau}(x/K)$, for $x = 0, \ldots, K$;
    \item $p_z^{(K)}(x) = \b{p}_z(x/K)$, for $z = 0,1$, and $x = 0, \ldots, K$.
\end{enumerate}
Then for each $K$, this defines a pair of control and treatment systems with strictly sign-consistent (positive) treatment.  Let $\v{\pi}_z^{(K)}$, $z = 0,1$ be the stationary distribution of the global control and global treatment Markov chains, respectively; and let $\v{\pi}_a^{(K)}$ be the stationary distribution of the experiment chain, with treatment probability $a$ (where $0 < a < 1$).

Let $\sigma^2_{\UB}(K)$ denote the Cram\'{e}r-Rao lower bound for the $K$'the system.  Then there exists constants $C > 0$ and $\gamma > 1$ (which depend on the system parameters $\b{\lambda}$, $\b{p}_0, \b{p}_1$, and $\b{\tau}$) such that for all $K$:
\[ \sigma^2_{\UB}(K) \geq C \gamma^K.\]
\end{theorem}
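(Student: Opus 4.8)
The plan is to show that the Cramér–Rao lower bound $\sigma^2_{\UB}(K)$ grows exponentially by exhibiting a single scalar direction in the parameter space along which the Fisher information is exponentially small. Since the multivariate Cramér–Rao bound states $N\Var(\hat\theta_N) \ge (\nabla g)^\top \mathcal{I}^{-1} (\nabla g)$, where $g$ is the map from primitives to $\GTE$ and $\mathcal{I}$ is the (per-sample) Fisher information matrix of the experiment chain, it suffices to lower-bound this quadratic form. The cleanest route is to pick one coordinate of the parameter vector—most naturally the control booking probability $p_0(k^\star)$ in a single judiciously chosen state $k^\star$—and bound $\sigma^2_{\UB}(K)$ below by $(\partial g/\partial p_0(k^\star))^2 / \mathcal{I}_{k^\star k^\star}$, i.e., the one-dimensional Cramér–Rao bound obtained by restricting to that coordinate. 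This avoids inverting the full matrix, at the cost of needing the right coordinate.

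First I would set up the mean-field scaling. Under the stated scaling $\lambda^{(K)}=K\b\lambda$, $\tau^{(K)}(x)=K\b\tau(x/K)$, $p_z^{(K)}(x)=\b p_z(x/K)$, the occupancy fraction $s=x/K$ converges to a deterministic fluid limit, and the stationary distributions $\v\pi_z^{(K)}$ concentrate around the fixed point $s_z^\star$ solving the balance equation $\b\lambda\,\b p_z(s)=\b\tau(s)$ (birth rate equals death rate in the fluid scaling). The key structural fact I would establish is that the stationary mass $\pi_a^{(K)}(k)$ is \emph{exponentially small} in $K$ for states $k$ far from the experiment's fluid fixed point $s_a^\star$; this follows from the product-form stationary distribution of a birth–death chain, $\pi_a^{(K)}(k)\propto \prod_{j<k}\frac{\lambda^{(K)}q_a^{(K)}(j)}{\tau^{(K)}(j)}$, together with a large-deviations (Laplace/rate-function) estimate on the product, giving $\pi_a^{(K)}(k)\asymp e^{-K\,I(k/K)}$ for a rate function $I$ that is strictly positive away from $s_a^\star$.

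The crux is then to exploit a \emph{mismatch of support}. The Fisher information contribution $\mathcal{I}_{k^\star k^\star}$ from perturbing a booking probability in state $k^\star$ is proportional to the stationary visitation $\pi_a^{(K)}(k^\star)$ of that state (one only gains information about $p(k^\star)$ when the chain actually occupies state $k^\star$), so $\mathcal{I}_{k^\star k^\star}\lesssim \pi_a^{(K)}(k^\star)\asymp e^{-K I(k^\star/K)}$. Meanwhile $\GTE=\rho_1-\rho_0$ depends on the \emph{global} stationary distributions $\v\pi_0^{(K)},\v\pi_1^{(K)}$, which concentrate at $s_0^\star$ and $s_1^\star$ respectively; I would choose $k^\star$ so that $k^\star/K$ sits near $s_0^\star$ or $s_1^\star$ (where $g$ is genuinely sensitive to $p_0(k^\star)$, so $\partial g/\partial p_0(k^\star)$ is bounded away from zero, at least polynomially) but \emph{far} from the experiment fixed point $s_a^\star$ (where the information is exponentially starved). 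Since $0<a<1$ forces $s_a^\star$ strictly between $s_0^\star$ and $s_1^\star$ (because $\b p_0<\b p_1$ gives $s_0^\star<s_1^\star$ by monotonicity of the balance map), such a separation always exists, yielding $\sigma^2_{\UB}(K)\gtrsim (\partial g/\partial p_0(k^\star))^2/\pi_a^{(K)}(k^\star)\gtrsim e^{K I(k^\star/K)}\ge C\gamma^K$.

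The main obstacle will be making the sensitivity $\partial g/\partial p_0(k^\star)$ rigorous and nonvanishing while the information is exponentially small: one must check that perturbing $p_0$ at a state visited rarely \emph{in the experiment} nonetheless moves $\GTE$ by a non-negligible amount, because $\GTE$ is evaluated at the \emph{global} conditions where $k^\star$ is visited with non-negligible mass. Concretely this requires differentiating the steady-state map $\v p_0\mapsto\rho_0$ through the birth–death stationary distribution and showing the derivative does not decay faster than polynomially—an implicit-function-theorem computation on the balance equations in the fluid limit. Care is also needed to confirm that restricting the full Cramér–Rao bound to a single coordinate (equivalently, lower-bounding by the reciprocal of a diagonal Fisher entry after an appropriate projection) is valid; the clean statement is that $(\nabla g)^\top\mathcal{I}^{-1}(\nabla g)\ge (e_{k^\star}^\top\nabla g)^2/(e_{k^\star}^\top\mathcal{I}\,e_{k^\star})$ need \emph{not} hold in general, so I would instead bound $(\nabla g)^\top\mathcal{I}^{-1}(\nabla g)\ge \|\mathcal{I}\|^{-1}\|\nabla g\|^2$ via the operator norm, or project onto the eigenspace of the smallest Fisher eigenvalue and show $\nabla g$ has a component there—whichever keeps the exponential gap intact.
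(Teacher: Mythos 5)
Your proposal rests on the same mechanism as the paper's proof: under the mean-field scaling the stationary distributions concentrate at separated fixed points $s_0^* < s_a^* < s_1^*$, and by the product-form/detailed-balance structure of the birth--death chain, the states where $\v{\pi}_0^{(K)}$ has its mass (and where the estimand is sensitive to the primitives) carry exponentially small mass under $\v{\pi}_a^{(K)}$ (where the data is collected). The paper executes this support-mismatch idea directly on the explicit formula for $\sigma^2_{\UB}$ in Theorem~\ref{thm: farias}: it keeps only the control term, uses detailed balance to show $\sum_k \pi_0^{(K)}(k)^2/\pi_a^{(K)}(k) \geq c\,\gamma^K$ over a window around $s_0^*$, and then bounds the remaining quadratic factors below by a constant. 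Where you differ is in how the two ingredients beyond concentration are handled, and both of those steps in your write-up have genuine problems.

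First, your matrix-to-scalar reduction is backwards. The per-coordinate bound you doubt, $(\nabla g)^\top\mathcal{I}^{-1}(\nabla g)\ge (e_{k^\star}^\top\nabla g)^2/(e_{k^\star}^\top\mathcal{I}\,e_{k^\star})$, is in fact valid for every positive definite $\mathcal{I}$: it is Cauchy--Schwarz in the $\mathcal{I}$-inner product, $(e_{k^\star}^\top\nabla g)^2\le(e_{k^\star}^\top\mathcal{I}e_{k^\star})\,\bigl((\nabla g)^\top\mathcal{I}^{-1}\nabla g\bigr)$; moreover in this model $\mathcal{I}$ is diagonal in the coordinates $\{p_z(k)\}$ (the trajectory log-likelihood separates into Bernoulli terms per visited state), so the reduction is immediate. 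By contrast, the substitute you commit to, $(\nabla g)^\top\mathcal{I}^{-1}(\nabla g)\ge\|\mathcal{I}\|^{-1}\|\nabla g\|^2$, cannot prove the theorem: by pigeonhole some state near $s_a^*$ has stationary mass at least $1/(K+1)$ with Bernoulli information bounded below, so $\|\mathcal{I}\|^{-1}=O(K)$, and $\|\nabla g\|^2$ is at most polynomial for this single-well chain; the product is polynomial, never exponential. Second, the step you defer---showing $\partial\,\GTE/\partial p_0(k^\star)$ is polynomially bounded below---is the real analytic content, not bookkeeping. That derivative is a difference of two terms of the same order (the direct effect $\pi_0(k^\star)$ and the induced shift of $\v{\pi}_0$, both $\Theta(1/\sqrt{K})$ near the fixed point), so cancellation must be ruled out; indeed the hypotheses allow $\b{\tau}$ to be flat near the fixed points (it is only required nondecreasing), in which case the fluid-limit sensitivity of $\rho_0$ to a uniform shift of $\v{p}_0$ is exactly zero, showing that non-cancellation is not automatic and depends delicately on the direction of perturbation. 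The paper sidesteps this entirely: in the formula of Theorem~\ref{thm: farias} the sensitivity appears as value-function increments rather than derivatives of $\GTE$, and Lemma~\ref{lem:valfn_deltas} plus an induction shows $v(\cdot,0|z)$ is strictly decreasing in the state, so every departure-then-booking transition contributes at least $(1-\rho_0)^2$ times a transition probability bounded below by a constant. Until you either import that value-function analysis or carry out the implicit-function computation with an explicit non-cancellation argument, your argument does not establish the exponential lower bound.
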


Informally, the proof takes advantage of the fact that as $K \to \infty$, the steady state distributions of the global control, global treatment, and experiment Markov chains concentrate on ``probable" states that are well separated from each other.  Unbiased estimation of $\GTE$ requires sufficient visits during the experiment to the ``probable" states under both global treatment and global control.  However, using the detailed balance conditions, we show that these probable states in the global control (resp., global treatment) chain are exponentially less probable (in $K$) in steady state under the experiment chain than in the global control (resp., global treatment) chain; we take advantage of this fact, together with additional technical arguments, to establish the theorem.  Note that the same theorem can be adapted to a strictly negative treatment, by changing the role of $\v{p}_0$ and $\v{p}_1$; we omit the details. 

Taken together, Theorems \ref{thm: pos bias DM}, \ref{thm: pos bias limit}, and \ref{thm:cr_expo_growth} indicate that if treatments are strictly sign-consistent, then the power of the test statistic using the DM estimator and the associated naive variance estimator will be much {\em higher} than the power of any test statistic using an unbiased estimator. We note that in Theorem \ref{thm:cr_expo_growth}, we focus on unbiased estimators that leverage data from a user-level Bernoulli randomized experiment.  In principle, one could consider instead a combination of a different design {\em and} estimator that are unbiased.  We conjecture that such an approach cannot avoid the exponential growth of variance in Theorem~\ref{thm:cr_expo_growth}.\footnote{Some evidence in this direction is that the nonparametric maximum likelihood estimation approach in \cite{glynn2020adaptive} can be applied even under more general designs than the CR experiment, and in the limit as $N \to \infty$, continues to achieve the same variance lower bound as in Theorem~\ref{thm:cr_expo_growth}.  A complete technical resolution of this issue is beyond the scope of the current paper.}

\paragraph{Numerics.}  We now briefly present numerical results to both illustrate the variance behavior of the Cram\'{e}r-Rao bound as the number of listings grows, and more importantly, to illustrate numerically the fact that statistical power is higher when using the naive decision-making approach via $\hat{T}_N$.  Our theoretical results are highly suggestive of a significant statistical power advantage for $\hat{T}_N$ over the unbiased estimation approach, and indeed our numerical results will illustrate this to be the case.

We start by noting that for any unbiased estimator $\hat{\theta}_N$, since $N \Var(\hat{\theta}_N) \geq \sigma^2_{\UB}$, we conclude that:
\begin{equation}
 |\hat{U}_N| \leq \frac{\sqrt{N} |\hat{\theta}_N|}{\sigma_{\UB}}.  \label{eq:ub_tstat_upper_bound}
 \end{equation}
We leverage this bound in our numerics to study the performance of unbiased estimation.  In particular, for the remainder of the section we assume that $\hat{\theta}_N$ is an optimal unbiased estimator, so it obeys the following central limit theorem:
\begin{equation}
\frac{\sqrt{N}(\hat{\theta}_N-\GTE)}{\sigma_{\UB}} \Rightarrow \mathcal{N}(0,1) \label{eq:unbiased_clt}
\end{equation}
as $N \to \infty$.  For general unbiased estimators, the limiting variance will be larger than $1$, since the true variance of $\hat{\theta}_N$ is lower bounded by $\sigma^2_{\UB}/N$.  Using this limit, it is straightforward to check that for any treatment such that $\v{p}_1 \neq \v{p}_0$, the power of the decision rule \eqref{eq:decision} when \eqref{eq:unbiased_clt} holds is asymptotically approximately:
\begin{equation}
\P\left(W > \Phi_{\alpha/2}-\frac{\sqrt{N}\GTE}{\sigma_{\UB}}\right) + \P\left(W < -\Phi_{\alpha/2}-\frac{\sqrt{N}\GTE}{\sigma_{\UB}}\right), \label{eq:ub_power}
\end{equation}
where $W$ is a standard normal random variable and $\Phi_{\alpha/2}$ is the upper $\alpha/2$-quantile of the standard normal distribution.  We use this calculation in our numerics below.

We consider a Bernoulli customer-randomized experiment with $K$-dependent arrival rate $\lambda^{(K)}=K\bar{\lambda}$, constant $K$-dependent death rates $\tau^{(K)}(k)= K\bar{\tau}(k/K)$ for all $k$, and $a = 0.5$. We set $\bar{\lambda}=1.5$ and $\bar{\tau}(x)=x$, so that $\tau^{(K)}(k) = k$; for further details on the numerics we refer the reader to Section \ref{app:sims} of the Appendix. We construct a $K$-dependent positive treatment using the logit booking probability model of \cite{johari2022experimental}. In particular, we suppose that a control customer receives value $v_0=0.5$ from booking a listing, and that a treatment customer receives value $v_1=v_0+\delta$ for booking a listing. All customers have outside option $\bar{\epsilon}$ for leaving the platform without booking a listing. If the customer arrives to the platform with $k$ listings booked, then the control and treatment booking probabilities respectively are given by:
\begin{align}
    p_0^{(K)}(k)&= \frac{(K-k)v_0}{\bar{\epsilon}+(K-k)v_0};      \label{eq: booking probs1}\\
    p_1^{(K)}(k)&= \frac{(K-k)v_1}{\bar{\epsilon}+(K-k)v_1}=\frac{(K-k)(v_0+\delta)}{\bar{\epsilon}+(K-k)(v_0+\delta)}.     \label{eq: booking probs2}
\end{align}
For the numerics in this section we fix $\bar{\epsilon}=1$ and $\delta=0.05$, which leads to control steady state booking probability $\rho_0 = 0.332$, treatment steady state booking probability $\rho_1 = 0.354$, and thus a $\GTE$ of $\rho_1 - \rho_0 = 0.022$. 
\begin{figure}[ht!]
  \centering
    \includegraphics[width=0.65\textwidth]{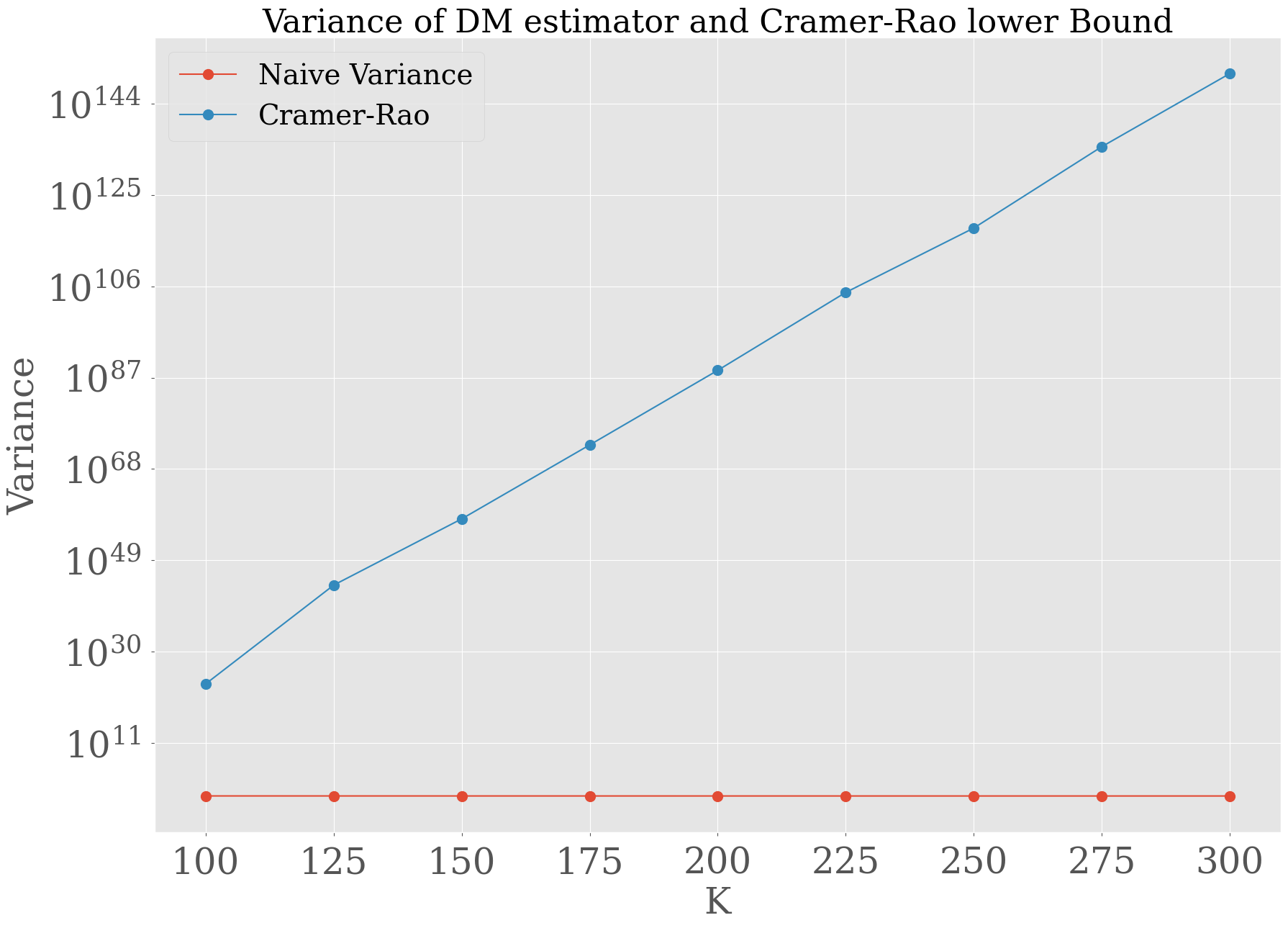}
    
    \caption{Variance behaviors for Bernoulli customer-randomized experiments as $K$ grows. We see that the Cram\'{e}r-Rao lower bound $\sigma^2_{\UB}$ (and thus the variance of any unbiased estimator) remains larger than the scaled limit of $\widehat{\Var}_N$. Parameters are $\bar{\lambda}=1.5$, $\bar{\tau}(x) = x$, $a = 0.5$ with treatment and control booking probabilities as specified in \eqref{eq: booking probs1}-\eqref{eq: booking probs2}. }
    \label{fig:variance plot K N}
\end{figure}

For the first set of numerics, we let $K$ vary from $100$ to $300$; in Appendix \ref{sec:sims var} we vary the other model parameters.  For each value of $K$, we compute the scaled limit of the naive variance estimator $\widehat{\Var}_N$ given in Theorem \ref{thm:var_est}, as well as the lower bound $\sigma^2_{\UB}$ given in Theorem \ref{thm: farias} in Appendix \ref{app:CR}.  Figure \ref{fig:variance plot K N} shows that as the number of listings grows, the bound $\sigma^2_{\UB}$ grows exponentially with $K$, consistent with Theorem \ref{thm:cr_expo_growth}; by contrast, the scaled limit of the naive variance estimator, given by $(1/a)V(1) + 1/(1-a)V(0)$ (cf.~Theorem \ref{thm:var_est}) remains bounded. In Appendix \ref{sec:sims var}, we also demonstrate that $\sigma^2_{\UB}$ remains larger than the scaled limit of the naive variance estimator even as we vary the other model parameters.

\begin{figure}[ht!]
  \centering
    \includegraphics[width=0.65\textwidth]{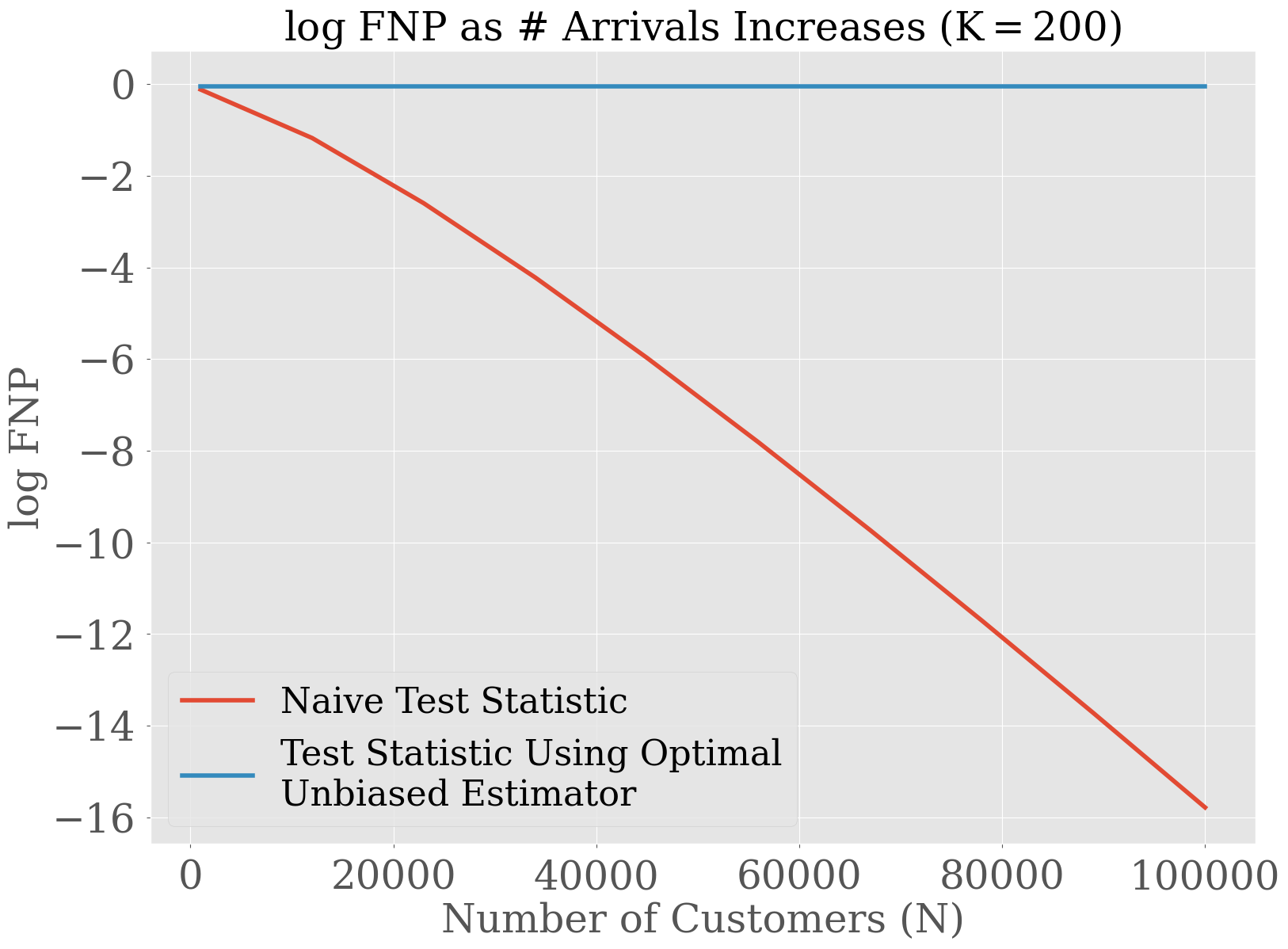}
    \caption{Log of false negative probability (FNP) in Bernoulli randomized user-level experiments with $\bar\lambda=1.5$, $\bar\tau(x) = x$ , $a = 0.5$, $K=200$, and $N$ growing. Treatment and control booking probabilities are specified as in \eqref{eq: booking probs1}-\eqref{eq: booking probs2} The log FNP of the test statistic $\hat{T}_N$ using the DM estimator $\DM$ and variance estimator $\widehat{\Var}$ decays faster than the log FNP of an unbiased test statistic (cf.~\eqref{eq:ub_power}) using an unbiased estimator that obeys \eqref{eq:unbiased_clt}.}
    \label{fig:power plot K N}
\end{figure}

For the second set of numerics, we fix $K=200$ and let $N$ vary from $10^3$ to $10^5$; in Appendix \ref{sec: sims power} we vary other model parameters. At each data point, we compute the false negative probability ($\FNP$) of the naive test statistic $\hat{T}_N$ at significance level $\alpha$ using the CLT from Theorem \ref{thm:ab_clt_cc}. We compare this to the $\FNP$ , obtained from the power calculation in \eqref{eq:ub_power}, for an unbiased estimator $\hat{\theta}_N$ that obeys \eqref{eq:unbiased_clt}.
Figure \ref{fig:power plot K N} shows the result: the $\FNP$ of $\hat{T}_N$ decays faster than the $\FNP$ for any unbiased test statistic, i.e., we obtain higher statistical power using a decision rule with $\hat{T}_N$.  This matches with the findings of Theorems \ref{thm: pos bias limit} and \ref{thm:cr_expo_growth}, which together suggest that the power of the naive test statistic $\hat{T}_N$ should be larger than $\hat{U}_N$.

As expected, these results confirm that the power of the naive test statistic $\hat{T}_N$ is {\em higher} than any test statistic based on an unbiased estimate of the treatment effect, with an associated unbiased variance estimator.  When combined with Theorem \ref{thm:aa_clt_cc}, we find that the decision maker is better off {\em not} developing a debiased estimator in the presence of interference, when interventions are sign-consistent: they earn the desired control over their false positive probability, and only stand to gain statistical power as a result.

We conclude by noting that in our analysis of statistical power, we have compared against an {\em unbiased} estimator.  In practice, perhaps a platform might be interested in partial debiasing, but not go to the extreme of a fully unbiased estimator, to avoid paying the variance penalty implied by Theorem \ref{thm:cr_expo_growth}.  For example, the recently introduced difference-in-Q's (DQ) estimator \cite{farias2022markovian} was shown to reduce variance significantly relative to unbiased estimators (e.g., the LSTD OPE estimator), with lower bias than the DM estimator.  However, in general, when interventions are sign-consistent, the DM estimator will continue to be larger in magnitude than such estimators (because it is more biased), and typically exhibits lower variance; indeed, this is see in the simulation results in Figure 4 of \cite{farias2022markovian}.  Thus we conjecture that even for such partially debiased estimation approaches, the naive decision-making approach will enjoy the same advantages described in the preceding paragraph.

\section{Sign-inconsistent treatments}
\label{sec:sims}

Our analysis thus far has shown that if treatments are sign-consistent, then even without debiasing, a platform is able to control false positive probability (Section ~\ref{sec:fpp}) while obtaining higher power than any debiased estimation approach (Section ~\ref{sec:power}).  In this section, we use numerics to investigate the behavior of false positive probability and statistical power when interventions are {\em sign-inconsistent.}

In general, if interventions are sign-inconsistent, it is possible for both quantities to behave arbitrarily worse using the na\"{i}ve decision-making approach based on the difference-in-means estimator, compared to the decision rule associated to an optimal unbiased estimator.  In particular, there exist examples where the false positive probability can become arbitrarily close to 1, instead of being controlled at the desired pre-specified level in the decision rule \eqref{eq:decision}; and there also exist examples where the statistical power remains bounded away from 1, regardless of how many samples are collected---in contrast to the performance of a decision rule that uses an unbiased estimator.

To illustrate these possibilities, we consider a natural class of interventions that are sign-inconsistent, where booking probabilities are {\em increased} in lower states (i.e., when many listings are available), and are {\em decreased} in larger states (i.e., when few listings are available).  These are natural interventions to consider from an operations standpoint.  For example, a ridesharing platform may be interested in understanding the impact on rides if prices are lowered relative to the status quo when ample driver supply is available, but raised relative to the status quo when driver supply is relatively tightly constrained.

Concretely, we construct two extreme examples of this form, to illustrate the potential worst-case consequences for false positive probability and statistical power, respectively.  In the first example, the true $\GTE = 0$, but $\ADE_a \neq 0$; in the second example, the true $\GTE \neq 0$, but $\ADE_a = 0$.  In both cases, sign-inconsistency of the treatment means that $\widehat{\DM}_N$ is asymptotically {\em biased} (since $\ADE_a \neq \GTE$); this bias is the source of the poor control of false positive probability (in the first example) and low statistical power (in the second example).

\begin{example}
\label{ex:ADE<0,GTE=0}
($\ADE_a < 0$ but $\GTE = 0$.) In the first example, we set $\bar\lambda = 1$, $\bar\tau = 1$, and $K = 100$.  We set control booking probabilities $\v{p}_0$ as follows:
\[ p_0(k)= 0.5 \text{ if } k < K, \text{ } p_0(K)=0. \]
(We note in passing that booking probabilities are not strictly decreasing, as in Assumption \ref{as:booking_prob_monotone}, but this is not essential; similar examples can be constructed even if booking probabilities are required to be strictly decreasing.)  We set treatment booking probabilities as follows:
\[ p_1(k)= \bar{p} \text{ if } k = 0,1; \text{ } p_1(k)=0.1 \text{ if } 2\leq k <K, \text{ } p_1(K)=0, \]
where $\bar{p} > 0.5$; we show how to choose $\bar{p}$ to ensure that $\GTE =0$.  In other words, when the system is nearly empty, the treatment raises booking probabilities relative to control.  Otherwise, the treatment lowers booking probabilities relative to control. 

To construct $\bar{p}$, we first note that it is straightforward to use the detailed-balance equations to check that a sufficient condition for $\GTE=0$ is:\footnote{Under detailed-balance with $\bar{\lambda}=\bar{\tau}=1,p_0(k)=0.5$, the global control equilibrium distribution is given by $\pi_0(k)=\frac{0.5^k}{1+0.5+\dots+0.5^K}$, so the global control booking probability is $\rho_0=\frac{0.5+0.5^2+\dots+0.5^K}{1+0.5+\dots+0.5^K}$. Similarly, the global treatment booking probability is given by $\rho_1=\frac{\bar{p}+\bar{p}^2+\bar{p}^20.1+\dots+\bar{p}^20.1^{K-2}}{1+\bar{p}+\bar{p}^2+\bar{p}^20.1+\dots+\bar{p}^20.1^{K-2}}$. We observe that $\rho_1=\rho_0$ if $1+\bar{p}+\bar{p}^2+\bar{p}^20.1+\dots+\bar{p}^20.1^{K-2}=1+0.5+\dots+0.5^K$.}
\[1+\bar{p}+\bar{p}^2\frac{1-0.1^{K-1}}{1-0.1}=\frac{1-0.5^{K+1}}{1-0.5}.\]
Letting $A_K=\frac{1-0.1^{K-1}}{0.9}$ and $B_K=\frac{1-0.5^{K+1}}{0.5}$, we see that $\bar{p}$ solves 
\[A_K \bar{p}^2+\bar{p}-(B_K-1)=0,\]
so in particular we can set
\[\bar{p}= \frac{-1 +\sqrt{1+4A_K(B_K-1)}}{2A_K}.\]
For $K=100$, we have $\bar{p} \approx 0.6$, and with these parameters and $a=1/2$ we have $\GTE=0$ yet $\ADE_a=-0.009<0$.\footnote{The detailed-balance equations with $\bar{\lambda}=\bar{\tau}=1,a=1/2$ imply that $\pi_a(k)=\frac{q_a(0)\dots q_a(k-1)}{1+q_a(0)+\dots+q_a(0)\dots q_a(K-1)}$. This implies $\ADE_a=\frac{(p_1(0)-p_0(0))+\sum_{k=1}^{K-1}(p_1(k)-p_0(k))q_a(0)\dots q_a(k-1)}{1+q_a(0)+\dots+q_a(0)\dots q_a(K-1)}$. For $p_0(k)=0.5,p_1(0)=p_1(1)=\bar{p}$ and $p_1(k)=0.1$ for $k>1$, we obtain $\ADE_a=-0.009<0$.}

We simulate a Bernoulli randomized experiment with $a = 1/2$ and $N \in [100,5000]$, with the remaining parameters specified in the previous paragraph. For each fixed set of parameters (ie. fixed $N$), we run $5\times 10^4$ trajectories of the experiment, and at each trajectory we form the test statistic \eqref{eq:tstat} using $\DM$ and $\widehat{\Var}$. We obtain the false positive probability for the naive test statistic by taking the average number of times $H_0$ is rejected under significance level $\alpha=0.05$ over the $5\times 10^4$ trajectories.

As argued above, the treatments constructed in Example \ref{ex:ADE<0,GTE=0} produce $\ADE_a = -0.009 < 0$ and $\GTE = 0$.  As a consequence, this is a treatment where the null hypothesis $H_0$ is satisfied.  (Note that $\GTE = 0$ despite the fact that the treatment and control chains are distinct; this is only possible because the treatment is sign-inconsistent, cf.~the discussion in Section ~\ref{sec:cc_error}.)  %
However, because $\ADE_a < 0$, the magnitude of the mean of the test statistic $\hat{T}_N$ (cf.~\eqref{eq:tstat}) grows without bound.  As a consequence, the false positive probability of the test statistic $\hat{T}_N$ will increase towards $1$ as $N \to \infty$, as suggested by Figure ~\ref{fig:nonmon fpp}.   Of course, a test statistic $\hat{U}_N$ using an unbiased estimator for $\GTE$ along with its true variance (cf.~\eqref{eq:ub_tstat}) should control the false positive probability correctly, as long as an appropriate central limit theorem holds.

\begin{figure}
    \centering
    \includegraphics[width=0.65\linewidth]{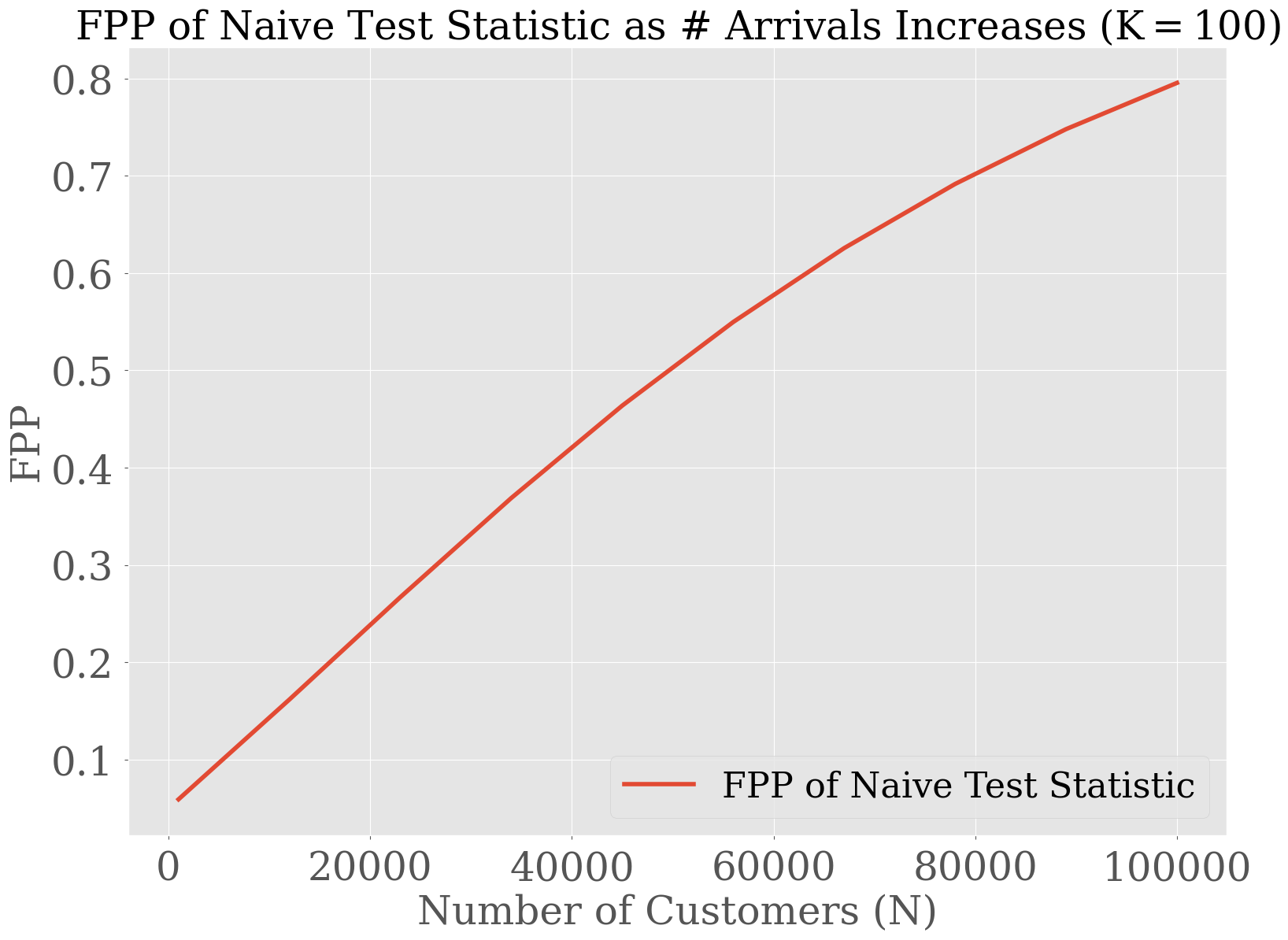}
    \caption{False positive probability of naive test statistic under a Bernoulli customer-randomized experiment with $a = 1/2$ and $N \in [10^3,10^5]$, with $K$, $\lambda$, $\tau(k)$, $\v{p}_0$, and $\v{p}_1$ specified in Example ~\ref{ex:ADE<0,GTE=0}. As $N$ grows, the false positive probability grows and will approach $1$. }
    \label{fig:nonmon fpp}
\end{figure}
\end{example}

\begin{example}
\label{ex:ADE=0}
($\ADE_a\approx 0$ but $\GTE > 0$.)
In the second example, we set $\bar\lambda = 1$, $\bar\tau = 1$, and $K = 30$.  We set control booking probabilities $\v{p}_0$ as follows:
\[ p_0(k)= 0.5 \text{ if } k < K;\ p_0(K)=0. \]
(Again note that booking probabilities are not strictly decreasing, as in Assumption \ref{as:booking_prob_monotone}, but again, this is not essential.)  We set treatment booking probabilities as follows:
\[ p_1(k)= 0.62 \text{ if } k = 0,1;\ p_1(k)=0.0745 \text{ if } 2\leq k <K; p_1(K)=0 . \]
Again, when the system is nearly empty, the treatment raises booking probabilities relative to control, and otherwise, the treatment lowers booking probabilities relative to control.  

We consider a Bernoulli randomized user-level experiment with $a = 1/2$ and $N \in [10^3,10^5]$, with the remaining parameters specified in the previous paragraph. For each fixed set of parameters (ie. fixed $N$) and significance level $\alpha=0.05$, we numerically compute the false positive probability of the test statistic \eqref{eq:tstat} using $\DM$ and $\widehat{\Var}$.

It is straightforward to verify using the detailed-balance conditions for the global treatment, global control, and experiment birth-death chains (cf.~\eqref{eq:generator}) that in this case $\GTE = 0.0087$.  As a consequence, this is a treatment where the null hypothesis $H_0$ is {\em not} satisfied.  On the other hand, $\GTE \gg \ADE_a \approx 0$; in fact, $\ADE_a = -7.0 \times 10^{-6}$.  As a consequence, the power under the decision rule \eqref{eq:decision} is significantly smaller than a test statistic $\hat{U}_N$ using an unbiased estimator for $\GTE$ that satisfies \eqref{eq:unbiased_clt}.  This effect is illustrated in Figure ~\ref{fig:nonmon power}.

\begin{figure}
    \centering
    \includegraphics[width=0.65\linewidth]{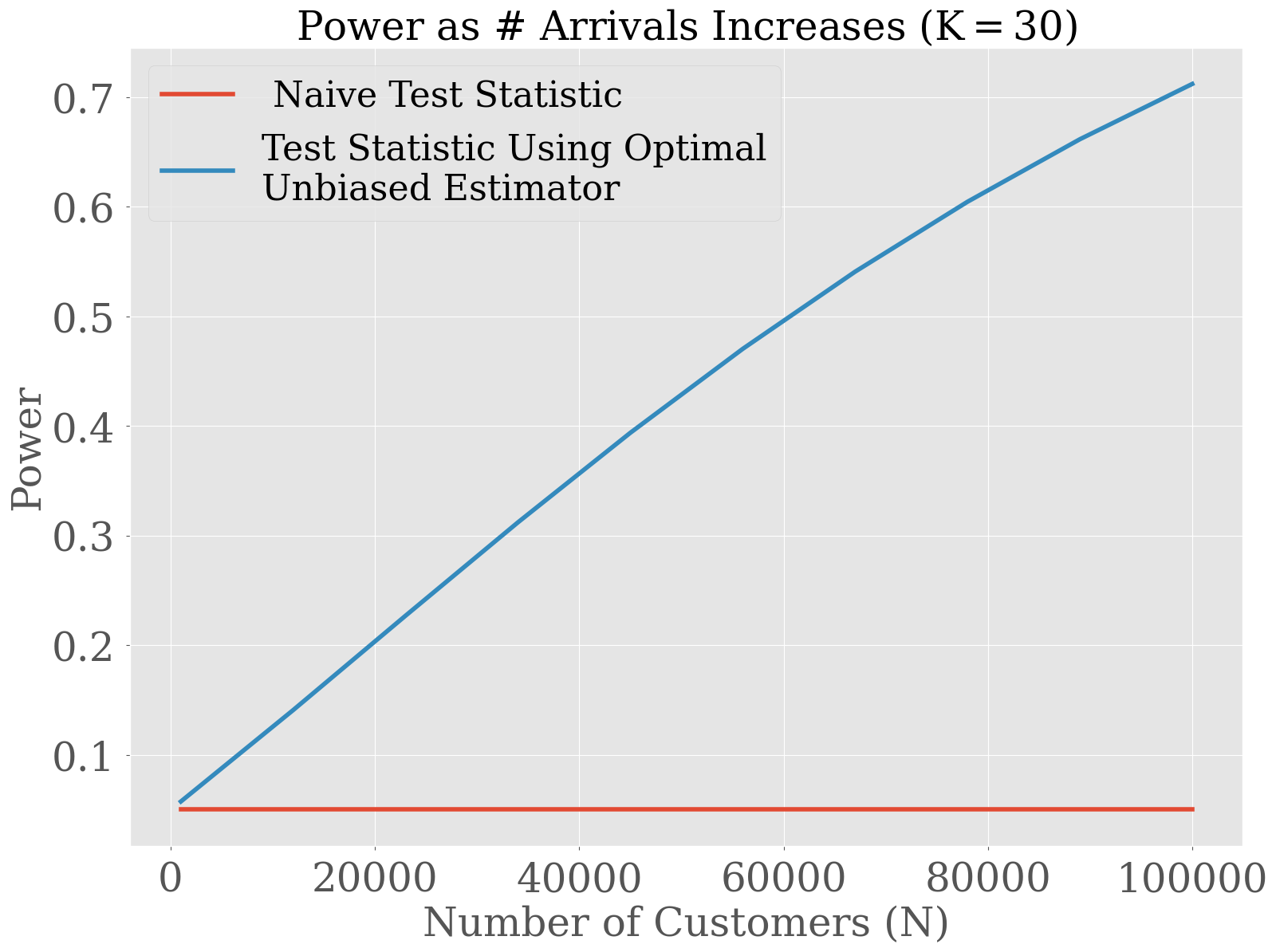}
    \caption{Power of naive test and unbiased test statistic under a Bernoulli randomized user-level experiment with $a=1/2$, $N \in [10^3,10^5]$,with $K$, $\bar\lambda$, $\bar\tau$, $\v{p}_0$, and $\v{p}_1$ specified in Example ~\ref{ex:ADE=0}. As $N$ grows, the power calculation in \eqref{eq:ub_power} (for any unbiased estimator that obeys \eqref{eq:unbiased_clt}) grows to $1$, while the power of the naive test statistic is much lower.}
    \label{fig:nonmon power}
\end{figure}
\end{example}

\section{Conclusion}
\label{sec:conclusion}

Using a benchmark Markov chain model for a two-sided platform, our results characterize the impact of interference on the false positive probability and statistical power when the experimenter uses na\"{i}ve estimation based on a t-test statistic.  We obtain the surprising finding that when treatments are sign-consistent in a Bernoulli randomized user-level experiment, the false positive probability is correctly controlled despite the presence of interference, and the statistical power is larger than that achieved by using an unbiased estimator.  In other words, in this setting the platform is actually better off {\em not} using a debiased estimator.  Despite these findings, as the numerics in Section \ref{sec:sims} suggest, if the treatment is sign-inconsistent then in the worst case, a debiased estimation method can offer significant improvements to control of false positive probability, and/or much higher statistical power, over the na\"{i}ve approach.  We highlight some significant remaining open directions below.

\paragraph{Listing heterogeneity.}  In our model in Section \ref{sec:cc}, customers are heterogeneous but listings are homogeneous.  It is natural to consider an extension of our model to a setting where listings are heterogeneous as well.  In such a setting, the appropriate Markov process representation requires a {\em multidimensional} state space: we must track the number of booked listings of {\em each} type.  Such a model is more complex to analyze, and in particular, the stochastic monotonicity structure we leverage (cf.~\cite{keilson1977monotone}) can no longer be applied.  One possible path forward might be to impose a customer choice model that governs substitution patterns {\em between} different types of listings.  For example, the model in \cite{johari2022experimental} assumes that arriving customers make booking choices according to a multinomial logit (MNL), where each customer type $\gamma$ has a valuation $v_\gamma(\delta)$ for listings of type $\delta$.  On one hand, if interventions are sign-consistent in this setting (e.g., if $v_\gamma(\delta)$ increases for all pairs $\gamma, \delta$), we might expect similar results as we found in Sections \ref{sec:fpp} and \ref{sec:power} in this paper using stochastic monotonicity arguments, at least in the mean field limit model of \cite{johari2022experimental}.  On the other hand, when treatment is sign-inconsistent across listing types, then we might see similar behavior as some of the examples in Section \ref{sec:sims}, e.g., if treatment increases the preference of customers for some types of listings while decreasing their interest in other types of listings.

\paragraph{Alternative experimental designs.} Our paper considers user-level Bernoulli randomized experiments, also referred to as customer-randomized (CR) experiments; it is natural to also consider whether similar results hold for listing-randomized (LR) experiments \cite{johari2022experimental}, where all arriving customers see a mix of treatment and control listings.  Given the generality of our results on A/A tests in Section \ref{subsec:aa_general}, it is reasonable to expect that with an appropriate definition of sign-consistent treatments for listing-randomized experiments, similar results as this paper could be obtained.  Investigation of the decision-making consequences of LR experiments, as well as other designs such as switchback experiments, remains an important direction for future work. (As one promising example, the recent paper \cite{ni2025decision} studies  decision-making from switchback experiments via robust optimization techniques.

\paragraph{Nonstationarity.}  In our model of an inventory-constrained platform, we have assumed that dynamics are {\em stationary}: the transition dynamics of the continuous time Markov process remain constant over time.  However, we conjecture that some of our insights can be extended to a setting where the environment may be nonstationary, with appropriate changes to the modeling framework.  As one example, suppose that (due to background nonstationarity) booking probabilities may depend on the customer; i.e., $p_{i,z}(k)$ is the booking probability of customer $i$ in treatment condition $z = 0,1$, if they arrive when $k$ listings are booked.  The estimand now needs to be defined {\em only within the experiment horizon}; in particular, let $\GTE_N$ be the difference in expected bookings in global treatment and global control, among the first $N$ arriving customers.  In this setting, our key observation is that the resulting platform dynamics still yield a stochastically monotone Markov process, cf.~\cite{keilson1977monotone}, allowing extension of our results that depend on stochastic monotonicity as well (Proposition \ref{pr:null_monotone}, and Theorems \ref{thm: pos bias DM} and \ref{thm: pos bias limit}).  Nevertheless, important issues remain challenging in this setting: notably, asymptotic control of the false positive probability as $N \to \infty$ would require additional assumptions on the nature of nonstationarity over an infinite horizon.  This remains an important direction for future investigation, particularly for practice.  (See, e.g., \cite{hu2022switchback}, \cite{li2023experimenting}, and \cite{johari2025estimation} for recent examples of papers on experimentation with explicit models of nonstationarity.)

\paragraph{Alternative objectives.}  More broadly, our paper has assumed a particular frequentist hypothesis testing decision-making pipeline, and in particular the use of this pipeline presumes the decision maker cares about false positive probability and statistical power.  In practice, there are many reasons this may not be the desired objective, in which case debiasing may be quite valuable.  The most obvious such case is when the platform cares about the actual value of $\GTE$ itself; this may be important if launching an intervention carries with it a significant cost, creating a tradeoff for the platform.  As an example, a ride-sharing platform may consider the launch of an incentive program for drivers.  Even if $\GTE > 0$, the platform may only be willing to launch if the cost of supporting such an incentive is not prohibitive relative to the true magnitude of $\GTE$.  In this case, debiasing is critical to understand the relative tradeoff between the treatment effect and the cost.  A similar situation arises if there are multiple target metrics of interest, and the decision maker faces tradeoffs between the impacts to these objectives.  For example, a platform testing an change in prices or fees may be interested not only in the change in bookings, but also the change in revenue or profit; recent work has studied the impact of naive experiments and estimation approaches on decision-making in such settings \cite{dhaouadi2023price}.

Indeed, in practice there are a wide range of potential objectives for a platform that uses A/B experimentation.  The primary lesson of our work is that the value of debiasing depends on both the desired inferential goal and decision objective, as well as the nature of the treatment itself.\footnote{In this vein, a recent paper similarly also argues that ``causal decision making'' and ``causal effect estimation'' are distinct goals, each appropriate in different settings \cite{fernandez2022causal}.} Our work serves provides a framework and essential insights for platforms to make choices regarding their inferential and decision-making pipeline in practice.

\section{Acknowledgments}

This work was supported in part by the National Science Foundation and Stanford Data Science.  We are grateful for helpful conversations with Mohsen Bayati, Peter Coles, Alex Deng, John Duchi, David Holtz, Guido Imbens, Samir Khan, Lihua Lei, Inessa Liskovich, Ruben Lobel, Nian Si, Johan Ugander, Stefan Wager, and Martin Wainwright, as well as comments and feedback from seminar participants at the INFORMS Annual Meeting, INFORMS Revenue Management and Pricing Conference, the MIT Conference on Digital Experimentation, the Harvard Business School Workshop on Experimentation and Evaluation in Operations, the Marketplace Innovation Workshop, the Stanford Causal Science Center Conference, and departmental seminars at Georgia Tech, Harvard, MIT, and Stanford.

\bibliographystyle{abbrv}
\bibliography{refs}

\begin{thebibliography}{10}

\bibitem{aronow2017estimating}
P.~M. Aronow and C.~Samii.
\newblock Estimating average causal effects under general interference, with
  application to a social network experiment.
\newblock {\em Annals of Applied Statistics}, pages 1912--1947, 2017.

\bibitem{asmussen2003applied}
S.~Asmussen.
\newblock {\em Applied probability and queues}, volume~2.
\newblock Springer, 2003.

\bibitem{athey2018exact}
S.~Athey, D.~Eckles, and G.~W. Imbens.
\newblock Exact p-values for network interference.
\newblock {\em Journal of the American Statistical Association},
  113(521):230--240, 2018.

\bibitem{azevedo2020b}
E.~M. Azevedo, A.~Deng, J.~L. Montiel~Olea, J.~Rao, and E.~G. Weyl.
\newblock {A/B} testing with fat tails.
\newblock {\em Journal of Political Economy}, 128(12):4614--000, 2020.

\bibitem{bajari2023experimental}
P.~Bajari, B.~Burdick, G.~W. Imbens, L.~Masoero, J.~McQueen, T.~S. Richardson,
  and I.~M. Rosen.
\newblock Experimental design in marketplaces.
\newblock {\em Statistical Science}, 38(3):458--476, 2023.

\bibitem{bernardo2009bayesian}
J.~M. Bernardo and A.~F. Smith.
\newblock {\em Bayesian theory}, volume 405.
\newblock John Wiley \& Sons, 2009.

\bibitem{blake2014marketplace}
T.~Blake and D.~Coey.
\newblock Why marketplace experimentation is harder than it seems: The role of
  test-control interference.
\newblock In {\em Proceedings of the fifteenth ACM conference on Economics and
  computation}, pages 567--582, 2014.

\bibitem{boutilier2024randomized}
J.~Boutilier, J.~O. Jonasson, H.~Li, and E.~Yoeli.
\newblock Randomized controlled trials of service interventions: The impact of
  capacity constraints.
\newblock {\em arXiv preprint arXiv:2407.21322}, 2024.

\bibitem{bright2022reducing}
I.~Bright, A.~Delarue, and I.~Lobel.
\newblock Reducing marketplace interference bias via shadow prices.
\newblock {\em arXiv preprint arXiv:2205.02274}, 2022.

\bibitem{chungromano2013}
E.~Chung and J.~P. Romano.
\newblock Exact and asymptotically robust permutation tests.
\newblock {\em The Annals of Statistics}, 41(2):484--507, 2013.

\bibitem{dhaouadi2023price}
W.~Dhaouadi, R.~Johari, and G.~Y. Weintraub.
\newblock Price experimentation and interference in online platforms.
\newblock {\em arXiv preprint arXiv:2310.17165}, 2023.

\bibitem{ding2024first}
P.~Ding.
\newblock {\em A first course in causal inference}.
\newblock CRC Press, 2024.

\bibitem{eckles2017design}
D.~Eckles, B.~Karrer, and J.~Ugander.
\newblock Design and analysis of experiments in networks: Reducing bias from
  interference.
\newblock {\em Journal of Causal Inference}, 5(1):20150021, 2017.

\bibitem{farias2022markovian}
V.~Farias, A.~Li, T.~Peng, and A.~Zheng.
\newblock Markovian interference in experiments.
\newblock {\em Advances in Neural Information Processing Systems}, 35:535--549,
  2022.

\bibitem{farias2023correcting}
V.~Farias, H.~Li, T.~Peng, X.~Ren, H.~Zhang, and A.~Zheng.
\newblock Correcting for interference in experiments: A case study at {D}ouyin.
\newblock In {\em Proceedings of the 17th ACM Conference on Recommender
  Systems}, pages 455--466, 2023.

\bibitem{feit2019test}
E.~M. Feit and R.~Berman.
\newblock Test \& roll: Profit-maximizing {A/B} tests.
\newblock {\em Marketing Science}, 38(6):1038--1058, 2019.

\bibitem{fernandez2022causal}
C.~Fernandez-Loria and F.~Provost.
\newblock Causal decision making and causal effect estimation are not the
  same… and why it matters.
\newblock {\em INFORMS Journal on Data Science}, 1(1):4--16, 2022.

\bibitem{fisher1971design}
R.~A. Fisher and R.~A. Fisher.
\newblock {\em The design of experiments}.
\newblock Springer, 1971.

\bibitem{glynn2020adaptive}
P.~W. Glynn, R.~Johari, and M.~Rasouli.
\newblock Adaptive experimental design with temporal interference: A maximum
  likelihood approach.
\newblock {\em Advances in Neural Information Processing Systems},
  33:15054--15064, 2020.

\bibitem{han2023detecting}
K.~Han, S.~Li, J.~Mao, and H.~Wu.
\newblock Detecting interference in online controlled experiments with
  increasing allocation.
\newblock In {\em Proceedings of the 29th ACM SIGKDD Conference on Knowledge
  Discovery and Data Mining}, pages 661--672, 2023.

\bibitem{hays2025double}
C.~Hays and M.~Raghavan.
\newblock Double machine learning for causal inference under shared-state
  interference.
\newblock {\em arXiv preprint arXiv:2504.08836}, 2025.

\bibitem{hernan2006estimating}
M.~A. Hern{\'a}n and J.~M. Robins.
\newblock Estimating causal effects from epidemiological data.
\newblock {\em Journal of Epidemiology \& Community Health}, 60(7):578--586,
  2006.

\bibitem{holtz2020reducing}
D.~Holtz, R.~Lobel, I.~Liskovich, and S.~Aral.
\newblock Reducing interference bias in online marketplace pricing experiments.
\newblock {\em arXiv preprint arXiv:2004.12489}, 2020.

\bibitem{howard2021time}
S.~R. Howard, A.~Ramdas, J.~McAuliffe, and J.~Sekhon.
\newblock Time-uniform, nonparametric, nonasymptotic confidence sequences.
\newblock {\em The Annals of Statistics}, 49(2):1055--1080, 2021.

\bibitem{hu2022average}
Y.~Hu, S.~Li, and S.~Wager.
\newblock Average direct and indirect causal effects under interference.
\newblock {\em Biometrika}, 109(4):1165--1172, 2022.

\bibitem{hu2022switchback}
Y.~Hu and S.~Wager.
\newblock Switchback experiments under geometric mixing.
\newblock {\em arXiv preprint arXiv:2209.00197}, 2022.

\bibitem{hu2023off}
Y.~Hu and S.~Wager.
\newblock Off-policy evaluation in partially observed markov decision processes
  under sequential ignorability.
\newblock {\em The Annals of Statistics}, 51(4):1561--1585, 2023.

\bibitem{Hudgens2008}
M.~G. Hudgens and M.~E. Halloran.
\newblock Toward causal inference with interference.
\newblock {\em Journal of the American Statistical Association},
  103(482):832--842, 2008.

\bibitem{imbens2015causal}
G.~W. Imbens and D.~B. Rubin.
\newblock {\em Causal inference in statistics, social, and biomedical
  sciences}.
\newblock Cambridge university press, 2015.

\bibitem{Johari2017}
R.~Johari, P.~Koomen, L.~Pekelis, and D.~Walsh.
\newblock Peeking at {A/B} tests: Why it matters, and what to do about it.
\newblock In {\em Proceedings of the 23rd ACM SIGKDD International Conference
  on Knowledge Discovery and Data Mining}, pages 1517--1525, 2017.

\bibitem{johari2022always}
R.~Johari, P.~Koomen, L.~Pekelis, and D.~Walsh.
\newblock Always valid inference: Continuous monitoring of {A/B} tests.
\newblock {\em Operations Research}, 70(3):1806--1821, 2022.

\bibitem{johari2022experimental}
R.~Johari, H.~Li, I.~Liskovich, and G.~Y. Weintraub.
\newblock Experimental design in two-sided platforms: An analysis of bias.
\newblock {\em Management Science}, 68(10):7069--7089, 2022.

\bibitem{johari2025estimation}
R.~Johari, T.~Peng, and W.~Xing.
\newblock Estimation of treatment effects under nonstationarity via truncated
  difference-in-q's.
\newblock {\em arXiv preprint arXiv:2506.05308}, 2025.

\bibitem{jones_markov_2004}
G.~L. Jones.
\newblock On the {Markov} chain central limit theorem.
\newblock {\em Probability Surveys}, 1(none):299--320, Jan. 2004.
\newblock Publisher: Institute of Mathematical Statistics and Bernoulli
  Society.

\bibitem{kallus2020double}
N.~Kallus and M.~Uehara.
\newblock Double reinforcement learning for efficient off-policy evaluation in
  markov decision processes.
\newblock {\em Journal of Machine Learning Research}, 21(167):1--63, 2020.

\bibitem{keilson1977monotone}
J.~Keilson and A.~Kester.
\newblock Monotone matrices and monotone markov processes.
\newblock {\em Stochastic Processes and their Applications}, 5(3):231--241,
  1977.

\bibitem{kohavi2024false}
R.~Kohavi and N.~Chen.
\newblock False positives in a/b tests.
\newblock In {\em Proceedings of the 30th ACM SIGKDD Conference on Knowledge
  Discovery and Data Mining}, pages 5240--5250, 2024.

\bibitem{kohavi2020trustworthy}
R.~Kohavi, D.~Tang, and Y.~Xu.
\newblock {\em Trustworthy online controlled experiments: A practical guide to
  a/b testing}.
\newblock Cambridge University Press, 2020.

\bibitem{lattimore2020bandit}
T.~Lattimore and C.~Szepesv{\'a}ri.
\newblock {\em Bandit algorithms}.
\newblock Cambridge University Press, 2020.

\bibitem{li2022interference}
H.~Li, G.~Zhao, R.~Johari, and G.~Y. Weintraub.
\newblock Interference, bias, and variance in two-sided marketplace
  experimentation: Guidance for platforms.
\newblock In {\em Proceedings of the ACM Web Conference 2022}, pages 182--192,
  2022.

\bibitem{li2023experimenting}
S.~Li, R.~Johari, X.~Kuang, and S.~Wager.
\newblock Experimenting under stochastic congestion.
\newblock {\em arXiv preprint arXiv:2302.12093}, 2023.

\bibitem{liao2021off}
P.~Liao, P.~Klasnja, and S.~Murphy.
\newblock Off-policy estimation of long-term average outcomes with applications
  to mobile health.
\newblock {\em Journal of the American Statistical Association},
  116(533):382--391, 2021.

\bibitem{lindley1981role}
D.~V. Lindley and M.~R. Novick.
\newblock The role of exchangeability in inference.
\newblock {\em The annals of statistics}, pages 45--58, 1981.

\bibitem{geteppoWhatTest}
R.~Lucht.
\newblock {W}hat is an {A}/{A} test? {F}ull {G}uide with {E}xamples ---
  geteppo.com.
\newblock \url{https://www.geteppo.com/blog/what-is-an-aa-test}.
\newblock [Accessed 06-09-2024].

\bibitem{Manski2013}
C.~F. Manski.
\newblock Identification of treatment response with social interactions.
\newblock {\em The Econometrics Journal}, 16:S1--S23, 2013.

\bibitem{mehrabi2024off}
M.~Mehrabi and S.~Wager.
\newblock Off-policy evaluation in markov decision processes under weak
  distributional overlap.
\newblock {\em arXiv preprint arXiv:2402.08201}, 2024.

\bibitem{munro2021treatment}
E.~Munro, S.~Wager, and K.~Xu.
\newblock Treatment effects in market equilibrium.
\newblock {\em arXiv preprint arXiv:2109.11647}, 2021.

\bibitem{neyman1923}
J.~Neyman.
\newblock On the application of probability theory to agricultural experiments:
  Essay on principles, section 9.
\newblock {\em Statistical Science}, 5(4):465--472, 1923.
\newblock Originally published in Polish in 1923; translated to English and
  reprinted in 1990.

\bibitem{ni2025decision}
T.~Ni.
\newblock Decision analytics of switchback experiments: A robust optimization
  approach.
\newblock {\em Available at SSRN 5245482}, 2025.

\bibitem{robins1992identifiability}
J.~M. Robins and S.~Greenland.
\newblock Identifiability and exchangeability for direct and indirect effects.
\newblock {\em Epidemiology}, 3(2):143--155, 1992.

\bibitem{roemheld2024interference}
L.~Roemheld and J.~Rao.
\newblock Interference produces false-positive pricing experiments.
\newblock {\em arXiv preprint arXiv:2402.14538}, 2024.

\bibitem{optimizelyTestWhat}
L.~Rogers.
\newblock {T}he {A}/{A} {T}est: {W}hat {Y}ou {N}eed to {K}now ---
  world.optimizely.com.
\newblock
  \url{https://world.optimizely.com/blogs/lindsey-rogers/dates/2024/4/the-aa-test-what-you-need-to-know/}.
\newblock [Accessed 04-09-2024].

\bibitem{romano1990}
J.~P. Romano.
\newblock On the behavior of randomization tests without a group invariance
  assumption.
\newblock {\em Journal of the American Statistical Association},
  85(411):686--692, 1990.

\bibitem{rosenbaum2007interference}
P.~R. Rosenbaum.
\newblock Interference between units in randomized experiments.
\newblock {\em Journal of the american statistical association},
  102(477):191--200, 2007.

\bibitem{rosner2006fundamentals}
B.~A. Rosner.
\newblock {\em Fundamentals of biostatistics}, volume~8.
\newblock Cengage, 2016.

\bibitem{saarela2023role}
O.~Saarela, D.~A. Stephens, and E.~E. Moodie.
\newblock The role of exchangeability in causal inference.
\newblock {\em Statistical Science}, 38(3):369--385, 2023.

\bibitem{schmit2019optimal}
S.~Schmit, V.~Shah, and R.~Johari.
\newblock Optimal testing in the experiment-rich regime.
\newblock In {\em The 22nd International Conference on Artificial Intelligence
  and Statistics}, pages 626--633. PMLR, 2019.

\bibitem{shirani2024causal}
S.~Shirani and M.~Bayati.
\newblock Causal message-passing for experiments with unknown and general
  network interference.
\newblock {\em Proceedings of the National Academy of Sciences},
  121(40):e2322232121, 2024.

\bibitem{ugander2013graph}
J.~Ugander, B.~Karrer, L.~Backstrom, and J.~Kleinberg.
\newblock Graph cluster randomization: Network exposure to multiple universes.
\newblock In {\em Proceedings of the 19th ACM SIGKDD international conference
  on Knowledge discovery and data mining}, pages 329--337, 2013.

\bibitem{statsigConductTest}
T.~VanHaren.
\newblock {H}ow to conduct an {A}/{A} test (and interpret the results!) ---
  statsig.com.
\newblock
  \url{https://www.statsig.com/blog/how-to-aa-test#:~:text=A%2FA%20testing%20is%20useful,running%20more%20effective%20experiment%20reviews.}
\newblock [Accessed 06-09-2024].

\bibitem{wager2024causal}
S.~Wager.
\newblock {\em Causal Inference: A Statistical Learning Approach}.
\newblock 2024.

\bibitem{wager2021experimenting}
S.~Wager and K.~Xu.
\newblock Experimenting in equilibrium.
\newblock {\em Management Science}, 67(11):6694--6715, 2021.

\bibitem{wei2019geometry}
Y.~Wei, M.~J. Wainwright, and A.~Guntuboyina.
\newblock The geometry of hypothesis testing over convex cones: Generalized
  likelihood ratio tests and minimax radii.
\newblock 2019.

\bibitem{wu2021randomization}
J.~Wu and P.~Ding.
\newblock Randomization tests for weak null hypotheses in randomized
  experiments.
\newblock {\em Journal of the American Statistical Association},
  116(536):1898--1913, 2021.

\end{thebibliography}

\appendix

\section{Experiments on Markov chains}
\label{app:MC}

In this self-contained appendix we consider a setting where the treatment and control data generating processes are Markov chains, with rewards also potentially affected by treatment.  We provide a central limit theorem for the DM estimator in the setting of general A/B tests, and then specialize our results to A/A tests.  Throughout this appendix, $\to^p$ denotes convergence in probability; $\Rightarrow$ denotes convergence in distribution; and $\mc{N}(m,V)$ denotes a normal random variable with mean $m$ and variance $V$.  We conclude by showing that the capacity constrained platform model studied in Section \ref{sec:cc} is a special case of this setting.

\subsection{Experiments and estimation on general Markov chains}
\label{app: MC experiment}
We first present a general model of A/B tests when the data generating processes in treatment and control are Markov chains.  We prove a corresponding central limit theorem for $\DM_N$, and also characterize asymptotics of $\widehat{\Var}_N$.

\paragraph{State.}  We consider homogeneous Markov chains that evolve on a finite state space $S$.  We let $X_1, X_2, \ldots \in S$ denote the sequence of states visited by the chain.  

\paragraph{Bernoulli treatment assignment.}  As in the main text, we let $Z_1, Z_2, \ldots$ be i.i.d.~Bernoulli($a$) random variables, with $0 < a < 1$.  

\paragraph{Markov property and transition probabilities.}  At each time step $i$, we consider an order of events where, given the state $X_i$ and the treatment assignment $Z_i$, the next state $X_{i+1}$ is realized.  

Formally, suppose that $X_1 = x_1, \ldots, X_{i-1} = x_{i-1}, X_i = x$, and $Z_1 = z_1, \ldots, Z_{i-1} = z_{i-1}, Z_i = z$.  The Markov property asserts that:
\begin{multline*}
\P(X_{i+1} = x' | X_i = x, Z_i = z) = \\
\P(X_{i+1} = x'| X_1 = x_1, \ldots, X_{i-1} = x_{i-1}, X_i = x; Z_1 = z_1, \ldots, Z_{i-1} = z_{i-1}, Z_i = z).
\end{multline*}
We write $P(x,x' | z) = \P(X_{i+1} = x' | X_i = x, Z_i = z)$.  Note that this quantity is independent of $i$, so the transition probabilities are stationary in time; in other words, the Markov chain is {\em homogeneous}.  Note that since treatment assignments are i.i.d.~Bernoulli($a$), the sequence $\{X_i\}$ is {\em also} a discrete-time Markov chain; through an abuse of notation, we represent its transition matrix as $P(\cdot,\cdot|a)$, and we have:
\begin{align*}
 P(x,x'|a) &= a \P(X_{i+1} = x' | X_i = x, Z_i = 1) + (1 - a) \P(X_{i+1} = x' | X_i = x, Z_i = 0) \\
& = a P(x,x' |1) + (1-a) P(x,x'|0). 
 \end{align*}

We assume for $z = 0,1$ that the transition matrix $P(\cdot, \cdot | z)$ is irreducible; therefore for $0 < a < 1$, $P(\cdot,\cdot | a)$ is also irreducible.  Since treatment assignments are i.i.d.~Bernoulli($a$), if we let $W_i = (X_i, Z_i)$, then the Markov chain $\{W_i\}$ then has the following transition matrix $R$:
\[ R(x,z; x',z') = \P(X_{i+1} = x', Z_{i+1} = z' | X_i = x, Z_i = z) = a^{z'}(1-a)^{1-z'} P(x,x' | z). \]
It then follows that as long as $0 < a < 1$, the matrix $R$ is irreducible on the state space $S \times \{0,1\}$.

\paragraph{Invariant distribution.}  The transition matrix $R$ possesses a unique invariant distribution $\v{\nu}_a$.  This distribution can be written as follows:
\[ \nu_a(x,z) = a^z (1 - a)^{1-z} \phi_a(x), \]
where $\phi_a(x | z)$ satisfies:
\[ \phi_a(x) = \sum_{x'} \sum_{z' = 0,1} \nu_a(x', z') P(x',x | z'). \]
It is straightforward to verify that $\v{\phi}_a$ is the invariant distribution of the transition matrix $P(\cdot,\cdot|a)$.
We use the subscript $a$ to indicate the treatment fraction.  Note that in steady state, the distribution of the state under $\v{\nu}_a$ is independent of the treatment assignment; this reflects the fact that treatment assignments are i.i.d.

Throughout our analysis, we write $\E_{\v{\nu}_a}$ for expectations when the initial state $(X_1, Z_1)$ is sampled from the invariant distribution $\v{\nu}_a$.  Except in the context of such expectations, we make no assumptions otherwise on the initial state of the chain.

For completeness, for $a = 0, 1$ we define $\v{\phi}_0, \v{\phi}_1$ to be the invariant distributions of the chain with transition matrix $P(\cdot,\cdot|0)$, $P(\cdot,\cdot|1)$, respectively, i.e., for $a = 0,1$, $\phi_a(x) = \sum_{x'} \phi_a(x') P(x', x | a)$.  
These are the invariant distributions for {\em global control} ($a = 0$) and {\em global treatment} ($a = 1$).

\paragraph{Reward function.}  We fix a reward function $g(x, z)$ that depends on the state and treatment indicator, and we let $Y_i = g(X_i, Z_i)$ for each $i$.  Although in principle, rewards may be random given the state, note that we can let $g$ be the expected reward in this case, so our definition is without loss of generality.

\paragraph{Estimand.}  The estimand of interest is the steady state difference in reward rate between global treatment and global control, i.e., the {\em global treatment effect}:
\begin{equation}
 \GTE = \sum_x \phi_1(x) g(x,1) - \sum_x \phi_0(x) g(x,0). \label{eq:gte_MC}
\end{equation}

\paragraph{Difference-in-means (DM) estimator and variance estimator.}  Both $\DM_N$ and $\widehat{\Var}_N$ are defined identically to the main text, cf.~\eqref{eq:DM} and \eqref{eq:var_hat}.  As we consider only asymptotic results as $N \to \infty$ in our analysis in this section, we implicitly condition all our analysis on the (almost sure) event that $N_{1N} > 0$ and $N_{0N} > N$.

\paragraph{Central limit theorem.}  We now state a central limit theorem for the difference-in-means estimator.  To state the theorem, we require some additional notation.  First define:
\[ \ADE_a = \sum_x \phi_a(x) g(x,1) -  \sum_x \phi_a(x) g(x, 0). \]
The quantity $\ADE_a$ is the {\em average direct effect} \citep{hu2022average}.  Since, in general, this steady state distribution neither matches global treatment nor global control, in general $\ADE_a \neq \GTE$.

Next, define the following quantities for $z,z' \in \{0,1\}$:
\begin{align*}
V(z) &= \E_{\v{\nu}_a} \left[ \left(g(X_1,z) - \sum_x \phi_a(x) g(x,z)\right)^2 \Bigg| Z_1 = z\right] \\
&= \sum_x \phi_a(x ) \left( g(x,z) - \sum_{x'} \phi_a(x') g(x',z) \right)^2\\
&= \Var_{\v{\nu}_a}(Y_1 | Z_1 = z);\\
C_j(z,z') &= \E_{\v{\nu}_a} \left[ \left(g(X_1,z) - \sum_x \phi_a(x) g(x,z)\right) \left(g(X_j,z') - \sum_x \phi_a(x) g(x,z')\right) \Bigg| Z_1 = z, Z_j = z'\right] \\
&= \Cov_{\v{\nu}_a}(Y_1, Y_j | Z_1 = z, Z_j = z'),
\end{align*}
where we recall that $Y_i = g(X_i, Z_i)$.  These quantities capture the conditional variance and covariance of rewards, respectively, given treatment assignments.  Observe that although the expectations are with respect to the steady state distribution of $\{W_t\}$, i.e., $\v{\nu}_a$, in fact $V(z)$ can be reduced to an expectation against the steady state distribution of $\{X_t\}$, i.e., $\v{\phi}_a$.  This is because the expectation only involves $X_1$, which is independent of $Z_1$.  The same is true for $X_1$ in the definition of $C_j(z,z')$; however, $X_j$ in the definition of $C_j$ {\em does} depend on the initial treatment assignment $Z_1$.

\begin{theorem}
\label{thm:ab_clt}
Suppose $0 < a < 1$.  Then regardless of the initial distribution, $\DM_N \to^p \ADE_a$ as $N \to \infty$, and $\DM_N$ obeys the following central limit theorem as $N \to \infty$:  
\begin{equation}
\label{eq:dm_clt}
\sqrt{N} ( \DM_N - \ADE_a )  \Rightarrow \mc{N}\left(0, \tilde{\sigma}_a^2\right),
\end{equation}
where:
\begin{equation}
\tilde{\sigma}_a^2 = \left(\frac{1}{1-a}\right) V(0) + \left( \frac{1}{a} \right) V(1) + 2 \sum_{j > 1} C_j(0,0) + C_j(1,1) - C_j(0,1) - C_j(1,0), \label{eq:tilde_sigma}
\end{equation} 
with $\tilde{\sigma}_a^2 > 0$.
\end{theorem}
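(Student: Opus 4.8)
The plan is to observe that the paired process $W_i=(X_i,Z_i)$ is itself a homogeneous, irreducible Markov chain on the finite space $S\times\{0,1\}$ (transition matrix $R$, invariant law $\v{\nu}_a$), so both assertions reduce to the ergodic theorem and the central limit theorem for additive functionals of a finite irreducible chain, after a linearization that removes the random denominators $N_1,N_0$. For consistency, write $\mu_z=\sum_x\phi_a(x)g(x,z)$ (so $\ADE_a=\mu_1-\mu_0$). The ergodic theorem for $\{W_i\}$ gives $\tfrac1N\sum_i Z_i g(X_i,1)\to^p\E_{\v{\nu}_a}[Z_1g(X_1,1)]=a\mu_1$ (using $Z_1\perp X_1$ under $\v{\nu}_a$) and $N_1/N\to^p a$, hence $\b{Y}(1)\to^p\mu_1$; symmetrically $\b{Y}(0)\to^p\mu_0$, so $\DM_N\to^p\mu_1-\mu_0=\ADE_a$. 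Because a finite irreducible chain mixes geometrically, this and all that follows hold regardless of the initial law.

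For the CLT, I would first linearize. Define the centered functions $h_1(x,z)=z\,(g(x,1)-\mu_1)$ and $h_0(x,z)=(1-z)(g(x,0)-\mu_0)$, and set $S_N^{(z)}=N^{-1/2}\sum_i h_z(W_i)$, so that $\sqrt{N}(\DM_N-\ADE_a)=\tfrac{N}{N_1}S_N^{(1)}-\tfrac{N}{N_0}S_N^{(0)}$. Both $h_1,h_0$ are $\v{\nu}_a$-centered, so the multivariate Markov-chain CLT makes $(S_N^{(1)},S_N^{(0)})$ asymptotically normal, hence $O_p(1)$; combining this with $N/N_1\to^p 1/a$, $N/N_0\to^p 1/(1-a)$ and Slutsky lets me replace the random coefficients by their limits up to an $o_p(1)$ error, giving $\sqrt{N}(\DM_N-\ADE_a)=N^{-1/2}\sum_i h(W_i)+o_p(1)$ with $h=\tfrac1a h_1-\tfrac1{1-a}h_0$.

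Next I apply the CLT for additive functionals of the irreducible finite chain $\{W_i\}$ to the centered $h$: $N^{-1/2}\sum_i h(W_i)\Rightarrow\mc{N}(0,\sigma_h^2)$ with $\sigma_h^2=\Var_{\v{\nu}_a}(h(W_1))+2\sum_{j>1}\Cov_{\v{\nu}_a}(h(W_1),h(W_j))$, the series converging absolutely by geometric mixing. It then remains to match $\sigma_h^2$ with $\tilde{\sigma}_a^2$, which is pure bookkeeping. For the variance term, $z^2=z$, $(1-z)^2=1-z$, $z(1-z)=0$ and $Z_1\perp X_1$ give $\Var_{\v{\nu}_a}(h(W_1))=\tfrac1a V(1)+\tfrac1{1-a}V(0)$. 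For each cross term I expand $h(W_1)h(W_j)$ into its four products; since $Z_1Z_j=\mathbf{1}\{Z_1=1,Z_j=1\}$ (and analogously for the other three sign patterns), and the assignments are independent so that $\P(Z_1=1,Z_j=1)=a^2$, $\P(Z_1=1,Z_j=0)=\P(Z_1=0,Z_j=1)=a(1-a)$, and $\P(Z_1=0,Z_j=0)=(1-a)^2$, the prefactors $1/a^2$, $-1/(a(1-a))$, $1/(1-a)^2$ cancel these probabilities and leave exactly $C_j(0,0)+C_j(1,1)-C_j(0,1)-C_j(1,0)$, matching \eqref{eq:tilde_sigma}. (Here the definition of $C_j(z,z')$ already integrates out the intermediate assignments $Z_2,\dots,Z_{j-1}$, which is precisely what conditioning on only $Z_1,Z_j$ encodes.)

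Finally, positivity, which I expect to be the main obstacle. Via the martingale approximation (solving Poisson's equation $\hat\psi-R\hat\psi=h$ on $S\times\{0,1\}$), one has $\sigma_h^2=\E_{\v{\nu}_a}[\Var(\hat\psi(W_2)\mid W_1)]$; conditioning further on $X_2$ and using that $Z_2$ is a fresh $\mathrm{Bernoulli}(a)$ draw independent of the past yields $\sigma_h^2\geq a(1-a)\,\E_{\v{\phi}_a}\big[(\hat\psi(\cdot,1)-\hat\psi(\cdot,0))^2\big]\geq 0$, with strict inequality unless $h$ is a coboundary. Ruling out that degenerate case is the delicate point. I would argue it cannot occur when the rewards are non-degenerate; in particular, in the platform application of Theorem \ref{thm:ab_clt_cc} the booking outcomes carry genuine Bernoulli noise, so $V(0),V(1)>0$ and $\tilde{\sigma}_a^2>0$ follows directly from the variance term alone.
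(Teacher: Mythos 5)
Your consistency argument and CLT derivation are correct and take essentially the same route as the paper's proof: both dispose of the random denominators $N_1,N_0$ via Slutsky after establishing the centered sums are $O_p(1)$, both then invoke the Markov-chain CLT for a centered additive functional of the pair chain $W_i=(X_i,Z_i)$, and the covariance bookkeeping is identical (your observation that conditioning on $(Z_1,Z_j)$ alone integrates out the intermediate assignments, with $\P(Z_1=z,Z_j=z')$ cancelling the prefactors $1/a^2$, $1/(1-a)^2$, $-1/(a(1-a))$, is exactly the paper's computation). The only difference is cosmetic: the paper proves joint normality of the two component sums via the Cram\'{e}r--Wold device and then applies the linear map $(-1,1)^\top$, whereas you fold the limiting coefficients $1/a$ and $-1/(1-a)$ into a single functional $h$ before applying the univariate CLT.

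The genuine gap is your final positivity step. The inference ``$V(0),V(1)>0$, hence $\tilde{\sigma}_a^2>0$ follows directly from the variance term alone'' is a non sequitur: the covariance series in \eqref{eq:tilde_sigma} carries no sign control and can cancel the variance terms exactly --- this is precisely the coboundary degeneracy you flag one sentence earlier, and positive one-step variance never excludes a coboundary. In fact, under the theorem's stated hypotheses (irreducibility only, so periodic chains are allowed), take $P(\cdot,\cdot|1)=P(\cdot,\cdot|0)$ to be a deterministic cycle, and set $g(x,1)-\mu_1=a\bigl(\psi(x)-\psi(x^+)\bigr)$ and $g(x,0)-\mu_0=-(1-a)\bigl(\psi(x)-\psi(x^+)\bigr)$ for a nonconstant $\psi$, where $x^+$ is the successor of $x$. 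Then $V(0),V(1)>0$, yet $h(x,z)=\psi(x)-\psi(x^+)$ telescopes along the (deterministic) trajectory, so $\sqrt{N}(\DM_N-\ADE_a)\to^p 0$: the limit is degenerate. So no argument from $V(0),V(1)>0$ alone can succeed; positivity genuinely requires structure beyond the theorem's hypotheses. (To be fair, the paper's own proof is silent on positivity --- it is asserted in the statement and never established --- so you attempted strictly more than the paper does.)

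The correct completion, in the platform application where positivity is actually needed, is the martingale-increment bound you already set up, applied to the right source of fresh randomness. In the chain of Appendix \ref{app:MC_cc} the state is $(k,y,z)$; conditional on the next pre-arrival occupancy $k'$ and the fresh assignment $z'$, the outcome $y'$ is Bernoulli$(p_{z'}(k'))$, so the law of total variance gives $\tilde{\sigma}_a^2 \geq \E\bigl[\,p_{z'}(k')\bigl(1-p_{z'}(k')\bigr)\bigl(\hat\psi(k',1,z')-\hat\psi(k',0,z')\bigr)^2\bigr]$, with $k'\sim\v{\pi}_a$ and $z'$ an independent Bernoulli$(a)$. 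Poisson's equation shows $\hat\psi(k',1,z')-\hat\psi(k',0,z')=\frac{z'}{a}-\frac{1-z'}{1-a}+F(k'+1)-F(k')$, where $F$ depends only on the post-booking occupancy; since switching $z'$ shifts this quantity by $1/a+1/(1-a)>0$, it cannot vanish for both arms, and positivity follows because every state has positive stationary mass and interior states have $p_{z'}(k')\in(0,1)$. That is the theorem your ``genuine Bernoulli noise'' intuition is reaching for, but it runs through the value function, not through $V(0)$ and $V(1)$.
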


\begin{proof}
We first prove the convergence in probability of $\DM_N$.  Note that $N(1-a)/N_{0N} \to^p 1$, and $Na/N_{1N} \to^p 1$.  Thus the ergodic theorem for Markov chains establishes that:
\[ \b{Y}(0) \to^p \left(\frac{1}{1-a}\right) \E_{\v{\nu}_a}[ (1-Z_1) Y_1];\ \ \b{Y}(1) \to^p \left(\frac{1}{a}\right) \E_{\v{\nu}_a}[ Z_1 Y_1]. \]
Note that:
\[ \E_{\v{\nu}_a}[ (1 - Z_1) Y_1] = (1-a) \E_{\v{\nu}_a}[ g(X_1, 0) | Z_1 = 0] = (1-a) \sum_x \phi_a(x) g(x,0);\]
similarly, $\E_{\v{\nu}_a}[ Z_1 Y_1] = a \sum_x \phi_a(x)g(x,1)$.  We conclude that:
\begin{equation}
\b{Y}(0) \to^p \sum_x \phi_a(x ) g(x,0); \ \ \b{Y}(1) \to^p \sum_x \phi_a(x)g(x,1),\label{eq:conv_of_means}
\end{equation}
so that $\DM_N \to^p \ADE_a$ as $N \to \infty$.

For $z \in \{0,1\}$ and $x \in S$, define:
\begin{align*}
 h_1(x,z) &= \left(\frac{z}{a}\right) \left(g(x,1) - \sum_{x'} \phi_a(x') g(x',1)\right); \\
 h_0(x,z) &= \left(\frac{1-z}{1-a}\right) \left(g(x,0) - \sum_{x'} \phi_a(x') g(x',0)\right).
\end{align*}
Observe that $h_1(x,0) = h_0(x,1) = 0$.  Further, observe that:
\[ \E_{\v{\nu}_a}[h_1(W_1)] = a \E_{\v{\nu}_a}[h_1(X_1, 1) | Z_1 = 1]
 =  \E_{\v{\nu}_a}[g(X_1, 1) | Z_1 = 1] - \sum_x \phi_a(x) g(x,1) = 0; \]
 similarly, $\E_{\v{\nu}_a}[h_0(W_1)] = 0$.

For fixed, arbitrary $t_0, t_1 \in \reals$, define $h(z,x) = t_0 h_0(x,z) + t_1 h_1(x,z)$.  Since $\E_{\v{\nu}_a}[h_1(W_1)] = \E_{\v{\nu}_a}[h_0(W_1)] = 0$, we have 
$\E_{\v{\nu}_a}[h(W_1)] = 0$ as well.

The standard central limit theorem for Markov chains (see, e.g., \cite{jones_markov_2004}, Theorem 1) thus implies that:
\[ \frac{1}{\sqrt{N}} \sum_{i = 1}^N h(W_i) \Rightarrow \mc{N}(0, \gamma(t_0, t_1)^2), \]
where:
\[ \gamma(t_0, t_1)^2 = \E_{\v{\nu}_a}[h(W_1)^2] + 2 \sum_{j > 1} \E_{\v{\nu}_a}[h(W_1)h(W_j)].\]

Straightforward algebra gives the following identities:
\begin{align*}
\E_{\v{\nu}_a}[h(W_1)^2] & = t_0^2 \E_{\v{\nu}_a}[ h_0(W_1)^2] + t_1^2 \E_{\v{\nu}_a}[ h_1(W_1)^2] \\
&\ \ + 2t_0 t_1 \E_{\v{\nu}_a}[ h_0(W_1) h_1(W_1)];\\
\E_{\v{\nu}_a}[h(W_1)h(W_j)] & = t_0^2 \E_{\v{\nu}_a}[ h_0(W_1) h_0(W_j) ] + t_1^2 \E_{\v{\nu}_a}[ h_1(W_1) h_1(W_j)] \\
&\ \ + t_0t_1 \E_{\v{\nu}_a}[ h_0(W_1)h_1(W_j) + h_1(W_1)h_0(W_j)].
\end{align*}

Observe that for any realization of $W_1 = (X_1, Z_1)$, we have $h_0(W_1) h_1(W_1) = 0$, since either $Z_1 = 0$ or $1 - Z_1 = 0$.  Further, observe that by conditioning on the realization of $Z_1$, we have:
\begin{align*}
\E_{\v{\nu}_a}[ h_0(W_1)^2] &= \E_{\v{\nu}_a}\left[ \left(\frac{1-Z_1}{1-a}\right)^2 \left(g(X_1,0) - \sum_x \phi_a(x) g(x,0)\right)^2 \right] \\
&= \left( \frac{1}{1-a} \right) \E_{\v{\nu}_a} \left[ \left(g(X_1,0) - \sum_x \phi_a(x) g(x,0)\right)^2 \Bigg| Z_1 = 0\right] \\
&=  \left( \frac{1}{1-a} \right) V(0).
\end{align*}
Similarly, we have:
\begin{align*}
\E_{\v{\nu}_a}[ h_0(W_1) h_0(W_j)] &= \E_{\v{\nu}_a}\Bigg[ \left(\frac{(1-Z_1)(1-Z_j)}{(1-a)^2}\right) \\
& \qquad \qquad \cdot \left(g(X_1,0) - \sum_x \phi_a(x) g(x,0)\right)  \left(g(X_j,0) - \sum_x \phi_a(x) g(x,0)\right)\Bigg] \\
&= \E_{\v{\nu}_a} \left[ \left(g(X_1,0) - \sum_x \phi_a(x) g(x,0)\right) \left(g(X_j,0) - \sum_x \phi_a(x) g(x,0)\right) \Bigg| Z_1 = 0, Z_j = 0\right] \\
&=  C_j(0,0).
\end{align*}

Using similar arguments, we arrive at the following identities:
\begin{align*}
\E_{\v{\nu}_a}[h(W_1)^2] &= \left( \frac{t_0^2}{1-a}\right)  V(0) + \left( \frac{t_1^2}{a} \right) V(1) ;\\
\E_{\v{\nu}_a}[h(W_1)h(W_j)] &= t_0^2 C_j(0,0) + t_1^2 C_j(1,1) + t_0 t_1(C_j(0,1) + C_j(1,0)).
\end{align*}

Now let $(U_0, U_1)$ be a jointly normally distributed random pair, with mean zero, and covariance matrix:
\[ \v{\Sigma} = \left( \begin{array}{cc} 
  V(0)/(1-a) + 2 \sum_{j > 1} C_j(0,0) & \sum_{j > 1} C_j(0,1) + C_j(1,0) \\
 \sum_{j > 1} C_j(0,1) + C_j(1,0) &  V(1)/a + 2 \sum_{j > 1} C_j(1,1)
\end{array} \right). \]
We have shown that:
\[ t_0 \left( \frac{1}{\sqrt{N}} \sum_{i = 1}^N h_0(W_i)\right) + t_1 \left( \frac{1}{\sqrt{N}} \sum_{i = 1}^N h_1(W_i) \right) \Rightarrow t_0 U_0 + t_1 U_1.\]
Using the Cram\'{e}r-Wold device, we conclude that: 
\[ \left( \frac{1}{\sqrt{N}} \sum_{i = 1}^N h_0(W_i), \frac{1}{\sqrt{N}} \sum_{i = 1}^N h_1(W_i) \right) \Rightarrow \mc{N}(\v{0}, \v{\Sigma}).\]

Define:
\[ \v{\Gamma}_N = \left( \begin{array}{c} 
\frac{(N(1-a))}{N_{0N}} \cdot \frac{1}{N} \sum_{i = 1}^N   h_0(W_i)  \\
 \frac{Na}{N_{1N}} \cdot \frac{1}{N} \sum_{i = 1}^N h_1(W_i)
 \end{array} \right)
 = \left( \begin{array}{c} 
\frac{1}{N_{0N}} \sum_{i = 1}^N (1 - Z_i) \left(g(X_i, 0) - \sum_{x'} \phi_a(x') g(x',0) \right)    \\
 \frac{1}{N_{1N}} \sum_{i = 1}^N Z_i \left(g(X_i, 1) - \sum_{x'} \phi_a(x') g(x',1) \right) 
\end{array} \right). \]
Since $N(1-a)/N_{0N} \to^p 1$, and $Na/N_{1N} \to^p 1$, by Slutsky's theorem we have:
\[ \sqrt{N} \v{\Gamma}_N \Rightarrow \mc{N}(\v{0}, \v{\Sigma}).  \]

Now observe that: 
\begin{align*}
(-1, 1)^\top \v{\Gamma}_N &= \frac{1}{N_{1N}} \sum_{i = 1}^N Z_i g(X_i, 1) - \frac{1}{N_{0N}} \sum_{i = 1}^N (1 - Z_i) g(X_i, 0) \\
&\ \ - \left( \sum_{x'} \phi_a(x') g(x',1) - \sum_{x'} \phi_a(x') g(x',0)  \right) \\
&= \DM_N - \ADE_a,
\end{align*}
since $Z_i g(X_i, 1) = Z_i Y_i$ and $(1 - Z_i) g(X_i, 0) = (1- Z_i) Y_i$.  Straightforward algebra shows that $\tilde{\sigma}^2 = (-1, 1)^\top \Sigma (-1, 1)$.  Thus we have shown \eqref{eq:dm_clt}.
\end{proof}

\paragraph{Limit of variance estimator.}  The following theorem gives the scaled limit of $\widehat{\Var}$.

\begin{theorem}
\label{thm:var_est}
Suppose $0 < a < 1$.  Then $\widehat{\Var}_N$ satisfies
\begin{equation}
N \widehat{\Var}_N \to^p \left(\frac{1}{a}\right)V(1) + \left(\frac{1}{1-a}\right) V(0).
\end{equation}
\end{theorem}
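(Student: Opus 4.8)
The plan is to rewrite $N\widehat{\Var}_N$ as a sum of two terms, each a product of a scaling factor with a deterministic limit and an empirical average to which the ergodic theorem for the Markov chain $\{W_i\} = \{(X_i, Z_i)\}$ applies, and then to combine the pieces using the continuous mapping theorem and Slutsky's theorem. Concretely, I would write
\[ N \widehat{\Var}_N = \frac{N^2}{N_{1N}(N_{1N}-1)} \cdot \frac{1}{N}\sum_i Z_i(Y_i - \b{Y}(1))^2 + \frac{N^2}{N_{0N}(N_{0N}-1)} \cdot \frac{1}{N}\sum_i (1-Z_i)(Y_i - \b{Y}(0))^2, \]
and treat the two terms symmetrically; I describe the treatment-group term below, the control term being identical with $a$ replaced by $1-a$.

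For the scaling factor, recall from the proof of Theorem \ref{thm:ab_clt} that $Na/N_{1N} \to^p 1$, equivalently $N_{1N}/N \to^p a$; since also $(N_{1N}-1)/N \to^p a$, the continuous mapping theorem (the reciprocal is continuous away from $0$, and $a > 0$) gives $N^2/(N_{1N}(N_{1N}-1)) \to^p 1/a^2$. For the empirical average, I would expand the square as $(Y_i - \b{Y}(1))^2 = Y_i^2 - 2\b{Y}(1)Y_i + \b{Y}(1)^2$, so that
\[ \frac{1}{N}\sum_i Z_i(Y_i-\b{Y}(1))^2 = \frac{1}{N}\sum_i Z_iY_i^2 - 2\b{Y}(1)\,\frac{1}{N}\sum_i Z_iY_i + \b{Y}(1)^2\,\frac{1}{N}\sum_i Z_i. \]
Because the state space is finite and rewards are bounded, the ergodic theorem applied to $\{W_i\}$ (initialized arbitrarily, with invariant law $\v{\nu}_a$) gives $\tfrac1N\sum_i Z_iY_i^2 \to^p \E_{\v{\nu}_a}[Z_1Y_1^2] = a\sum_x \phi_a(x)g(x,1)^2$, $\tfrac1N\sum_i Z_iY_i \to^p a\mu_1$ with $\mu_1 := \sum_x\phi_a(x)g(x,1)$, and $\tfrac1N\sum_i Z_i\to^p a$. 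Combining these with $\b{Y}(1)\to^p\mu_1$ (established in the proof of Theorem \ref{thm:ab_clt}) via Slutsky's theorem, the empirical average converges in probability to $a\big(\sum_x\phi_a(x)g(x,1)^2 - \mu_1^2\big) = aV(1)$, where the last equality uses $\sum_x\phi_a(x)g(x,1)^2 - \mu_1^2 = \sum_x\phi_a(x)(g(x,1)-\mu_1)^2 = V(1)$.

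Multiplying the two limits, the treatment term converges in probability to $(1/a^2)\cdot aV(1) = V(1)/a$, and symmetrically the control term to $V(0)/(1-a)$; adding them yields the claimed limit. All steps are routine; the only points requiring care are the bookkeeping with the random counts $N_{1N}, N_{0N}$ and the $N_{1N}-1$ factor (handled by the law of large numbers for $N_{1N}/N$ together with continuous mapping) and the legitimacy of plugging in the random centering $\b{Y}(1)$, which is justified because $\b{Y}(1)$ converges to the deterministic constant $\mu_1$ and all the empirical averages converge jointly, so repeated application of Slutsky's theorem applies. I do not anticipate a substantive obstacle beyond this bookkeeping.
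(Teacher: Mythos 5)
Your proposal is correct and follows essentially the same route as the paper's own proof: both decompose the within-group sum of squares (your three-term expansion is algebraically identical to the paper's identity $\sum_i Z_i(Y_i - \b{Y}(1))^2 = \sum_i Z_i Y_i^2 - N_{1N}\b{Y}(1)^2$), apply the ergodic theorem for the chain $\{W_i\}$ together with $\b{Y}(1) \to^p \sum_x \phi_a(x) g(x,1)$ from the proof of Theorem \ref{thm:ab_clt}, and handle the random counts $N_{1N}, N_{0N}$ via the law of large numbers and Slutsky's theorem. No gaps.
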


Note that in comparison to the true variance $\tilde{\sigma}^2$, the estimator misses all the covariance terms.  Whether this is an overestimate or underestimate of the true variance depends on whether the within-group covariances are stronger are weaker than the across-group covariances, cf.~\eqref{eq:tilde_sigma}.

\begin{proof}
We observe using a standard calculation that:
\[ \sum_{i = 1}^N Z_i ( Y_i - \b{Y}(1))^2 = \sum_{i = 1}^N Z_i Y_i^2  - N_{1N} \b{Y}(1)^2. \]
From \eqref{eq:conv_of_means}, we know that $\b{Y}(1) \to^p \sum_x \phi_a(x) g(x,1).$  We know $N_{1N}/(Na) \to^p 1$ as $N \to \infty$, so by the ergodic theorem, we have:
\[ \frac{1}{N_{1N}} \sum_{i = 1}^N Z_i Y_i^2 \to^p \frac{1}{a} \E_{\v{\nu}_a}[ Z_1 Y_1^2]. \]
We have:
\[ \E_{\v{\nu}_a}[Z_1 Y_1^2] = a \E_{\v{\nu}_a}[ g(X_1, 1)^2| Z_1 = 1] = a \sum_x \phi_a(x) g(x,1)^2.\]
Finally, note that $N/(N_{1N} - 1) \to^p 1/a$ as $N \to \infty$ as well.  Combining all these facts, we conclude that:
\[ \frac{N}{(N_{1N} - 1)N_{1N}} \sum_{i = 1}^N Z_i ( Y_i - \b{Y}(1))^2 \to^p \frac{1}{a} V(1).\]
An analogous calculation follows for the second term of $\widehat{\Var}_N$, establishing the claimed convergence of $N \widehat{\Var}_N$.
\end{proof}

Note that the preceding result implies the t-test statistic centered at $\ADE_N$ given by  $(\DM_N - \ADE_N)/\sqrt{\widehat{\Var}_N}$ converges in distribution to a normal random variable with zero mean, and variance:
\[ \frac{ V(0)/(1-a) + V(1)/a)}{V(0)/(1-a) + V(1)/a + 2 \sum_{j > 1} C_j(0,0) + C_j(1,1) - C_j(0,1) - C_j(1,0).}\]

\subsection{A/A experiments of Markov chains}
\label{app:MC_aa}

In this section, we apply our preceding results to {\em A/A experiments} in a Markov chain setting.  These are experiments where both the treatment and control chains are identical, the treatment assignment is completely independent of the Markov chain evolution; and where the rewards do not depend on treatment assignment.  See also Section \ref{subsec:aa_general} for further discussion of A/A experiments.

Formally, we suppose that that there is a single irreducible transition matrix $P$ such that $P(x,x'|0) = P(x,x'|1) = P$; let $\v{\phi}$ be the invariant distribution of this matrix.  This makes the evolution of $\{X_i\}$ independent of the realization of the treatment assignments $\{Z_i\}$.  It follows that for all $a$, $0 \leq a \leq 1$, we have $\v{\phi}_a = \v{\phi}$; and thus:
\[ \nu_a(x,z) = a^z (1 - a)^{1-z} \phi(x). \]
We also assume that there is a single reward function $g$ such that $g(x,0) = g(x,1) = g(x)$.  Note that the resulting reward sequence $\{ Y_i \}$ with $Y_i = g(X_i)$ is thus also independent of $\{Z_i\}$.

It follows from these definitions that for an A/A test, $\GTE = \ADE_a = 0$.  We note also that with these definitions, we have:
\begin{equation}
 V = V(0) = V(1) = \sum_x \phi(x) \left( g(x) - \sum_{x'} \phi(x') g(x')\right)^2. \label{eq:V}
\end{equation}
Further, since all treatment assignments are independent of the states $(X_1, X_j)$, we have:
\begin{align*}
 C_j &= C_j(0,0) = C_j(1,1) = C_j(0,1) = C_j(1,0)\\
&= \E_{\v{\phi}}\left[ \left( g(X_1) - \sum_x \phi(x)  g(x) \right) \left( g(X_j) - \sum_x \phi(x)  g(x) \right)\right],
\end{align*}
where the notation $\E_{\v{\phi}}$ denotes that we take expectations with respect to the chain where the initial state $X_1$ is sampled from $\v{\phi}$.  Combining Theorems \ref{thm:ab_clt} and \ref{thm:var_est} with these observations, we have the following result.

\begin{theorem}
\label{thm:aa_clt}
Suppose $0 < a < 1$.  For an A/A experiment, $\DM_N$ obeys the following central limit theorem as $N \to \infty$:  
\begin{equation}
\sqrt{N} \DM_N \Rightarrow \mc{N}\left(0, \left(\frac{1}{a} + \frac{1}{1-a}\right) V\right)
\end{equation}
where $V$ is defined as in \eqref{eq:V}.  In addition, $\widehat{\Var}_N$ satisfies
\begin{equation}
N \widehat{\Var}_N \to^p \left(\frac{1}{a} + \frac{1}{1-a}\right) V.
\end{equation}

In particular, the t-test statistic converges in distribution to a standard normal random variable:
\[ \frac{\DM_N}{\sqrt{\widehat{\Var}_N}} \Rightarrow \mc{N}(0,1).\]
\end{theorem}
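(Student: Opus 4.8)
The plan is to obtain Theorem \ref{thm:aa_clt} as a direct specialization of the two general Markov-chain results established earlier in this appendix, namely Theorem \ref{thm:ab_clt} (the CLT for $\DM_N$) and Theorem \ref{thm:var_est} (the probability limit of $\widehat{\Var}_N$), followed by one application of Slutsky's theorem to pass from these two limits to the limit of the t-statistic. Essentially all the work is to evaluate the general expressions in the A/A regime, where the single transition matrix $P$, the single reward function $g$, and the independence of $\{Z_i\}$ from $\{X_i\}$ collapse several of the quantities appearing in those theorems.

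First I would invoke Theorem \ref{thm:ab_clt}. In the A/A setting $\v{\phi}_a = \v{\phi}$ for all $a$ and $g(x,0) = g(x,1) = g(x)$, so $\ADE_a = \sum_x \phi(x) g(x) - \sum_x \phi(x) g(x) = 0$; the centering term therefore vanishes and Theorem \ref{thm:ab_clt} yields $\sqrt{N}\DM_N \Rightarrow \mc{N}(0, \tilde{\sigma}_a^2)$. It then remains to simplify $\tilde{\sigma}_a^2$ from \eqref{eq:tilde_sigma}. Since the reward does not depend on treatment, $V(0) = V(1) = V$ as in \eqref{eq:V}, so the first two terms combine to $(1/a + 1/(1-a))V$. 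The crucial observation is the cancellation of the covariance terms: because $\{Z_i\}$ is independent of the state trajectory, the conditional covariances $C_j(z,z')$ are the same for every $(z,z')$, i.e.\ $C_j(0,0) = C_j(1,1) = C_j(0,1) = C_j(1,0) =: C_j$, so that $C_j(0,0) + C_j(1,1) - C_j(0,1) - C_j(1,0) = 0$ for each $j$. This leaves $\tilde{\sigma}_a^2 = (1/a + 1/(1-a))V$, which is exactly the claimed limiting variance and establishes the first display.

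Next I would apply Theorem \ref{thm:var_est}, which gives $N\widehat{\Var}_N \to^p (1/a)V(1) + (1/(1-a))V(0)$; substituting $V(0) = V(1) = V$ produces $N\widehat{\Var}_N \to^p (1/a + 1/(1-a))V$, matching the limiting variance of $\sqrt{N}\DM_N$ term for term. This matching is the structural heart of the result: the naive estimator $\widehat{\Var}_N$, which omits all covariance contributions, recovers \emph{exactly} the asymptotic variance of $\DM_N$, precisely because the within- and across-group covariance terms it ignores cancel in the limit.

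Finally, for the t-statistic I would write $\DM_N/\sqrt{\widehat{\Var}_N} = (\sqrt{N}\DM_N)/\sqrt{N\widehat{\Var}_N}$ and apply Slutsky's theorem: the numerator converges in distribution to $\mc{N}(0,s^2)$ with $s^2 = (1/a + 1/(1-a))V$, while by the continuous mapping theorem the denominator converges in probability to $s = \sqrt{s^2}$, so the ratio converges in distribution to $\mc{N}(0,s^2)/s = \mc{N}(0,1)$. The only point requiring care is that $s > 0$, so that the division is legitimate; this is guaranteed because Theorem \ref{thm:ab_clt} asserts $\tilde{\sigma}_a^2 > 0$, and here $\tilde{\sigma}_a^2 = s^2$. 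I do not anticipate any genuine obstacle in this argument, since it is a specialization of results already proved; the one step most worth stating explicitly is the covariance cancellation of the previous paragraph, since it is what forces the asymptotic variance of $\DM_N$ to coincide with the probability limit of the naive estimator $\widehat{\Var}_N$.
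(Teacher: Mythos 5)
Your proposal is correct and follows essentially the same route as the paper: the paper also obtains Theorem \ref{thm:aa_clt} by specializing Theorems \ref{thm:ab_clt} and \ref{thm:var_est} to the A/A setting, noting that $\ADE_a = 0$, that $V(0)=V(1)=V$, and that independence of $\{Z_i\}$ from the state trajectory forces $C_j(0,0)=C_j(1,1)=C_j(0,1)=C_j(1,0)$ so the covariance terms in \eqref{eq:tilde_sigma} cancel, after which the t-statistic limit follows by Slutsky's theorem. Your explicit treatment of the Slutsky step and the positivity $\tilde{\sigma}_a^2>0$ is slightly more detailed than the paper's (which leaves this implicit), but the argument is the same.
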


This result, when applied to general A/A tests in a Markov chain environment, can be used to demonstrate control of the false positive probability.  In particular, suppose for each $N$ that the null hypothesis $\tilde{H}_0$ is defined as in \eqref{eq:aa_H0}, i.e., the null hypothesis assumes that the observations $Y_1, \ldots, Y_N$ are independent of $Z_1, \ldots, Z_N$.  Further suppose the test statistic $\hat{T}_N$ is defined as in \eqref{eq:tstat}, and the decision rule \eqref{eq:decision} with the null hypothesis is $\tilde{H}_0$ is followed.

The corresponding false positive probability is the chance that $\tilde{H}_0$ is mistakenly rejected when it is true.   Formally, this is:
\begin{equation}
\label{eq:FPP_aa} 
\FPP_N = \P(\hat{T}_N > \Phi_{\alpha/2} | \tilde{H}_0),
\end{equation}
where the distribution of observations comes from the Markov chain and rewards defined by $P$ and $g$ respectively, with a fixed initial distribution.  Note that $\FPP_N$ in \eqref{eq:FPP_aa} is well defined: Given that $Y_1, \ldots, Y_N$ are generated in this way, and $Z_1, \ldots, Z_N$ are i.i.d., as well as independent of $Y_1, \ldots, Y_N$ under $\tilde{H}_0$, the distribution of $\hat{T}_N$ is completely determined.

Theorem \ref{thm:aa_clt} then implies the following corollary.
\begin{corollary}
\label{cor:aa_fpp_limit}
Suppose $0 < a < 1$.  Then for an A/A experiment, as $N \to \infty$, $\FPP_N \to \alpha$, where $\FPP_N$ is defined as in \eqref{eq:FPP_aa}.
\end{corollary}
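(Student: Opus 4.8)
The plan is to read the result off directly from Theorem \ref{thm:aa_clt}, which already supplies the only nontrivial ingredient: the weak convergence $\hat{T}_N \Rightarrow \mc{N}(0,1)$ for an A/A experiment. Since the decision rule \eqref{eq:decision} rejects precisely when $|\hat{T}_N| > \Phi_{\alpha/2}$, the false positive probability is $\FPP_N = \P(|\hat{T}_N| > \Phi_{\alpha/2})$, and the task reduces to passing the limit $N \to \infty$ through this tail event.

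First I would fix $W \sim \mc{N}(0,1)$ and observe that the rejection region $A = \{t \in \reals : |t| > \Phi_{\alpha/2}\}$ is open, with boundary $\partial A = \{\pm \Phi_{\alpha/2}\}$. Because the standard normal law has a density, $\P(W \in \partial A) = 0$, so $A$ is a continuity set of the limit distribution. The portmanteau theorem then lets me transfer the weak convergence to this specific event:
\[ \FPP_N = \P(\hat{T}_N \in A) \longrightarrow \P(W \in A) = \P(|W| > \Phi_{\alpha/2}). \]

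Finally I would evaluate the right-hand side using the definition of $\Phi_{\alpha/2}$ as the upper $\alpha/2$-quantile, i.e.\ $\P(W \geq \Phi_{\alpha/2}) = \alpha/2$, together with the symmetry of the standard normal about zero:
\[ \P(|W| > \Phi_{\alpha/2}) = \P(W > \Phi_{\alpha/2}) + \P(W < -\Phi_{\alpha/2}) = \frac{\alpha}{2} + \frac{\alpha}{2} = \alpha. \]
Combining the two displays yields $\FPP_N \to \alpha$. There is no genuine obstacle here: all the probabilistic content is already packaged in Theorem \ref{thm:aa_clt} (which in turn rests on Theorems \ref{thm:ab_clt} and \ref{thm:var_est} and Slutsky's theorem), and the continuity-set hypothesis that legitimizes the limit exchange is automatic since the limiting law is absolutely continuous. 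The only point worth flagging is the minor notational slippage in \eqref{eq:FPP_aa}, where $\FPP_N$ is written with the one-sided event $\{\hat{T}_N > \Phi_{\alpha/2}\}$; I would read this as the two-sided rejection event $\{|\hat{T}_N| > \Phi_{\alpha/2}\}$ dictated by the decision rule \eqref{eq:decision}, which is precisely what produces the target level $\alpha$ rather than $\alpha/2$.
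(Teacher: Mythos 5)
Your proof is correct and follows exactly the route the paper intends: the paper simply states that Theorem \ref{thm:aa_clt} implies the corollary, and your argument (portmanteau theorem applied to the continuity set $\{|t| > \Phi_{\alpha/2}\}$, plus symmetry of the standard normal) is precisely the standard deduction left implicit there. You are also right that the one-sided event in \eqref{eq:FPP_aa} is a typo, and reading it as the two-sided rejection event of \eqref{eq:decision} is the intended interpretation, consistent with the analogous definition \eqref{eq:FPP} in the main text.
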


The preceding result shows that for general A/A experiments in Markov chain environments, the false positive probability of the decision rule \eqref{eq:decision} is (asymptotically) controlled at level $\alpha$, as intended.

\subsection{Cram\'{e}r-Rao bound for A/B experiments of Markov chains}
\label{app:CR}

In this section we present a result of \cite{farias2022markovian}, which uses the multivariate Cram\'{e}r-Rao bound to provide a lower bound on the variance of any unbiased $\GTE$ estimator for an A/B experiment between two Markov chains.  The same paper provides a least squares temporal difference (LSTD)-based unbiased estimator that matches this lower bound.  \cite{glynn2020adaptive} present a nonparametric maximum likelihood estimator which also acheives the Cram\'{e}r-Rao lower bound.

As in Section \ref{sec: asymptotic power}, in the general Markov chain setting we consider here, an estimator $\hat{\theta}_N$ is a real-valued functional of the observations $Y_1, \ldots, Y_N$ generated from a Bernoulli randomized experiment.  We say that $\hat{\theta}_N$ is an {\em unbiased estimator} if for all feasible values of system parameters (i.e., $S$, $P$, $g$) and the design parameter $a$, there holds $\E[\hat{\theta}_N] = \GTE$, where $\GTE$ is defined as in \eqref{eq:gte_MC}.

\begin{theorem}[\cite{farias2022markovian}]
\label{thm: farias}
   Let $\hat{\theta}_N$ be an unbiased estimator for $\GTE$ for all feasible values of system parameters (i.e., $S$, $P$, $g$) and the design parameter $a$. Then $\Var(\hat{\theta}_N)$ satisfies 
\begin{align}
    N \Var(\hat{\theta}_N) &\geq \frac{1}{a}\sum_{x,x'}\frac{\phi_1(x)^2}{\phi_a(x)}P(x,x'|1)(v(x'|1)-v(x|1)+g(x,1) - \rho_1)^2 \notag\\
    &+\frac{1}{1-a}\sum_{x,x'}\frac{\phi_0(x)^2}{\phi_a(x)}P(x,x'|0)(v(x'|0)-v(x|0)+g(x,0)) - \rho_0)^2:= \sigma^2_{\UB}. \label{eq:cramer_rao_bound}
\end{align}
where for $z = 0,1$:
\[ \rho_z = \sum_x \phi_z(x) g(x,z) \]
and:
\begin{equation}
\label{eq:valuefn}
 v(x|z) = \sum_{t = 1}^\infty \E\left[ g(X_t, Z_t) - \rho_z\ \Bigg|\ X_1 = x, Z_t = z\ \text{for all}\ t \right].
\end{equation}
\end{theorem}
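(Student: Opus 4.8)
The plan is to derive the bound as a multivariate Cram\'{e}r--Rao inequality for the parametric family indexed by the two transition matrices, with the reward function $g$ and the design parameter $a$ held fixed and known. The free parameters are the entries $\{P(x,x'\mid z)\}$ for $z = 0,1$, subject to the constraint that each row is a probability vector. The first step is to write the log-likelihood of an observed trajectory $(X_1,Z_1),\dots,(X_N,Z_N)$. Since the treatments are i.i.d.\ Bernoulli$(a)$ and independent of the chain, they contribute a term carrying no information about $P$, so up to the initial-state term the log-likelihood is $\sum_{i=1}^{N-1}\log P(X_i,X_{i+1}\mid Z_i)$. Because the score for transitions out of state $x$ under treatment $z$ involves only the row $P(x,\cdot\mid z)$, the Fisher information factorizes into blocks indexed by pairs $(x,z)$, with orthogonal blocks for distinct pairs. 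Each block is the Fisher information of a single categorical draw from $P(x,\cdot\mid z)$, accumulated over the expected number of visits to $(x,z)$, which by the ergodic theorem is asymptotically $N\,\nu_a(x,z) = N\,a^z(1-a)^{1-z}\phi_a(x)$.

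The second step is to compute the gradient of $\GTE = \rho_1 - \rho_0$ with respect to the transition entries. Here I would invoke the standard sensitivity formula for average-reward chains: perturbing $P(\cdot,\cdot\mid z)$ along a zero-row-sum direction $\Delta$ changes $\rho_z = \sum_x \phi_z(x) g(x,z)$ at rate $\sum_x \phi_z(x)\sum_{x'}\Delta(x,x')\,v(x'\mid z)$, where $v(\cdot\mid z)$ is the differential value function of \eqref{eq:valuefn}, characterized by the Poisson equation $v(x\mid z) = g(x,z) - \rho_z + \sum_{x'}P(x,x'\mid z)\,v(x'\mid z)$. Thus $\partial\rho_z/\partial P(x,x'\mid z) = \phi_z(x)\,v(x'\mid z)$ in the tangent-space sense, so the gradient of $\GTE$ in block $(x,z)$ is $d_{x,z}(\cdot) = \phi_z(x)\,v(\cdot\mid z)$ up to sign.

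The third step is to assemble the block-diagonal bound. Writing $I_{x,z} = \nu_a(x,z)\,F_{x,z}$ for the per-sample block information, the generalized Cram\'{e}r--Rao inequality gives $N\,\Var(\hat\theta_N)\ge \sum_{(x,z)} d_{x,z}^\top I_{x,z}^{-1} d_{x,z}$, where $F_{x,z}^{-1}$ acting on tangent vectors returns the categorical covariance $\mathrm{diag}(p) - pp^\top$. For block $(x,z)$ this yields $\phi_z(x)^2/\big(a^z(1-a)^{1-z}\phi_a(x)\big)$ times $\mathrm{Var}_{x'\sim P(x,\cdot\mid z)}\!\big(v(x'\mid z)\big)$. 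Finally, substituting $\sum_{x'}P(x,x'\mid z)\,v(x'\mid z) = v(x\mid z) - g(x,z) + \rho_z$ from the Poisson equation rewrites each conditional variance as $\sum_{x'}P(x,x'\mid z)\big(v(x'\mid z) - v(x\mid z) + g(x,z) - \rho_z\big)^2$, and summing the $z=1$ and $z=0$ contributions with weights $1/a$ and $1/(1-a)$ recovers \eqref{eq:cramer_rao_bound} exactly.

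The main obstacle is the degeneracy from the simplex constraints: the per-row categorical Fisher information is singular along the all-ones direction, so the ordinary Cram\'{e}r--Rao inequality does not apply verbatim. I would handle this with the constrained version of the bound, restricting all perturbations to the tangent space of zero-row-sum directions and using a generalized (Moore--Penrose) inverse; the point that makes this clean is that $d_{x,z} = \phi_z(x)\,v(\cdot\mid z)$ already lies in this tangent space once one notes that only differences of $v$ enter, consistent with $v$ being defined only up to an additive constant. A secondary technical point is the passage from the asymptotic visitation frequencies $\nu_a(x,z)$ to an exact finite-$N$ statement, which requires either initializing the chain in stationarity while accounting for (or discarding as a known law) the initial-state term, or interpreting the claim as an information-rate bound as $N\to\infty$. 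I expect the constraint and generalized-inverse bookkeeping to be where the most care is needed.
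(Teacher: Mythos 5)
This statement is not proved in the paper at all: it is Theorem 3 of \cite{farias2022markovian}, imported verbatim (with the remark, in a footnote, that the $a=1/2$ result there extends to general $a$), so there is no in-paper proof to compare yours against. Judged on its own terms, your sketch is the standard—and, as far as the cited reference goes, essentially the same—argument: parametrize by the transition rows $P(x,\cdot\mid z)$ with $g$ and $a$ known; use the conditional-mean-zero (martingale) property of the per-transition scores to kill all cross terms, so the trajectory Fisher information is block-diagonal across row/treatment pairs with block $(x,z)$ weighted by the expected visit count $\approx N\,a^z(1-a)^{1-z}\phi_a(x)$; compute the gradient of $\rho_z$ via the stationary-reward sensitivity identity $\dot\rho_z=\sum_x\phi_z(x)\sum_{x'}\Delta(x,x')v(x'\mid z)$ (a consequence of Poisson's equation); and apply the constrained Cram\'{e}r--Rao bound with the Moore--Penrose inverse on the zero-row-sum tangent space, where the pseudo-inverse of $\mathrm{diag}(1/p)$ is $\mathrm{diag}(p)-pp^\top$ and hence each block contributes a conditional variance of $v(\cdot\mid z)$, which Poisson's equation rewrites exactly as $\sum_{x'}P(x,x'\mid z)\bigl(v(x'\mid z)-v(x\mid z)+g(x,z)-\rho_z\bigr)^2$. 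The algebra you describe does recover \eqref{eq:cramer_rao_bound}, including the $1/a$ and $1/(1-a)$ weights, and your observation that the gradient lies in the tangent space (only differences of $v$ enter) is the right resolution of the simplex degeneracy.

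The one genuine loose end—which you flag but do not close—is the passage from expected visit counts to the exact finite-$N$ inequality $N\,\Var(\hat{\theta}_N)\ge\sigma^2_{\UB}$ claimed in the statement. This can be closed cleanly as follows: initialize the chain at the stationary law of the \emph{true} parameter but model the initial distribution as a separate nuisance parameter (or a fixed known law), so the initial-state term carries no information about $P$ and has vanishing cross-information with the transition blocks; then the expected number of transitions out of $(x,z)$ is exactly $(N-1)\nu_a(x,z)$, and the block-diagonal Cram\'{e}r--Rao computation yields $(N-1)\Var(\hat{\theta}_N)\ge\sigma^2_{\UB}$, which implies the stated bound. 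Some care is needed to reconcile this with the hypothesis that $\hat{\theta}_N$ is unbiased across all feasible parameters (unbiasedness must hold on a neighborhood of laws in which the initial distribution does not co-vary with $P$), and you should also make explicit the martingale argument for why score cross terms vanish both across blocks and across time steps; but neither point changes the structure of the argument.
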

In the preceding statement, $v(x|z)$ is the {\em value function} of either the global treatment condition (if $z = 1$) or the global control condition (if $z = 0$); the quantity $v(x|z)$ measures the cumulative difference between expected rewards and the steady state reward per time step.

We remark here that it is well known that $v$ solves {\em Poisson's equation} for the reward function $g$:
\[ v(x|z) = g(x,z) - \rho_z + \sum_{x',y'} P(x,,x'|z) v(x'|z). \]
(See, e.g., \cite{asmussen2003applied}.) The solution to Poisson's equation is uniquely specified only up to an additive constant.  If we let $\v{\Phi}_z$ be the stochastic matrix with $\v{\phi}_z$ on every row, then one solution to Poisson's equation is obtained by:
\[ \v{v}_z = ( \v{I} - \v{P}_z + \v{\Phi}_z)^{-1} (\v{g} - \rho_z \v{e}), \]
where $\v{I}$ is the identity matrix; $\v{P}_z$ is the matrix with entries $P(\cdot, \cdot|z)$; $\v{e}$ is the vector of $1$'s; and $\v{v}_z$ is the vector with entries $v(\cdot|z)$.  This particular solution has the feature that $\sum_x \phi_z(x) v(x|z) = 0$, which can be shown to hold for the definition in \eqref{eq:valuefn}. The matrix $( \v{I} - \v{P}_z + \v{\Phi}_z)^{-1}$ is the {\em fundamental matrix} of the Markov chain.  We use this matrix formulation to compute the Cram\'{e}r-Rao lower bound numerically in the main text.

\subsection{Application to a capacity-constrained platform}
\label{app:MC_cc}

In this section we show that the model in Section \ref{sec:cc} is a special case of the general Markov chain setting presented here, and can be analyzed using the same limit theorems.  

To make the mapping, we must define appropriate discrete time Markov chains corresponding to global treatment and global control, and a corresponding reward function $g$.  We proceed as follows.  Following Section \ref{sec:cc}, let $X_t$ is the continuous time Markov chain of booked listings under global treatment (i.e., $a = 1$).  Let $T(i)$ be the arrival time of the $i$'th customer, and let $X_{T(i)}^-$ denote the state just {\em before} the arrival of the $i$'th customer.  Let $S_i = (X_{T(i)}^-, Y_i)$.  Note that $S_i$ is a stationary discrete-time Markov chain on the finite state space $\{0, \ldots, K\} \times \{0,1\}$: given $S_i$, the next state $X_{T(i+1)}^-$ is independent of the history prior to customer $i$, and of course given $X_{T(i+1)}^-$, the booking outcome $Y_{i+1}$ of customer $i+1$ is also independent of the history prior customer $i$.  

We let $P(k,y,k',y'| 1)$ be the transition matrix of this discrete-time Markov chain corresponding to global treatment.  Note that since Poisson arrivals see time averages (PASTA), the steady state distribution of $P(\cdot | 1)$ is:
\[ \phi_1(k,y) = \pi_1(k) p_1(k)^y ( 1 -p_1(k))^{1-y}. \]
We can similarly define $P(k,y,k',y' | 0)$ for the global control chain, with an analogous steady state distribution $\phi_0(k,y)$, and $P(k,y,k',y' | a)$ for the experimental chain, with an analogous steady state distribution $\phi_a(k,y)$.

Finally, for each state $k$, $y = 0,1$, and $z = 0,1$, define the reward function $g(k,y,z) = y$; this is the booking outcome.  Note the difference in long run average reward between the treatment and control chains is:
\[ \sum_k \sum_{y = 0,1} \phi_1(k,y) g(k,y) - \sum_k\sum_{y = 0,1} \phi_0(k,y) g(k,y) = \sum_k \pi_1(k) p_1(k) - \sum_k \pi_0(k) p_0(k) = \GTE, \]
i.e., exactly the global treatment effect defined in \eqref{eq:gte_cc}.  Thus the estimand in Section \ref{sec:cc} is exactly the estimand in \eqref{eq:gte_MC}, i.e., the difference in long run average reward between the treatment and control discrete time chains with the primitives $P(\cdot | 1)$, $P(\cdot | 0)$, and $g$ defined as in this section.

Thus Theorems \ref{thm:ab_clt} and \ref{thm:var_est} directly apply in the setting of Section \ref{sec:cc}, with the primitives defined as in this section.  Applications of these results with the specific primitives defined above yields Theorems \ref{thm:ab_clt_cc} and \ref{thm:var_est_cc} in the main text.  We remark here that it is important to carefully translate the expectations in the variances and covariances that appear in Theorem \ref{thm:ab_clt}.  These are expectations evaluated with the initial distribution of the {\em discrete-time} chain initialized to $\v{\pi}_1$ or $\v{\pi}_0$ for global treatment and global control, respectively.  In the continuous time Markov chain of Section \ref{sec:cc}, this corresponds to the requirement that the distribution of the state {\em just prior to the arrival of the first customer} is the steady state distribution $\v{\pi}_1$ or $\v{\pi}_0$.

Observe that if the null hypothesis $H_0 : \GTE = 0$ holds {\em and} the treatment is sign-consistent, then by Proposition \ref{pr:null_monotone} we know that $p_0 = p_1$, i.e., we are in the setting of an A/A experiment: both treatment and control Markov chains are identical.  Theorem \ref{thm:aa_clt_cc} in the main text then follows by applying Theorem \ref{thm:aa_clt} and Corollary \ref{cor:aa_fpp_limit} to the resulting A/A experiment.

Application of Theorem \ref{thm: farias} yields the Cram\'{e}r-Rao bound we leverage in Section \ref{sec: asymptotic power}, and in particular \eqref{eq:ub_tstat_upper_bound}.  In this setting, the Cram\'{e}r-Rao bound in \eqref{eq:cramer_rao_bound} becomes:
\begin{align*}
N \Var(\hat{\theta}_N) &\geq \left( \frac{1}{a} \right) \sum_{k,y} \frac{\phi_1(k,y)^2}{\phi_a(k,y)}
\sum_{k',y'} P(k,y,k',y'|1) \left(v(k',y'|1) - v(k,y|1) + g_1(k,y) - \rho_1\right)^2  \\
&\quad + \left( \frac{1}{1-a} \right) \sum_{k,y} \frac{\phi_0(k,y)^2}{\phi_a(k,y)}
\left( \sum_{k',y'} P(k,y,k',y'|0) \left( v(k',y'|0) -  v(k,y|0) + g_0(k,y )- \rho_0 \right)^2 \right) \\
& \triangleq \sigma^2_\UB,
\end{align*}
where $v(k,y|z)$ is the value function for $z = 0,1$ associated to the reward function $g$ defined above, with $\rho_z = \sum_k \pi_z(k) p_z(k)$.

We conclude this section with the proof of Theorem \ref{thm:cr_expo_growth}, which establishes exponential growth of $\sigma^2_{\UB}$ generically in the size of the state space.

\begin{proof}[Proof of Theorem \ref{thm:cr_expo_growth}]
We develop a coarse lower bound by only focusing on the control term (i.e., the second term) in the Cram\'{e}r-Rao lower bound; this suffices to establish the theorem.  Define $\b{p}_a(s) = (1-a) \b{p}_0(s) + a \b{p}_1(s)$.  Let $s_z^*$ be the unique solution to:
\[ \b{\lambda} \b{p}_z(s) = \b{\tau}(s) \]
for $z = 0, a$.  It is easy to check that $0 < s_0^* < s_a^*$, given the monotonicity properties of $\b{p}_0, \b{p}_1, \b{\tau}$, and the fact that treatment is sign-consistent.  Choose $\epsilon > 0$ such that $s_0^* - 3\epsilon > 0$ and $s_0^* + 7\epsilon < s_a^*$.  We assume $K$ is large enough that $\epsilon > 1/K$.

Let $k_0^+ = \lfloor (s_0^* + 3 \epsilon) K \rfloor$, and let $k_0' = \lfloor (s_0^* + 2 \epsilon) K \rfloor$.  For each $K$, consider the set of all $k > k_0'$.  By detailed balance, for all such $k$, we have:
\[ \lambda \b{p}_0\left( \frac{k-1}{K} \right) \pi_0^{(K)}(k-1) = \b{\tau}\left( \frac{k}{K} \right) \pi_0^{(K)}(k).\]
Note that for all $k > k_0'$, we have $\b{p}_0((k-1)/K) < \b{p}_0(k_0^+ + 2\epsilon - 1/K) < \b{p}_0(s_0^* + \epsilon)$ by monotonicity; and we also have $\b{\tau}(k/K) \geq \b{\tau}(s_0^* + 2\epsilon) \geq \b{\tau}(s_0^* + \epsilon)$ by monotonicity.  Let $c_1 = \lambda \b{p}_0(s_0^* + \epsilon)/\b{\tau}(s_0^*+\epsilon) < 1$.  By iterating up from $k_0'$, we conclude that for all $k \geq k_0^+$:
\[ \pi_0^{(K)}(k) < \pi_0^{(K)}(k_0') \cdot c_1^{k - k_0'} < c_1^{k_0^+ - k_0'}. \]
Since this is true for all $k \geq k_0^+$, we conclude that:
\begin{equation}
\label{eq:treatmentbound1}
\sum_{k = k_0^+}^{K} \pi_0^{(K)}(k)^2 < c_2 c_1^{2\lfloor \epsilon K \rfloor}
\end{equation}
for a constant $c_2 > 0$ and all sufficiently large $K$.  To see this, observe that there are less than $K$ terms in the first sum, and each term is bounded above by $c_1^{2(k_0^+ - k_0')}$.  Given the definitions of $k_0', k_0^+$ and the fact that $c_1 < 1$, we can choose an appropriate constant $c_2$ for the stated bound.

Similarly, let $k_0^- = \lceil (s_0^* - 3 \epsilon) K \rceil$.  We can use an analogous argument to conclude that there exist constants $c_3 < 1$ and $c_4 > 0$ such that:
\begin{equation}
\label{eq:treatmentbound2}
\sum_{k = 0}^{k_0^-} \pi_0^{(K)}(k)^2 < c_4 c_3^{2\lfloor \epsilon K \rfloor}
\end{equation}
We omit the details.  Taken together, note that we have:
\[ \sum_{k_0^- < k < k_0^+} \pi_0^{(K)}(k)^2 > 1 - c_6 c_5^{2\lfloor \epsilon K \rfloor} \]
for some constants $c_6 > 0$ and $c_5 < 1$.

With $k_a^- = \lfloor (s_a^* - 3\epsilon) K \rfloor$, analogous arguments can also be used to show the following inequality for a constant $c_7 < 1$ and all $k < k^-_a$:
\begin{equation}
\label{eq:experimentbound}
\pi_a^{(K)}(k) < c_7^{\lfloor \epsilon K \rfloor}.
\end{equation}
We omit the details.  Since for any $k < k_0^+$, we also have $k < k_a^-$, it follows that:
\[ \sum_{k_0^- < k < k_0^+} \frac{\pi_0^{(K)}(k)^2}{\pi_a^{(K)}(k)} > c_8 \gamma^K, \]
for some constants $c_8 > 0$ and $\gamma > 1$.

To conclude the proof, we lower bound the quadratic terms in the control term of $\sigma^2_{\UB}$.  Fix $K$ sufficiently large so that the preceding bounds hold, and let $P^{(K)}(\cdot|z)$ and $v^{(K)}(\cdot|z)$ be the transition probabilities and value function for the $K$'th control system as defined in the theorem.  Define $\rho_z^{(K)}$ to be the steady state booking probability in the $K$'th system:
\[ \rho_z^{(K)} = \sum_k \pi_z^{(K)}(k) \b{p}_z\left( \frac{k}{K}\right),\quad z = 0,1.\]

Throughout the remainder of the proof, we suppress the dependence on $K$ except where necessary for clarity (e.g., writing $\rho_z$ instead of $\rho_z^{(K)}$.  We have the following lemma; the proof is tedious but straightforward algebra using the definition of $P(\cdot | z)$, and deferred after the proof of this theorem.

\begin{lemma}
\label{lem:valfn_deltas}
Fix $K$.  The value function $v(\cdot|z)$ satisfies:
\begin{align}
v(K,0|z) - v(K-1,0|z) &= -\frac{\lambda \rho_z}{\tau(K)}; \label{eq:valfn_deltas1} \\
v(k+1,0|z) - v(k,0|z) &= \frac{\rho_z - p_z(k)}{p_z(k)} + \frac{\tau(k)}{\lambda p_z(k)} (v(k,0|z) - v(k-1,0|z)),\ \ \text{for}\ 0 < k < K; \label{eq:valfn_deltas2}\\
v(1,0|z) - v(0,0|z) &= \frac{\rho_z - p_z(0)}{p_z(0)}.\label{eq:valfn_deltas3}
\end{align}
Further, for any $k < K$, there holds:
\[ v(k,1|z) = 1 + v(k+1,0|z). \]
\end{lemma}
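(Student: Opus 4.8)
The plan is to work directly from Poisson's equation for the embedded discrete-time chain $S_i = (X_{T(i)}^-, Y_i)$ with reward $g(k,y,z) = y$ and long-run average $\rho_z$, namely
\[ v(k,y|z) = y - \rho_z + \sum_{k',y'} P(k,y,k',y'|z)\, v(k',y'|z). \]
The first and central observation is that the transition kernel depends on $(k,y)$ only through the \emph{post-decision} occupancy $m = k+y$: once the current customer has decided, the number of booked listings is $k+y$, and the evolution up to and including the next customer's decision is a function of $m$ alone. Writing $W(m|z) := \sum_{k',y'} P(k,y,k',y'|z)\, v(k',y'|z)$ for any state with $k+y=m$, Poisson's equation becomes $v(k,0|z) = -\rho_z + W(k|z)$ and $v(k,1|z) = 1 - \rho_z + W(k+1|z)$. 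Subtracting these two identities (which share the same $W$ evaluated at $k+1$) immediately yields the last claim $v(k,1|z) = 1 + v(k+1,0|z)$ for $k<K$, and reduces everything to the single sequence $u(k) := v(k,0|z)$, which satisfies $u(k) = W(k|z) - \rho_z$.

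Next I would express $W(m|z)$ through a one-step race argument, exploiting the memorylessness of the exponential interarrival and holding times. Between customer decisions the occupancy evolves as a pure-death process: from post-decision level $m$, the next event is either an arrival (rate $\lambda$) or a departure (rate $\tau(m)$). If the arrival comes first, the next customer sees level $m$ and books with probability $p_z(m)$, contributing $a(m) := (1-p_z(m))\,v(m,0|z) + p_z(m)\,v(m,1|z)$; if a departure comes first, the occupancy drops to $m-1$ and, by memorylessness, the continuation value is $W(m-1|z)$. This gives the recursion
\[ W(m|z) = \frac{\lambda}{\lambda+\tau(m)}\, a(m) + \frac{\tau(m)}{\lambda+\tau(m)}\, W(m-1|z), \]
with the conventions $\tau(0) = 0$ (so $W(0|z) = a(0)$) and $p_z(K)=0$ (so an arrival at level $K$ forces $y'=0$, whence $a(K) = u(K)$). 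The point of this decomposition is that it replaces the complicated interarrival death-transition kernel with a single first-order recursion in $m$.

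Finally I would substitute $u(k) = W(k|z) - \rho_z$ and the relation $v(m,1|z) = 1 + u(m+1)$ into $a(m)$, clear the denominator $\lambda+\tau(m)$, and cancel the common $\lambda u(k)$ terms; the resulting identity rearranges to \eqref{eq:valfn_deltas2}, the interior case $0<k<K$ following directly, while the two boundary cases are handled by the special conventions above ($k=0$ using $\tau(0)=0$ gives \eqref{eq:valfn_deltas3}, and $k=K$ using $p_z(K)=0$ and $a(K)=u(K)$ gives \eqref{eq:valfn_deltas1}). The only genuinely delicate step is the race decomposition for $W(m|z)$: one must verify that the memoryless structure truly licenses conditioning on just the post-decision level, rather than tracking the full distribution of the occupancy at the next arrival. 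Once that reduction is in hand, the remaining derivation is routine algebra.
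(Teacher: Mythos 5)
Your proposal is correct and proves all four identities; the difference from the paper's proof lies entirely in how the key one-step recursion is justified. The paper proceeds computationally: it writes the embedded chain's kernel explicitly in product form,
\[
P(k,0,k',y'|z) = \Bigl( \prod_{\kappa=k'+1}^{k} \tfrac{\tau(\kappa)}{\lambda+\tau(\kappa)} \Bigr)\, \tfrac{\lambda}{\lambda+\tau(k')}\, (1-p_z(k'))^{1-y'} p_z(k')^{y'}, \quad k' \leq k,
\]
reads off the row relation $P(k+1,0,k',y'|z) = \tfrac{\tau(k+1)}{\lambda+\tau(k+1)} P(k,0,k',y'|z)$ for $k' \leq k$ together with the identity $P(k,1,\cdot|z) = P(k+1,0,\cdot|z)$, and substitutes into Poisson's equation to get \eqref{eq:valfn_expansion}. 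You instead package both observations into the post-decision quantity $W(m|z)$ and derive its recursion by a first-step race argument. The two routes produce literally the same intermediate identity: your recursion $W(m|z) = \tfrac{\lambda}{\lambda+\tau(m)} a(m) + \tfrac{\tau(m)}{\lambda+\tau(m)} W(m-1|z)$, rewritten via $W(m|z) = v(m,0|z) + \rho_z$, is exactly \eqref{eq:valfn_expansion}, and from there the interior and boundary algebra (using the paper's conventions $\tau(0)=0$ and $p_z(K)=0$) coincides step for step with the paper's. What your version buys is that the explicit product-form kernel is never needed: the post-decision reduction simultaneously yields $v(k,1|z) = 1 + v(k+1,0|z)$ and the recursion, which is cleaner and makes the probabilistic content visible. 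What the paper's version buys is that the step you rightly flag as delicate---that conditioning on the post-decision level alone is legitimate---is verified by explicit computation rather than by appeal to memorylessness; but that appeal is sound (conditional on a departure winning the race, the residual interarrival time is again exponential with rate $\lambda$ and the death process from level $m-1$ is distributed as if the post-decision level had been $m-1$, which is precisely what the product form encodes), so your proof is complete as sketched.
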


We now argue that for all $k < K$, $v(k+1,0|z) - v(k,0|z) < 0$.  The preceding lemma shows this holds at $k = K$ (trivially) and at $k = 0$.  The latter follows since $\pi_z(k) > 0$ for all $k$, and $p_z$ is strictly decreasing in $k$, so $p_z(0) > \rho_z$.  Now for all $k$ such that $\rho_z \leq p_z(k)$, we can argue inductively starting from $k = 0$ via \eqref{eq:valfn_deltas2} to conclude that $v(k+1,0|z) - v(k,0|z) < 0$ as well.  On the other hand, for all $k$ such that $\rho_z \geq p_z(k)$, we can argue inductively downwards starting from $k = K$ via \eqref{eq:valfn_deltas2} to conclude that $v(k+1,0|z) - v(k,0|z) < 0$.  Thus the result must hold for all $k$; in particular, $v(\cdot, 0|z)$ is a strictly decreasing function of the state $k$.

Thus for any $k' < k$, we have: 
\begin{align*}
v(k',1|0) - v(k,0|0) + h_0(k,0) &= v(k'+1,0|0) - v(k,0|0) + 1 - \rho_0 \\
& \geq 1 - \rho_0 > 0.
\end{align*}
So for any state $k < K$, we have:
\begin{align*}
\sum_{k',y'} P(k,0,k',y'|0) & \left( v(k',y'|0) -  v(k,0|0) + h_0(k,0) \right)^2 \\
& \geq \sum_{k' < k} P(k,0,k',1|0) \left( v(k',1|0) -  v(k,0|0) + h_0(k,0) \right)^2 \\
& \geq (1 - \rho_0)^2 \sum_{k' < k} P(k,0,k',1|0).
\end{align*}
Fix $k$ such that $k_0^- < k < k_0^+$.  We show that $\sum_{k' < k} P(k,0,k',1|0)$ is bounded below.  To see this, note that the next state is $(k',1)$ with $k' < k$ if (1) at least one departure occurs before the next arrival; and (2) the next arriving customer books.  Event (1) occurs with probability at least $\tau(k_0^-)/(\lambda + \tau(k_0^-)) > \b{\tau}(s_0^* - 3\epsilon)/(\lambda + \b{\tau}(s_0^* - 3\epsilon))  > 0$.  Conditional on event (1), event (2) occurs with probability at least $p(k_0^+) > \b{p}(s_0^* + 3\epsilon) > 0$.  Thus we have:
\[   \sum_{k' < k} P(k,0,k',1|0) >  \frac{\b{\tau}(s_0^* - 3\epsilon)}{\lambda + \b{\tau}(s_0^* - 3\epsilon)} \b{p}(k_0^+) \triangleq c_9 > 0. \]

In addition, for each $k < k_0^+$, note that $p_0(k) > p_0(k_0^+) > \b{p}_0(s_0^* + 3\epsilon)$, so that:
\[ \frac{\phi_0(k,1)^2}{\phi_a(k,1)} \geq \frac{\pi_0(k)^2}{\pi_a(k)} \cdot \b{p}_0(s_0^* + 3\epsilon)^2. \]

Putting the bounds together, we conclude that:
\begin{align*}
\sigma^2_{\UB} &\geq \sum_{k_0^- < k < k_0^+}  \frac{\phi_0(k,1)^2}{\phi_a(k,1)}
\left( \sum_{k',y'} P(k,0,k',y'|0) \left( v(k',y'|0) -  v(k,1|0) + h_0(k,1) \right)^2 \right) \\
&\geq (1 - \rho_0)^2 \b{p}_0(s_0^* + 3\epsilon)^2 c_9 \sum_{k < k_0^+} \frac{\pi_0(k)^2}{\pi_a(k)} \\
&\geq (1 - \rho_0)^2 \b{p}_0(s_0^* + 3\epsilon)^2 c_8 c_9 \gamma^K.
\end{align*}
Taking $C = (1 - \rho_0)^2 \b{p}_0(s_0^* + 3\epsilon)^2 c_8 c_9$ gives the desired result.
\end{proof}

\begin{proof}[Proof of Lemma \ref{lem:valfn_deltas}]
Fix $k, k', y'$.  The transition probability from $(k,0)$ to $(k',y')$ is zero if $k' > k$, and the following if $k' \leq k$:
\[ P(k,0,k',y'|z) = \left( \prod_{\kappa = k'+1}^{k} \frac{\tau(\kappa)}{\lambda + \tau(\kappa)} \right) \left( \frac{\lambda}{\lambda +\tau(k')}\right) ( 1 - p_z(k'))^{1 - y'} p_z(k')^{y'}, \]
where we interpret the first product as zero if $k' = k$.
From this expression, we obtain that for $k < K$:
\[ P(k+1,0,k',y'|z) = \left( \frac{\tau(k+1)}{\lambda + \tau(k+1)} \right) P(k,0,k',y'|z). \]
Since $h_z(k,0) = -\rho_z$, and $v$ satisfies Poisson's equation, we have:
\[ v(k,0|z) = -\rho_z + \sum_{k' \leq k} \sum_{y'=0,1} P(k,0,k',y'|z) v(k',y'|z). \]
Putting the previous observations together, we obtain that for $k < K$:
\begin{multline}
 v(k+1,0|z) = -\frac{\lambda \rho_z}{\lambda + \tau(k+1)}  + \frac{\tau(k+1)}{\lambda + \tau(k+1)} v(k,0|z) \\
+ \left(\frac{\lambda}{\lambda + \tau(k+1)} \right) ( (1  - p_z(k+1)) v(k+1,0|z) + p_z(k+1) v(k+1,1|z) ).
\label{eq:valfn_expansion}
\end{multline}
Now note that for $k < K$, we have $P(k,1,k',y'|z) = P(k+1,0,k',y'|z)$, since if a customer arrives to state $k$ and books a listing, then the state immediately becomes $k+1$.  Using this fact and Poisson's equation, it follows that $v(k,1|z) = 1 + v(k+1,0|z)$, for $k < K$.  Combining with the preceding expression, we obtain for $0 < k < K$:
\[  \tau(k) ( v(k,0|z) - v(k-1,0|z) )  = -\lambda \rho_z  + \lambda p_z(k) ( 1 +  v(k+1,0|z) - v(k,0|z)). \]
Rearranging terms gives the expression in the lemma for $0 < k < K$.

For $k = 0$, note that the only possible transitions from $(0,0)$ are to $(0,0)$ and $(0,1)$, and:
\begin{align*}
 v(0,0|z) &= -\rho_z + p_z(0) v(0,1|z) + (1 - p_z(0)) v(0,0|z)\\
&= -\rho_z + p_z(0) ( 1 + v(1,0|z)) + (1 - p_z(0)) v(0,0|z).
\end{align*}
Rearranging terms gives the expression in the lemma for $k = 0$.

Finally, for $k = K$, we have $p_z(k) = 0$.  Using \eqref{eq:valfn_expansion} with $k = K-1$, we obtain:
\[ v(K,0|z) = -\frac{\lambda \rho_z}{\lambda + \tau(K)}  + \frac{\tau(K)}{\lambda + \tau(K)} v(K-1,0|z) +
\left(\frac{\lambda}{\lambda + \tau(K)} \right) v(K,0|z). \]
Rearranging terms gives the expression in the lemma for $k = K$.
\end{proof}

\section{Additional results and proofs}
\label{app:proofs}

\subsection{Section \ref{sec:cc}}
\label{app:cc_proofs}

\begin{proof}[Proof of Proposition \ref{pr:null_monotone}] 
Clearly if $\v{p}_1 = \v{p}_0$ then $\GTE =0.$ Now suppose $\v{p}_1 \neq \v{p}_0$, i.e., for at least one $k$, $p_1(k) \neq p_0(k)$. 
Under Assumption \ref{as:tau_monotone}, we either have $p_1(k) \geq p_0(k)$ for all $k$ or $p_1(k) \leq p_0(k)$ for all $k$; and since $\v{p}_1 \neq \v{p}_0$, we have strict inequality for at least one $k$.  We focus on the case where $p_1(k) \geq p_0(k)$ for all $k$, since the opposite case is symmetric. If $p_1(k) \geq p_0(k)$ for all $k$ and $\v{p}_0 \neq \v{p}_1$ (i.e., a strictly positive treatment), then $\v\pi_1 \succ^d  \v\pi_0$, cf.~Definition \ref{def:stochdom} and Corollary \ref{cor:invariant_start_dist}.

Now, we observe that
\begin{align*}
    \lambda \rho_1 = \sum_{k=0}^{K-1}\lambda p_1(k) \pi_1(k) &= \sum_{k=0}^{K-1} \tau(k+1)\pi_1(k+1)=\sum_{k=1}^{K}\tau(k)\pi_1(k)\\
    &>\sum_{k=1}^{K}\tau(k)\pi_0(k)=\sum_{k=0}^{K-1} \tau(k+1)\pi_0(k+1) = \sum_{k=0}^{K-1}\lambda p_0(k) \pi_0(k)=\lambda \rho_0.
\end{align*}
In the second equation, as well as the second to last equation, we use the detailed balance equations for the steady state distribution. In the inequality we use the fact that $\v\pi_1 \succ^d  \v\pi_0$ along with the fact that $\tau(\cdot)$ is a non-decreasing function with $\tau(0)=0$.

We conclude $\rho_1 > \rho_0$.  Since $\GTE = \rho_1 - \rho_0$, we must have $\GTE > 0$.  On the other hand, if $p_1(k) \leq p_0(k)$ for all $k$ and $\v{p}_1 \neq \v{p}_0$, then we have $\v\pi_1 \prec^d \v\pi_0$.  A symmetric argument then yields $\rho_0 > \rho_1$, and we would conclude $\GTE < 0.$ In either case, we have shown that $\v{p}_1 \neq \v{p}_0$ implies $\GTE \neq 0.$
\end{proof}

\subsection{Section \ref{sec:fpp}}
\label{app:fpp_proofs}

This appendix contains the proofs of Propositions \ref{pr:reduction_to_exchangeable} and \ref{pr:gte_and_var_aa_exchangeable}.  As noted in the text, our results in these propositions are related to classical results on design-based inference by both Neyman \cite{neyman1923, ding2024first} and Fisher \cite{fisher1971design, wager2024causal}.  Neyman considers a setting where the experiment is a completely randomized design (CRD), and potential outcomes are constant, so that the only randomness is due to the treatment assignment.  He shows in this case that the true variance of the DM estimator {\em decomposes} into terms that are unbiasedly estimable using $\widehat{\Var}$, and a last term that vanishes as long as treatment effects are constant across units.  

However, Neyman's argument does not apply directly in our setting, as outcomes remain random and dependent even conditional on treatment assignment---potentially contributing to the variance of the DM estimator.  We make progress by applying a uniform random permutation to the original outcomes, which creates exchangeable outcomes but leaves the distribution of $\DM$ and $\widehat{\Var}$ unchanged in an A/A test (cf.~Proposition \ref{pr:reduction_to_exchangeable}).  This approach is analogous to Fisher's development of permutation testing, which leverages the same fact to develop a hypothesis test of Fisher's sharp null $\tilde{H}_0$ (see, e.g., Section $11.2$ in \cite{wager2024causal}).  In our case, we use this reduction to be able to leverage an elementary argument analogous to Neyman to show $\widehat{\Var}$ is unbiased (cf.~Proposition \ref{pr:gte_and_var_aa_exchangeable}).

\begin{proof}[Proof of Proposition \ref{pr:reduction_to_exchangeable}] 

Let $\sigma^{-1}$ be the inverse permutation of $\sigma$.  Define $W_i = Z_{\sigma^{-1}(i)}$.  Observe that $(W_1, \ldots, W_N)$ is an independent collection of Bernoulli($a$) random variables.   Further, since $\sigma$ is a permutation chosen independent of $Y_1, \ldots, Y_N$, the variables $W_1, \ldots, W_N$ are also independent of $Y_1, \ldots, Y_N$.  In addition, note that $\sum_i W_i = \sum_i Z_i = N_1$, and $\sum_i (1 - W_i) = \sum_i (1 - Z_i) = N_0$.  

We now condition on the event that $N_1 > 1, N_0 > 1$.  Conditional on this event, the marginal distribution of $\v{Z}$ and $\v{W}$ agree.  Observe that:
\begin{align*}
\widetilde{\GTE}_N & = \frac{\sum_i Y_i W_i}{N_1} - \frac{\sum_i Y_i(1 - W_i)}{N_0}; \\
\widetilde{\Var}_N &= \frac{1}{N_1(N_1 - 1)} \sum_i W_i( Y_i  - \b{Y}(1))^2 + \frac{1}{N_0(N_0 - 1)} \sum_i (1 - W_i) (Y_i - \b{Y}(0))^2.
\end{align*}
Therefore conditional on the event $N_1 > 1, N_0 > 1$, the joint distribution of $(\widetilde{\GTE}_N, \widetilde{\Var}_N)$ is the same as the joint distribution of $(\DM_N, \widehat{\Var}_N)$.
\end{proof} 

\begin{proof}[Proof of Proposition \ref{pr:gte_and_var_aa_exchangeable}]
{\em Unbiasedness of $\DM$.}  We consider $\E[\b{Y}(1) | \v{Z}]$.  We have:
\[ \E \left[\frac{1}{N_1} \sum_i Z_i \E[Y_i]  \Bigg| \v{Z} \right]  = \mu. \]
The same is true for $\E[\b{Y}(0) | \v{Z}]$, so $\E[\DM | N_1 > 0, N_0 > 0] = 0$.  

{\em Unbiasedness of $\widehat{\Var}$.}  We first condition on the realization of  $\v{Z}$.  We have:
\begin{align}
\Var(\DM | \v{Z}) & = \frac{1}{N_1^2} ( N_1 V  + N_1(N_1 - 1) C) + \frac{1}{N_0^2} ( N_0 V + N_0(N_0 - 1) C) ) - \frac{1}{N_1 N_0} ( 2 N_1 N_0 C) \notag \\
& = \left( \frac{1}{N_1} + \frac{1}{N_0}\right) (V - C). \label{eq:truevar1}
\end{align}

Next we compute the expected value of the variance estimator $\widehat{\Var}$.  We again do this by first conditioning on the realization of the $\v{Z}$ vector.  Straightforward manipulation gives:
\[ \widehat{\Var} = \frac{1}{N_1(N_1-1)} \sum_i Z_i (Y_i^2 - \b{Y}(1)^2) + \frac{1}{N_0(N_0-1)} \sum_i (1 - Z_i) (Y_i^2 - \b{Y}(0)^2).\]
We consider the first term, since the analysis for the second is identical.  We have:
\begin{align*}
\E\left[ \b{Y}(1)^2 | \v{Z} \right] &= \frac{1}{N_1^2} \E\left[ \left( \sum_i Z_i Y_i \right)^2 \Bigg| \v{Z} \right] \\
&= \frac{1}{N_1^2} \E\left[ \sum_i Z_i Y_i^2 + \sum_{j \neq i} Z_i Z_j Y_i Y_j \Bigg| \v{Z} \right] \\
&= \frac{1}{N_1} ( V + \mu^2) + \frac{N_1 - 1}{N_1} ( C + \mu^2) \\
&= \mu^2 + \frac{V}{N_1} + \frac{(N_1 - 1)C}{N_1}.
\end{align*}
Thus we have:
\begin{align*}
\E \left[ \frac{1}{N_1(N_1 - 1)} \sum_i Z_i (Y_i^2 - \b{Y}(1)^2)\Bigg| \v{Z} \right] &= \frac{1}{N_1(N_1-1)} \left( N_1 (V + \mu^2) - N_1\mu^2 - V - (N_1 - 1)C \right) \\
& = \frac{1}{N_1} (V - C).
\end{align*}
Combining terms, we obtain that:
\[ \E\left[ \widehat{\Var} | \v{Z} \right] = \left( \frac{1}{N_1} + \frac{1}{N_0}\right) (V - C). \]

Finally, for notational simplicity, let $A =\{ N_1 > 1, N_0 > 1\}$.  It follows from the preceding derivation that:
\[\E\left[ \widehat{\Var} | A \right] = \E\left[ \frac{1}{N_1} + \frac{1}{N_0} \Bigg| A \right] (V - C). \]
We now compute the exact variance of the difference-in-means estimator conditional on $A$, denoted $\Var(\DM| A)$.  Note that $\Var( \E[ \DM | \v{Z}] | A ) = 0$, since $\E[\DM | \v{Z}] = 0$ for all realizations of $\v{Z}$.  Therefore,
\[ \Var(\DM | A ) = \E[ \Var(\DM | \v{Z}) | A] = \E\left[ \frac{1}{N_1} + \frac{1}{N_0} \Bigg| A \right] (V - C).\]
This completes the proof.
\end{proof}

\subsection{Section \ref{sec:power}}
\label{app:power_proofs}

The following proposition and the subsequent corollary, both used in proving Theorem \ref{thm: pos bias DM}, establish dominance relationships between the distribution of the state as seen by arriving customers (as well as the steady state distribution), depending on the treatment assignment vector.  The proofs follow their statements.

\begin{proposition}
\label{prop:distributiondom}
Suppose the treatment is strictly positive, cf.~Definition \ref{def:monotone}.  Then 
$\v{\nu}_i^{\v{0}} \prec^d \v{\nu}_i^{\v{1}}$.

Further, let $\v{Z}=(Z_1,\dots,Z_N)$ be a treatment assignment with $N_0>0$ and $N_1>0$, so that in particular, $\v{Z} \neq \v{1}$ and $\v{Z} \neq \v{0}$.  Then for every customer $i$, either:
\begin{enumerate}
    \item[(1)] $\v{\nu}_i^{\v{Z}} =\v{\nu}_i^{\v{1}}$; 
    \item[(2)] $\v{\nu}_i^{\v{Z}} = \v{\nu}_i^{\v{0}}$; or
    \item[(3)] $\v{\nu}_i^{\v{0}} \prec^d \v{\nu}_i^{\v{Z}} \prec^d \v{\nu}_i^{\v{1}}$.
\end{enumerate}
In addition, there exists at least one customer $i$ such that (3) holds for all $j \geq i$.
\end{proposition}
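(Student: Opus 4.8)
The plan is to prove the proposition by tracking how the arrival-time state distributions $\v{\nu}_i^{\v{Z}}$ are generated recursively and to exploit stochastic monotonicity of the underlying birth--death dynamics (cf.~\cite{keilson1977monotone}) to compare them across assignment vectors. The key observation is that $\v{\nu}_{i+1}^{\v{Z}}$ is obtained from $\v{\nu}_i^{\v{Z}}$ by composing two stochastic kernels on $\{0,\dots,K\}$: first a \emph{booking kernel} $\mathcal{B}_{\v{q}}$, which from state $k$ moves to $k+1$ with probability $q(k)$ and stays at $k$ otherwise (with $\v{q}=\v{p}_{Z_i}$ the booking vector of customer $i$), and then a \emph{death kernel} $\mathcal{D}$ describing the pure-death evolution over the $\mathrm{Exp}(\lambda)$ inter-arrival interval until customer $i+1$ arrives, so that $\v{\nu}_{i+1}^{\v{Z}}=\v{\nu}_i^{\v{Z}}\,\mathcal{B}_{\v{p}_{Z_i}}\,\mathcal{D}$. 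Both kernels preserve stochastic order in their input distribution, and $\mathcal{B}_{\v{q}}$ is in addition monotone in $\v{q}$: if $\v{q}\le\v{q}'$ pointwise then $\mu\,\mathcal{B}_{\v{q}}\preceq^d\mu\,\mathcal{B}_{\v{q}'}$ for any input $\mu$. These facts are elementary consequences of the birth--death structure (for $\mathcal{B}_{\v q}$ one checks the two-point kernel is monotone directly; $\mathcal{D}$ is a mixture over $\mathrm{Exp}(\lambda)$ times of monotone pure-death kernels).

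First I would establish the weak ordering together with the trichotomy by induction on $i$, writing $\preceq^d$ for the non-strict version of Definition~\ref{def:stochdom}. The base case is immediate: by Assumption~\ref{as: first customer distr}, $\v{\nu}_1^{\v{Z}}=\v{\nu}_1^{\v{1}}=\v{\nu}_1^{\v{0}}=\v{\pi}_0$, independent of $\v{Z}$. For the inductive step, assume $\v{\nu}_i^{\v{0}}\preceq^d\v{\nu}_i^{\v{Z}}\preceq^d\v{\nu}_i^{\v{1}}$. Since $\v{p}_0\le\v{p}_{Z_i}\le\v{p}_1$ pointwise, monotonicity of $\mathcal{B}$ in both its input and its booking vector gives $\v{\nu}_i^{\v{0}}\mathcal{B}_{\v{p}_0}\preceq^d\v{\nu}_i^{\v{Z}}\mathcal{B}_{\v{p}_{Z_i}}\preceq^d\v{\nu}_i^{\v{1}}\mathcal{B}_{\v{p}_1}$, and applying the monotone kernel $\mathcal{D}$ preserves these to yield the ordering at $i+1$. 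The equality cases are structural: if $Z_1=\cdots=Z_{i-1}=1$ (resp.~$=0$) then the recursion defining $\v{\nu}_i^{\v{Z}}$ is literally that defining $\v{\nu}_i^{\v{1}}$ (resp.~$\v{\nu}_i^{\v{0}}$), giving cases (1) and (2); case (3) is the remaining alternative once the weak ordering is in force. The first assertion, $\v{\nu}_i^{\v{0}}\prec^d\v{\nu}_i^{\v{1}}$ for $i\ge2$, is the special case of comparing the two extreme recursions directly (they coincide at $i=1$).

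The crux, and the step I expect to be the main obstacle, is upgrading the weak inequalities in case (3) to strict ones. Let $i_0=\min\{i:Z_i=0\}$ and $i_1=\min\{i:Z_i=1\}$, both finite since $N_0,N_1>0$. I would first show every $\v{\nu}_i^{\v{Z}}$ has full support on $\{0,\dots,K\}$: this follows from irreducibility of the chain and the facts that $\v{\pi}_0$ has full support and that each death step and each booking step assign positive probability to the relevant transitions. At step $i_0$ we have $\v{\nu}_{i_0}^{\v{Z}}=\v{\nu}_{i_0}^{\v{1}}$, but customer $i_0$ books under $\v{Z}$ with $\v{p}_0$ while under $\v{1}$ it books with $\v{p}_1$; strict positivity of the treatment (Definition~\ref{def:monotone}), i.e.~$p_1(k^*)>p_0(k^*)$ for some $k^*$, together with positive mass of $\v{\nu}_{i_0}^{\v{Z}}$ on $k^*$, forces $\v{\nu}_{i_0}^{\v{Z}}\mathcal{B}_{\v{p}_0}\prec^d\v{\nu}_{i_0}^{\v{Z}}\mathcal{B}_{\v{p}_1}=\v{\nu}_{i_0}^{\v{1}}\mathcal{B}_{\v{p}_1}$, so a genuinely strict gap opens below $\v{\nu}^{\v{1}}$. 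The symmetric argument at step $i_1$ opens a strict gap above $\v{\nu}^{\v{0}}$.

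The remaining and most delicate point is that these strict gaps are \emph{preserved} by every subsequent application of $\mathcal{B}$ and $\mathcal{D}$, since stochastic monotonicity by itself guarantees only weak preservation of order. Here I would use that, acting on full-support distributions, these kernels are strictly order-preserving: because the chain is irreducible, a strict tail-sum inequality at one state cannot collapse to equality at every state after transporting mass through $\mathcal{B}$ and $\mathcal{D}$. Granting this, $\v{\nu}_j^{\v{0}}\prec^d\v{\nu}_j^{\v{Z}}\prec^d\v{\nu}_j^{\v{1}}$ for every $j>\max(i_0,i_1)$, which is exactly case (3); taking $i=\max(i_0,i_1)+1$ exhibits a customer after which (3) holds for all $j\ge i$, giving the final assertion. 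I would flag one boundary subtlety requiring care: when the later of the two first appearances occurs at the very last index (so $\max(i_0,i_1)=N$), no actual customer falls strictly between the extremes, and the strict-bias consequences used downstream in Theorem~\ref{thm: pos bias DM} must instead be obtained from the comparison of $\v{\nu}_i^{\v{1}}$ with $\v{\pi}_1$ afforded by Corollary~\ref{cor:invariant_start_dist}.
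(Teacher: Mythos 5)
Your proposal is correct and takes essentially the same route as the paper's proof: the same decomposition of each inter-arrival step into a booking kernel followed by an exponentially-timed pure-death kernel, the same stochastic-monotonicity tools (the paper proves the booking-kernel comparisons as explicit tail-sum inequalities, its relations (DR1)--(DR5), and defers strict order-preservation of the death kernel to a footnote citing \cite{keilson1977monotone} --- precisely the step you also defer), and the same induction over customers, so your weak-ordering-then-strictness organization, with strict gaps opened at the first control/treated indices $i_0, i_1$ and propagated forward, is a presentational variant of the paper's six-case induction rather than a genuinely different argument. Your boundary flag is a genuine catch worth retaining: under the stated hypothesis $N_0, N_1 > 0$, an assignment such as $\v{Z} = (1,\dots,1,0)$ makes case (1) hold for every customer, so the proposition's final claim fails as written (the paper's proof silently strengthens the hypothesis to $N_0 > 1$, $N_1 > 1$ at exactly that step), and your proposed repair of the downstream strict-bias argument in Theorem \ref{thm: pos bias DM} via Corollary \ref{cor:invariant_start_dist} is the right one.
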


Recall that since the distribution on arrival of the first customer does not depend on any treatment assignments, we simply write $\v{\nu}_1$ for this distribution (without superscript $\v{Z}$), and that $\v{\pi}_0$ (resp., $\v{\pi}_1$) is the steady state distribution for the global control (resp., global treatment) Markov chain.

\begin{corollary}
\label{cor:invariant_start_dist}
Suppose the treatment is strictly positive, cf.~Definition \ref{def:monotone}.   If $\v{\nu}_1 = \v{\pi}_0$, then $\v{\nu}_i^{\v{0}} = \v{\pi}_0$ for all $i$.  Further, for all $i > 1$, $\v{\pi}_0 \prec^d \v{\nu}_i^{\v{1}} \prec^d \v{\pi}_1$.  

Similarly, if $\v{\nu}_1 = \v{\pi}_1$, then $\v{\nu}_i^{\v{1}} = \v{\pi}_1$ for all $i$.  Further, for all $i > 1$, $\v{\pi}_0 \prec^d \v{\nu}_i^{\v{0}} \prec^d \v{\pi}_1$.  
\end{corollary}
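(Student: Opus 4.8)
The plan is to prove Case~1 ($\v{\nu}_1 = \v{\pi}_0$) in full; Case~2 ($\v{\nu}_1 = \v{\pi}_1$) is entirely symmetric, with the roles of treatment/control and of $\v{\pi}_0$/$\v{\pi}_1$ interchanged. In each case the three conclusions split into (a) an \emph{invariance} statement, (b) one sandwich inequality that is immediate from Proposition~\ref{prop:distributiondom}, and (c) the opposite sandwich inequality, which is the only step that requires real work. For Case~1 the split is: invariance gives $\v{\nu}_i^{\v{0}} = \v{\pi}_0$; Proposition~\ref{prop:distributiondom} gives the lower bound $\v{\pi}_0 \prec^d \v{\nu}_i^{\v{1}}$; and the upper bound $\v{\nu}_i^{\v{1}} \prec^d \v{\pi}_1$ is the hard part. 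I would deliberately avoid invoking any steady-state ordering of $\v{\pi}_0$ and $\v{\pi}_1$ as an input, since (as I note below) that ordering is in fact an \emph{output} of this corollary and is what Proposition~\ref{pr:null_monotone} relies on.

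First I would dispatch invariance. Consider the embedded discrete-time chain of states seen on arrival (the chain $X_{T(i)}^-$ of Appendix~\ref{app:MC_cc}); under the global-control condition its transition kernel has stationary distribution $\v{\pi}_0$, since by PASTA Poisson arrivals see the time-stationary distribution of the continuous-time chain. Because $\v{\nu}_1 = \v{\pi}_0$ is exactly this stationary distribution and the assignment is $\v{Z} = \v{0}$, the seen-on-arrival chain starts at stationarity and remains there, giving $\v{\nu}_i^{\v{0}} = \v{\pi}_0$ for all $i$. The lower bound is then immediate: Proposition~\ref{prop:distributiondom} gives $\v{\nu}_i^{\v{0}} \prec^d \v{\nu}_i^{\v{1}}$ (strictly for $i>1$; they coincide at $i=1$), and substituting $\v{\nu}_i^{\v{0}} = \v{\pi}_0$ yields $\v{\pi}_0 \prec^d \v{\nu}_i^{\v{1}}$.

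For the upper bound I would first establish the weak ordering $\v{\pi}_0 \preceq^d \v{\pi}_1$ non-circularly: the lower bound $\v{\pi}_0 \prec^d \v{\nu}_i^{\v{1}}$ holds for every $i$, and since the global-treatment seen-on-arrival chain is ergodic, $\v{\nu}_i^{\v{1}} \to \v{\pi}_1$ as $i \to \infty$; passing to the limit in the tail-sum inequalities of Definition~\ref{def:stochdom} gives $\v{\pi}_0 \preceq^d \v{\pi}_1$. Invoking stochastic monotonicity of the underlying birth-death dynamics (\cite{keilson1977monotone}), I would then construct an order-preserving coupling of two global-treatment systems, $A$ started from $\v{\pi}_0$ and $B$ started from $\v{\pi}_1$, so that $A_t \le B_t$ for all $t$ almost surely. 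Since $B$ starts at its own stationary distribution, it is stationary and its seen-on-arrival distribution equals $\v{\pi}_1$ at every arrival; the pathwise ordering then yields $\v{\nu}_i^{\v{1}} \preceq^d \v{\pi}_1$ weakly for all $i$. Combining this weak upper bound with the strict lower bound $\v{\pi}_0 \prec^d \v{\nu}_i^{\v{1}}$ also upgrades the steady-state comparison to the strict $\v{\pi}_0 \prec^d \v{\pi}_1$ (precisely the dominance invoked in the proof of Proposition~\ref{pr:null_monotone}).

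The main obstacle is upgrading the upper bound to \emph{strict} dominance for every $i > 1$, i.e.\ ruling out $\v{\nu}_i^{\v{1}} = \v{\pi}_1$ at a finite index, which the weak-dominance/monotonicity argument alone cannot do. Here I would argue by contradiction using the coupling: if $\v{\nu}_i^{\v{1}} = \v{\pi}_1$ for some $i>1$, then $A_{T(i)}^-$ and $B_{T(i)}^-$ have the same distribution while $A_{T(i)}^- \le B_{T(i)}^-$ almost surely, which forces $A_{T(i)}^- = B_{T(i)}^-$ almost surely (a nonnegative random variable with zero mean is zero), so the two copies must have coalesced by the $i$-th arrival with probability one. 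But since $\v{\pi}_0 \prec^d \v{\pi}_1$ and both have full support on $\{0,\dots,K\}$ by irreducibility, the coupling can be taken so that $\P(A_0 < B_0) > 0$; conditioning on $A_0 = a < b = B_0$, there is positive probability that none of the first $i-1$ arriving customers book and that no listing departs before the $i$-th arrival, on which event both copies stay frozen at $a$ and $b$, giving $A_{T(i)}^- = a < b = B_{T(i)}^-$. This contradicts almost-sure coalescence and establishes $\v{\nu}_i^{\v{1}} \prec^d \v{\pi}_1$ for all $i > 1$. The analogous non-coalescence argument, applied to the global-control chain started from $\v{\pi}_1$, supplies the one nontrivial bound in Case~2.
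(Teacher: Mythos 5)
Your proof is correct, and for the two easy parts (invariance via PASTA, and the lower bound inherited from Proposition \ref{prop:distributiondom}) it matches the paper. But for the hard direction, $\v{\nu}_i^{\v{1}} \prec^d \v{\pi}_1$, your route is genuinely different. The paper stays entirely at the level of distributions: it first obtains $\v{\pi}_0 \prec^d \v{\pi}_1$ by letting $i \to \infty$ in case (3) of Proposition \ref{prop:distributiondom}, then sets $\v{\psi} = \v{\pi}_0$, $\v{\phi} = \v{\pi}_1$ and inductively applies the one-step arrival dominance relation (DR3) followed by the pure-death semigroup, with strictness carried along at every step by the strictness clause in (DR3) and the strict form of stochastic monotonicity from \cite{keilson1977monotone} noted in the paper's footnote. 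You instead build an order-preserving pathwise coupling of two global-treatment systems ($A$ from $\v{\pi}_0$, $B$ from $\v{\pi}_1$), read off the weak bound $\v{\nu}_i^{\v{1}} \preceq^d \v{\pi}_1$ from $A \le B$ at arrival epochs plus stationarity of $B$, and then get strictness from a separate non-coalescence contradiction. Your approach buys a clean separation of concerns---strictness is isolated in a single probabilistic lemma rather than threaded through each induction step---and it is actually more careful than the paper on one point: the paper deduces the strict ordering $\v{\pi}_0 \prec^d \v{\pi}_1$ ``by taking limits'' of strict inequalities, which by itself yields only weak dominance, whereas you recover strictness by sandwiching $\v{\pi}_0 \prec^d \v{\nu}_2^{\v{1}} \preceq^d \v{\pi}_1$. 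What the paper's route buys is economy: the (DR) relations and the monotone departure semigroup are already proved, so no coupling needs to be constructed, whereas yours requires some care in the construction (shared Poisson arrivals, bookings coupled through common uniforms so that a booking in $B$ implies one in $A$, and the excess death rate $\tau(B_t)-\tau(A_t)$ assigned to $B$ alone)---all standard for birth--death dynamics, but not free.

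One small repair to your strictness step: Assumption \ref{as:booking_prob_monotone} does not rule out $p_1(0) = 1$, so on the event $\{A_0 = 0 < B_0\}$ your ``no one books and nothing departs'' event can have probability zero. The fix is immediate: on that event let the first customer book in both systems (in $A$ alone if $B_0 = K$, which only shrinks the gap by one and keeps it positive for $K \ge 2$), which moves $A$ to a state where $p_1 < 1$, and then freeze exactly as you propose; this restores a positive-probability event on which $A_{T(i)}^- < B_{T(i)}^-$, and the contradiction with almost-sure coalescence goes through.
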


We note that if treatments are strictly negative, then all inequalities and stochastic dominance relations in Proposition \ref{prop:distributiondom} and Corollary \ref{cor:invariant_start_dist} are reversed.

\begin{proof}[Proof of Proposition \ref{prop:distributiondom}]

We prove the proposition via induction on customers, but with two preliminary steps to establish the required dominance relationships needed.

{\em Step 1: Characterize the state distribution after customer arrival.}  We start by noting that if the treatment assignment is $\v{Z}$, and the booking probabilities are $\v{q}$, then if we let $\v{\nu}_i^{\v{Z}\dag}$ denote the distribution {\em after} the arrival of customer $i$, we have:
\begin{equation}
\nu_i^{\v{Z}\dag}(k) = \left\{ \begin{array}{cl}
\nu_i^{\v{Z}}(k) ( 1- q(k)) + \nu_i^{\v{Z}}(k-1) q(k-1),&\ \  k > 0;\\
\nu_i^{\v{Z}}(0) (1 - q(0)),&\ \ k = 0.
\end{array}
\right. \label{eq:dist_after_arrival}
\end{equation}

Let $\v{\phi}, \v{\psi}$ be two distributions on $(0,\ldots,K)$ such that $\v{\phi} \succ^d \v{\psi}$, with $\phi(k) > 0, \psi(k) > 0$ for all $k$.  We compare four scenarios for customer $i$: (i) $\v{\nu}_i^{\v{Z}} = \v{\phi}$, and $Z_i = 1$; (ii) $\v{\nu}_i^{\v{Z}} = \v{\phi}$, and $Z_i = 0$; (iii) $\v{\nu}_i^{\v{Z}} = \v{\psi}$, and $Z_i = 1$; and (iv) $\v{\nu}_i^{\v{Z}} = \v{\psi}$, and $Z_i = 0$.  Let $\v{\phi}^\dag$ and $\v{\phi}^\ddag$ denote the distribution of the state {\em after} customer $i$ arrives, in cases (i) and (ii) respectively; and let $\v{\psi}^\dag$ and $\v{\psi}^\ddag$ denote the distribution of the state after customer $i$ arrives, in cases (iii) and (iv) respectively.  

We show several dominance relationships between these distributions, which we refer to as (DR1)-(DR5).  
\begin{enumerate}
\item[(DR1)] $\v{\phi}^\dag \succ^d \v{\phi}^\ddag$.  Using \eqref{eq:dist_after_arrival}, a straightforward calculation yields for each $k > 0$:
\begin{align*}
\sum_{k' \geq k} \phi^\dag(k') &= \phi(K)  + \cdots + \phi(k) +  \phi(k-1) p_1(k-1)\\
&\geq \phi(K) + \cdots + \phi(k) + \phi(k-1) p_0(k-1)\\
&= \sum_{k' \geq k} \phi^\ddag(k'),
\end{align*}
with strict inequality for any $k$ such that $p_1(k-1) > p_0(k-1)$; note that such a $k$ must exist, since the intervention is strictly positive.  

\item[(DR2)] $\v{\psi}^\dag \succ^d \v{\psi}^\ddag$.  This follows by an analogous argument to the previous step.

\item[(DR3)] $\v{\phi}^\dag \succ^d \v{\psi}^\dag$.  Using \eqref{eq:dist_after_arrival}, we obtain for each $k > 0$:
\begin{align*}
\sum_{k' \geq k} \phi^\dag(k') &= \phi(K)  + \cdots + \phi(k) +  \phi(k-1) p_1(k-1)\\
&= (\phi(K) + \cdots + \phi(k) +  \phi(k-1)) p_1(k-1) \\
&\quad + ( \phi(K)  + \cdots + \phi(k)) (1 - p_1(k-1)) \\
&\geq (\psi(K) + \cdots + \psi(k) +  \psi(k-1)) p_1(k-1)  \\
&\quad + ( \psi(K) + \cdots + \psi(k)) (1 - p_1(k-1))\\
&= \sum_{k' \geq k} \psi^\dag(k'),
\end{align*}
with strict inequality for any $k$ where $\sum_{k' \geq k} \phi(k) > \sum_{k' \geq k} \psi(k)$, as required.  

\item[(DR4)] $\v{\phi}^\ddag \succ^d \v{\psi}^\ddag$. This follows by an analogous argument to the previous step.

\item[(DR5)] $\v{\phi}^\dag \succ^d \v{\psi}^\ddag$.  This follows by combining the previous steps.
\end{enumerate}

{\em Step 2: Characterize the behavior of customer departures between customer arrivals.}  Observe that between the arrival of customer $i-1$ and the arrival of customer $i$, the state can only evolve through customer departures.  Accordingly, define the following generator $\v{D}$ for a continuous-time pure death process corresponding to the departure process:
\[ \v{D}(k,k') = \left\{ \begin{array}{cl}
\tau_k,&\ \ k' = k-1;\\
-\tau_k,&\ \ k' = k;\\
0,&\ \ \text{otherwise}.
\end{array} \right. \]
Associated to $\v{D}$, we define the $t$-step transition probabilities as follows (abusing notation):
\[ \v{D}^t(k,k') = \exp(t\v{D}) = \sum_{m \geq 0} \frac{(t\v{D})^m}{m!}. \]
It is well known that such matrices $\v{D}$ (and in fact, general birth-death chains) are stochastically monotone, in the sense that they preserve stochastic dominance between distributions; for details, see \cite{keilson1977monotone}, Example 2.3(b).\footnote{A subtlety here is that the definition of stochastic dominance used in \cite{keilson1977monotone} does not require strict inequality in at least one state $k$, as we do in Definition \ref{def:stochdom}. However, it is straightforward to show that the results in Example 2.3(b) of \cite{keilson1977monotone} generalize to the stronger form of stochastic dominance in Definition \ref{def:stochdom} for any irreducible birth-death chain of the type we consider in \eqref{eq:generator}; we omit the details.}

In particular, if $\v{\phi} \succ^d \v{\psi}$, then we obtain that for all $t$:
\begin{equation}
\v{\phi} \v{D}^t \succ^d \v{\psi}\v{D}^t. \label{eq:time_t_dominance}
\end{equation}

{\em Step 3: Complete the proof via induction.}  We prove the proposition by induction on customers.  For $i = 1$, the result holds because $\v{\nu}_1^{\v{Z}}$ does not depend on the treatment assignment $\v{Z}$, so $\v{\nu}_1^{\v{0}} = \v{\nu}_1^{\v{Z}} = \v{\nu}_1^{\v{1}}$.  Suppose the result holds for customers $1, \ldots, i-1$.

We observe that $\nu_i^{\v{Z}}$ can be informally obtained from $\nu_{i-1}^{\v{Z}}$ as follows: first, we realize the booking outcome of customer $i-1$; and second, we allow customers to depart during the intearrival time between customer $i$ and $i-1$, i.e., $A_i - A_{i-1}$.  Note that $A_i - A_{i-1}$ is an independent exponential random variable with mean $1/\lambda$, due to the Poisson arrival process.  Formally, we have:
\begin{equation}
\v{\nu}_i^{\v{Z}} = \int_0^{\infty} \lambda e^{-\lambda t}\v{\nu}_{i-1}^{\v{Z}\dag} \v{D}^t dt. \label{eq:induction}
\end{equation}

Now we use (DR1)-(DR5), \eqref{eq:time_t_dominance}, and \eqref{eq:induction} to check each case in the proposition for customer $i$, varying whether customer $i-1$ satisfied (1), (2), or (3) in the proposition, and also varying the treatment status of customer $i-1$.  
\begin{enumerate}
\item If $\v{\nu}_{i-1}^{\v{\v{Z}}} = \v{\nu}_{i-1}^{\v{1}}$ and $Z_{i-1} = 1$, then $\v{\nu}_i^{\v{Z}} =\v{\nu}_i^{\v{1}}$, so (1) holds.
\item If $\v{\nu}_{i-1}^{\v{\v{Z}}} = \v{\nu}_{i-1}^{\v{0}}$ and $Z_{i-1} = 0$, then $\v{\nu}_i^{\v{Z}} =\v{\nu}_i^{\v{0}}$, so (2) holds.
\item If $\v{\nu}_{i-1}^{\v{\v{Z}}} = \v{\nu}_{i-1}^{\v{1}}$ and $Z_{i-1} = 0$, then (3) holds by (DR2) applied for customer $i-1$, combined with \eqref{eq:time_t_dominance} and \eqref{eq:induction}. 
\item If $\v{\nu}_{i-1}^{\v{\v{Z}}} = \v{\nu}_{i-1}^{\v{0}}$ and $Z_{i-1} = 1$, a symmetric argument to the previous step shows (3) holds, using (DR1) instead.
\item If $\v{\nu}_{i-1}^{\v{0}} \prec^d \v{\nu}_{i-1}^{\v{Z}} \prec^d \v{\nu}_{i-1}^{\v{1}}$, and $Z_{i-1} = 1$, then we can show that $\v{\nu}_{i}^{\v{0}} \prec^d \v{\nu}_i^{\v{Z}}$ using (DR5), combined with \eqref{eq:time_t_dominance} and \eqref{eq:induction}.  We can show that $\v{\nu}_{i}^{\v{Z}} \prec^d \v{\nu}_i^{\v{1}}$ using (DR3), combined with \eqref{eq:time_t_dominance} and \eqref{eq:induction}.  Thus (3) holds.
\item If $\v{\nu}_{i-1}^{\v{0}} \prec^d \v{\nu}_{i-1}^{\v{Z}} \prec^d \v{\nu}_{i-1}^{\v{1}}$, and $Z_{i-1} = 0$,  
then we can show that $\v{\nu}_{i}^{\v{0}} \prec^d \v{\nu}_i^{\v{Z}}$ using (DR4), combined with \eqref{eq:time_t_dominance} and \eqref{eq:induction}.  We can show that $\v{\nu}_{i}^{\v{Z}} \prec^d \v{\nu}_i^{\v{1}}$ using (DR5), combined with \eqref{eq:time_t_dominance} and \eqref{eq:induction}.  Thus again (3) holds.
\end{enumerate} 

In all cases we have shown that one of (1), (2), or (3) holds.  Finally, since $N_0 > 1$ and $N_1 > 1$, it follows from our derivation that there must be at least one customer for whom (3) holds, and for all subsequent customers (3) holds.  This completes the proof.
\end{proof}

\begin{proof}[Proof of Corollary \ref{cor:invariant_start_dist}]
If the first customer arrives to find the queue in $\v{\pi}_0$ (i.e., $\v{\nu}_1 = \v{\pi}_0$) and $Z_i = 0$ for all $i$, then we remain in the global control system for all customers.  Therefore every subsequent customer also arrives to find the queue in $\v{\pi}_0$ by the PASTA (Poisson arrivals see time averages) property.
Thus in this case $\v{\pi}_i^{\v{0}} = \v{\pi}_0$ for all $i$.   If instead $\v{\nu}_1 = \v{\pi}_1$, then $\v{\nu}_i^{\v{1}} = \v{\pi}_1$ for all $i$.

Now suppose that $\v{\nu}_1 = \v{\pi}_0$.  We adopt the same notation and definitions as the proof of Proposition \ref{prop:distributiondom}.  Consider customer $2$.  Using (DR1), \eqref{eq:time_t_dominance}, and \eqref{eq:induction}, it follows that $\v{\pi}_0 = \v{\nu}_2^{\v{0}} \prec^d \v{\nu}_2^{\v{1}}$.  With this as the base case, now suppose the inductive hypothesis holds that $\v{\pi}_0 = \v{\nu}_j^{\v{0}} \prec^d \v{\nu}_j^{\v{1}}$ for all customers $j = 2, \ldots, i-1$.  By using (DR5) we obtain $\v{\nu}_{i-1}^{\v{1}\dag} \succ^d \v{\nu}_{i-1}^{\v{0}\ddag}$.  Combining with \eqref{eq:time_t_dominance} and \eqref{eq:induction}, we obtain $\v{\pi}_0  = \v{\nu}_i^{\v{0}} \prec^d \v{\nu}_i^{\v{1}}$.  Thus $\v{\pi}_0 \prec^d \v{\nu}_i^{\v{1}}$ for all $i$.  

To complete the proof, take limits as $i \to \infty$ in case (3) of Proposition \ref{prop:distributiondom}.  By PASTA, we have $\v{\nu}_i^{\v{0}} \to \v{\pi}_0$, and $\v{\nu}_i^{\v{1}} \to \v{\pi}_1$.  Thus $\v{\pi}_0 \prec^d \v{\pi}_1$.  Now suppose we let $\v{\psi} = \v{\pi}_0$, and $\v{\phi} = \v{\pi}_1$.  Then applying (DR3), together with \eqref{eq:time_t_dominance} and \eqref{eq:induction}, we conclude that if $\v{\nu}_1 = \v{\pi}_0$, then $\v{\nu}_2^{\v{1}} \prec^d \v{\pi}_1$.  We can continue in this way, inductively applying (DR3) together with \eqref{eq:time_t_dominance} and \eqref{eq:induction}, to conclude that if $\v{\nu}_1 = \v{\pi}_0$, then $\v{\nu}_i^{\v{1}} \prec^d \v{\pi}_1$ for all $i > 1$.  

A similar set of arguments can be used to establish that if $\v{\nu}_1 = \v{\pi}_1$, then $\v{\pi}_0 \prec^d \v{\nu}_i^{\v{0}} \prec^d \v{\pi}_1$ for all $i > 1$; we omit the details.  This completes the proof.
\end{proof}

\begin{proof}[Proof of Theorem \ref{thm: pos bias DM}]
We start by noting that given a treatment assignment vector $\v{Z}$, if customer $i$ has treatment status $Z_i = z$ and sees state distribution $\v{\nu}_i^{\v{Z}}$ on arrival, then:
\begin{equation}
\E[Y_i |  \v{Z}] = \sum_k \nu_i^{\v{Z}}(k) p_z(k). \label{eq:EY_i}
\end{equation} 

We define one additional piece of notation that will be useful for this proof.  If $\v{\phi}$, $\v{\psi}$ are two distributions on $\{0,\ldots,K\}$, then we write $\v{\phi} \succeq^d \v{\psi}$ if either $\v{\phi} \succ^d \v{\psi}$ or $\v{\phi} = \v{\psi}$.

Now let $\v{\zeta}$ denote a treatment assignment vector with $\sum_{i = 1}^N \zeta_i = N_1 > 0$, and $N_0 = N - N_1 > 0$.  Recall that $\v{\nu}_1 = \v{\pi}_0$, by Assumption \ref{as: first customer distr}.  Combining Corollary \ref{cor:invariant_start_dist} with Proposition \ref{prop:distributiondom}, we can conclude that for all $i$:
\[ \v{\pi}_0 = \v{\nu}_i^{\v{0}} \preceq^d \v{\nu}_i^{\v{\zeta}} \preceq^d \v{\nu}_i^{\v{1}} \preceq^d \v{\pi}_1.  \]
In particular, for any customer $i$, we have:
\begin{equation}
\v{\pi}_0 \preceq^d \v{\nu}_i^{\v{\zeta}} \preceq^d \v{\pi}_1.
\label{eq:ss_dom_relationships} 
\end{equation}

Now for any customer $i$ with $\zeta_i = 1$, if we take the expectation of $\v{p}_1$ in the second relation in \eqref{eq:ss_dom_relationships} and consider \eqref{eq:EY_i}, then we obtain:
\begin{equation}
 \E[Y_i | \v{Z} = \v{\zeta}] \geq \sum_k \pi_1(k) p_1(k) = \rho_1.
\label{eq:rho_bound_1}
\end{equation}
Similarly, for any customer $i$ with $\zeta_i = 0$, if we take the expectation of $\v{p}_0$ in the first relation in \eqref{eq:ss_dom_relationships} and consider \eqref{eq:EY_i}, then we obtain:
\begin{equation}
 \E[Y_i | \v{Z} = \v{\zeta}] \leq \sum_k \pi_0(k) p_0(k) = \rho_0.
\label{eq:rho_bound_2}
\end{equation}
Note that there must be a customer $i$ such that case (3) of Proposition \ref{prop:distributiondom} holds for all $j \geq i$, it is straightforward to check that for all these customers $j$, the corresponding inequality in \eqref{eq:rho_bound_1}-\eqref{eq:rho_bound_2} is strict.  

Thus we conclude that:
\begin{align*}
\E[\widehat{\GTE} | \v{Z} = \v{\zeta}] &= \frac{1}{N_1} \sum_i \zeta_i \E[Y_i | \v{Z} = \v{\zeta}]  - \frac{1}{N_0} \sum_i (1 - \zeta_i) \E[Y_i | \v{Z} = \v{\zeta}] \\
&> \rho_1 - \rho_0 = \GTE.
\end{align*}
Since this holds for every $\v{\zeta}$ with $N_1 > 0$ and $N_0 > 0$, we conclude:
\[ \E[\widehat{\GTE} | N_1 > 0, N_0 > 0]  > \GTE. \]
Finally, note that since $\v{\pi}_1 \succ^d \v{\pi}_0$ by Corollary \ref{cor:invariant_start_dist}, and the treatment is strictly positive, it follows that $\GTE > 0$.  All the assertions are reversed if the treatment is strictly negative, completing the proof.
\end{proof}

\begin{proof}[Proof of Theorem \ref{thm: pos bias limit}]
We first show that for any $0 < a < 1$ we have $\v{\pi}_0 \prec^d \v{\pi}_a\prec^d \v{\pi}_1$.  We use Proposition \ref{prop:distributiondom} together with the PASTA property (Poisson arrivals see time averages).  Let $Z_1, Z_2, \ldots$ be an i.i.d.~Bernoulli($a$) sequence.  Suppose Assumption \ref{as: first customer distr} holds.  Let $N_{1N} = \sum_{i = 1}^N Z_i$, and let $N_{0N} = N - N_{1N}$.  Let $\v{Z}_N = (Z_1, \ldots, Z_N)$.  Observe that from Proposition \ref{prop:distributiondom}, it follows that:
\[ \nu_i^{\v{0}} \prec^d \E[ \nu_i^{\v{Z}_i} | N_{0i} > 0, N_{1i} > 0] \prec^d \nu_i^{\v{1}}, \]
where the expectation is only over the randomness in $\v{Z}_i$.  Further, observe that $\E[ \nu_i^{\v{Z}_i}]$ is the state distribution just prior to the arrival of the $i$'th customer in a Bernoulli($a$) customer-randomized experiment.  Taking $N \to \infty$, noting that $P(N_{0i} = 0 \text{ or } N_{1i} = 0) \to 0$ as $N \to \infty$, and applying PASTA, we conclude that:
\[ \v{\pi}_0 \prec^d \v{\pi}_a \prec^d \v{\pi}_1. \]

Now note that Assumption \ref{as:booking_prob_monotone} implies
  \begin{align}
      \sum_{k} p_1(k) \pi_a(k) > \sum_{k} p_1(k) \pi_1(k);\\
      \sum_{k} p_0(k) \pi_0(k) > \sum_{k} p_0(k) \pi_a(k),
  \end{align}
  which allows us to conclude that
  \[\ADE_a= \sum_{k}(p_1(k)- p_0(k)) \pi_a(k)>\sum_{k} p_1(k) \pi_1(k)-\sum_{k} p_0(k) \pi_0(k)=\GTE.\]
\end{proof}

\section{Statistical power: Additional numerics}
\label{app:sims}
 In this section, we report additional numerical results in the setting of Section \ref{sec: asymptotic power}, by visualizing performance across variation in a wider range of parameters.  In Section \ref{sec:sims var}, we present additional numerical results comparing the Cram\'{e}r-Rao lower bound $\sigma^2_{\UB}$ to the true variance of $\DM$ as well as the variance estimator $\widehat{\Var}$.  In Section \ref{sec: sims power} we present present additional numerical results comparing statistical power of the decision rule \eqref{eq:decision} using the test statistic $\hat{T}_N$ (cf.~\eqref{eq:tstat}) as compared to the same rule using any unbiased test statistic $\hat{U}_N$ (cf.~\eqref{eq:ub_tstat}).
In all cases studied, we find that if the state space is sufficiently large, we observe that $\sigma^2_{\UB}$ is much larger than either $\Var(\DM)$ or $\widehat{\Var}$, and correspondingly that the power obtained is significantly higher using $\hat{T}_N$ instead of $\hat{U}_N$.

\subsection{Setup}
\label{sec:sims setup}

Throughout the numerics in this section, we consider the following values for the model parameters:
\begin{align*}
    K &=200;\\
   \v\tau^{(K)}(k) &= k\bar\tau \text{ with } \bar{\tau}=1;\\
   \lambda^{(K)} &= K \cdot\bar{\lambda}  \text{ with } \bar{\lambda}=1.5;\\
   \bar{\epsilon}=1.0;\\
    a &= 0.5;\\
    v_0 &= 0.5;\\
    v_1&= v_0+\delta \text{ with } \delta=0.05;\\
    p_0^{(K)}(k)&= \frac{(K-k)v_0}{\bar{\epsilon}+(K-k)v_0};\\
    p_1^{(K)}(k)&= \frac{(K-k)v_1}{\bar{\epsilon}+(K-k)v_1};\\
\end{align*}
The values of $\bar{\lambda},\bar{\tau},$ and $K$ were chosen to ensure that the system is large and has sufficient interference between customers. The values of $v_0$, $\delta$, and $\bar{\epsilon}$ were chosen to produce a small treatment effect; in this case we have $\GTE = 0.022$.

In all the numerics in this section, $a$ and $v_0$ are fixed. In each set of numerics, we fix all but one of the model parameters $K$, $\bar{\lambda}/\bar{\tau}$, $\bar{\epsilon}$, $\delta$, and vary the held out parameter. 

\subsection{Variance comparisons under sign-consistent treatments}
\label{sec:sims var}

In this section, compute the scaled limit of $\widehat{\Var}_N$ from Theorem \ref{thm:var_est} and the lower bound on $\sigma^2_{\UB}$ from Theorem \ref{thm: farias} in Bernoulli randomized user-level experiments with each of $K,\lambda/\tau, \bar{\epsilon}, \delta$ varying, and the other parameters fixed as described above.  In Figure \ref{fig:CR comparison}, we see that the bound on the variance of an unbiased estimator is larger than the true variance of $\DM$ regardless of how we vary the model parameters.

\begin{figure}[ht]
    \centering
    \begin{subfigure}[b]{0.7\textwidth}
        \centering
        \includegraphics[width=\textwidth]{Sections/figures/CR_vary_K.png}
        \caption{$K$ varying from $100$ to $300$, and other model parameters fixed as in Section \ref{sec:sims setup}. }
        \label{fig:plot1}
    \end{subfigure}
    \begin{subfigure}[b]{0.7\textwidth}
        \centering
        \includegraphics[width=\textwidth]{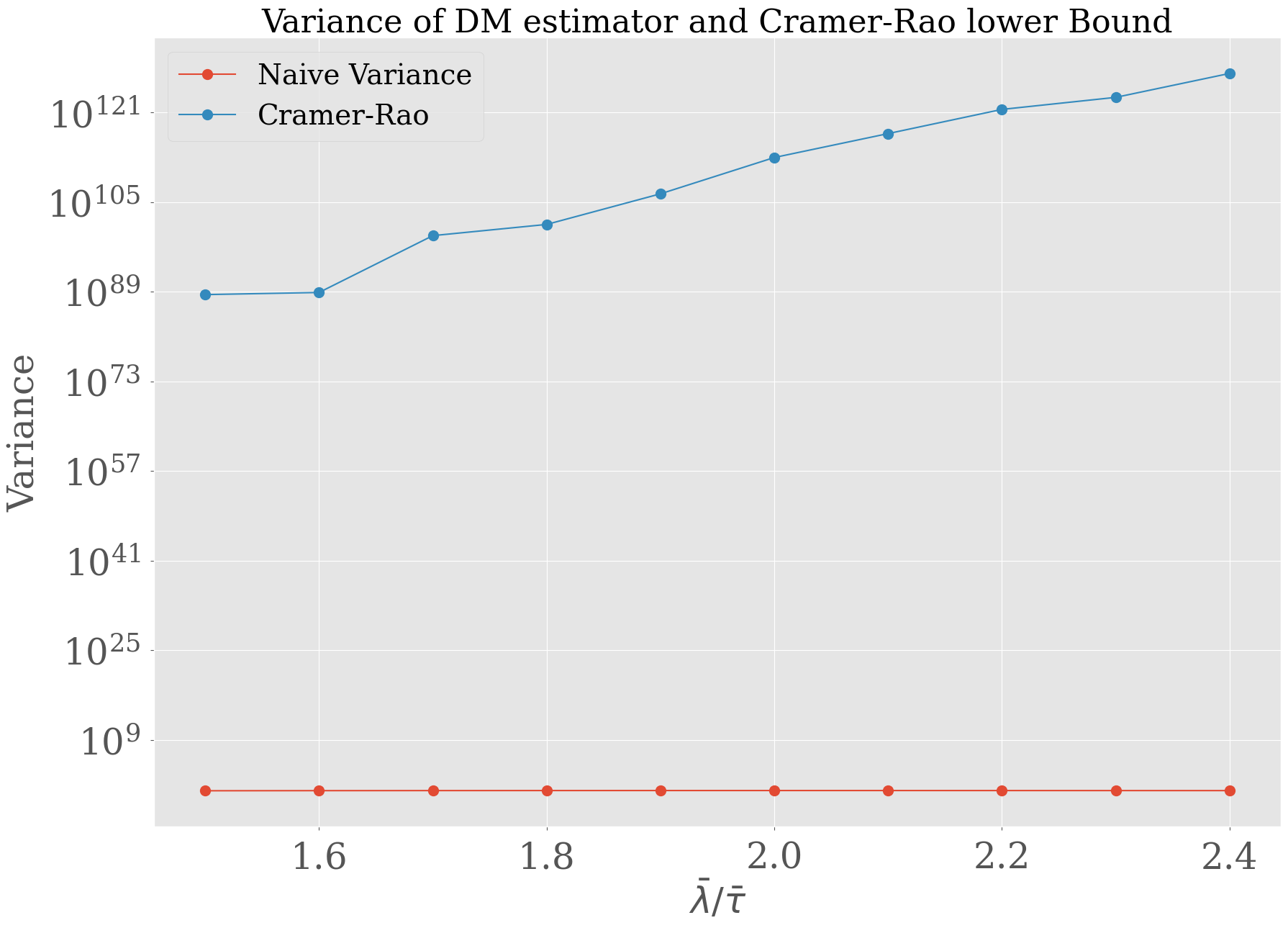}
        \caption{$\bar\lambda/\bar\tau$ varying from $1.5$ to $2.5$, and other model parameters fixed as in Section \ref{sec:sims setup}.}
        \label{fig:plot2}
    \end{subfigure}

    \caption{Asymptotic variance comparison using Bernoulli randomized user-level experiments.  The Cram\'{e}r-Rao lower bound (cf. Theorem \ref{thm: farias}) remains above the scaled limit of $\widehat{\Var}_N$  regardless of how we vary the model parameters. }
    \label{fig:CR comparison}
\end{figure}
\begin{figure}[ht]\ContinuedFloat
\centering
 \begin{subfigure}[b]{0.7\textwidth}
        \centering
        \includegraphics[width=\textwidth]{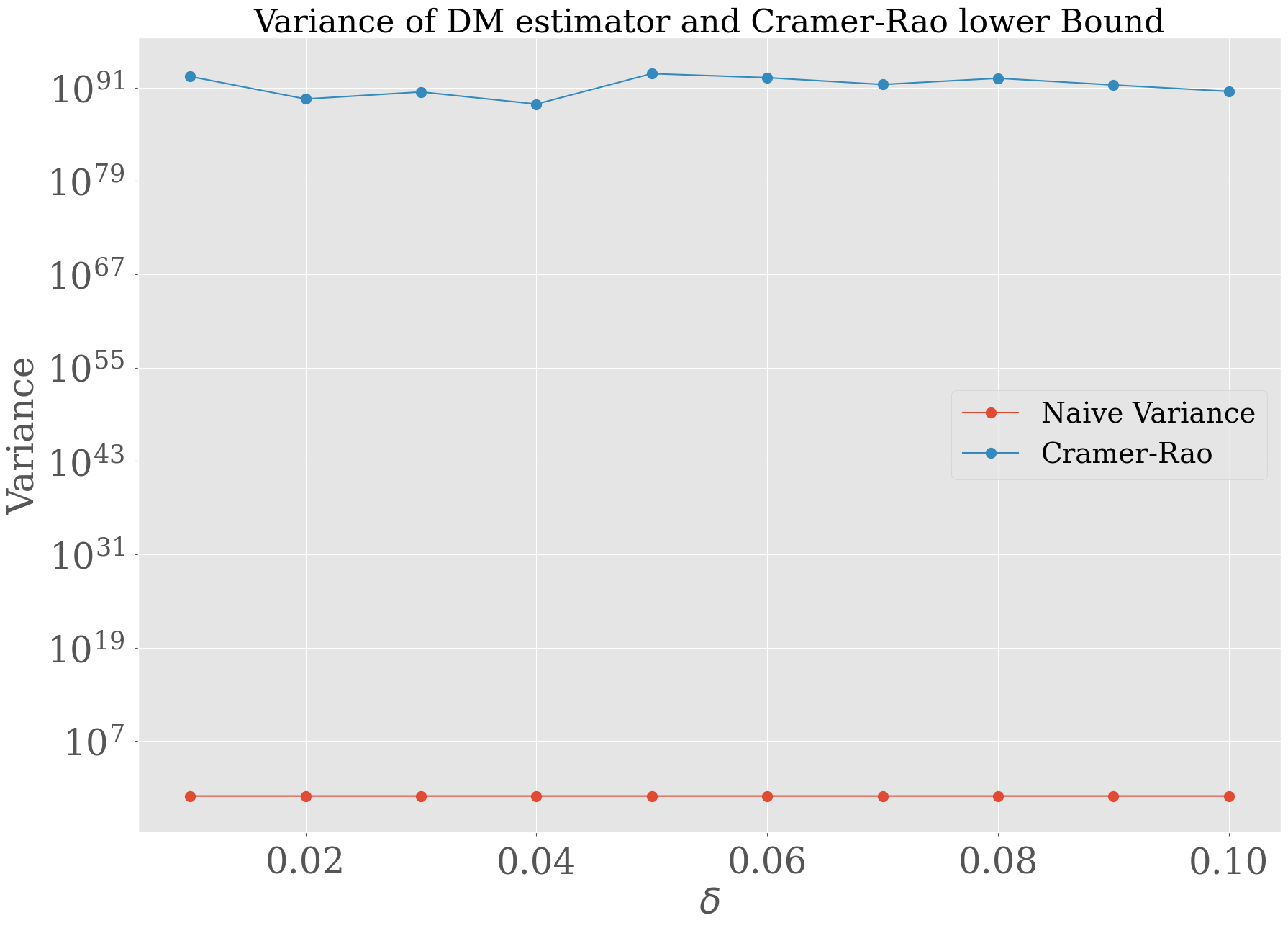}
        \caption{$\delta$ varying from $0.01$ to $0.1$, and other model parameters fixed as in Section \ref{sec:sims setup}.}
        \label{fig:plot4}
    \end{subfigure}
    \begin{subfigure}[b]{0.7\textwidth}
        \centering
        \includegraphics[width=\textwidth]{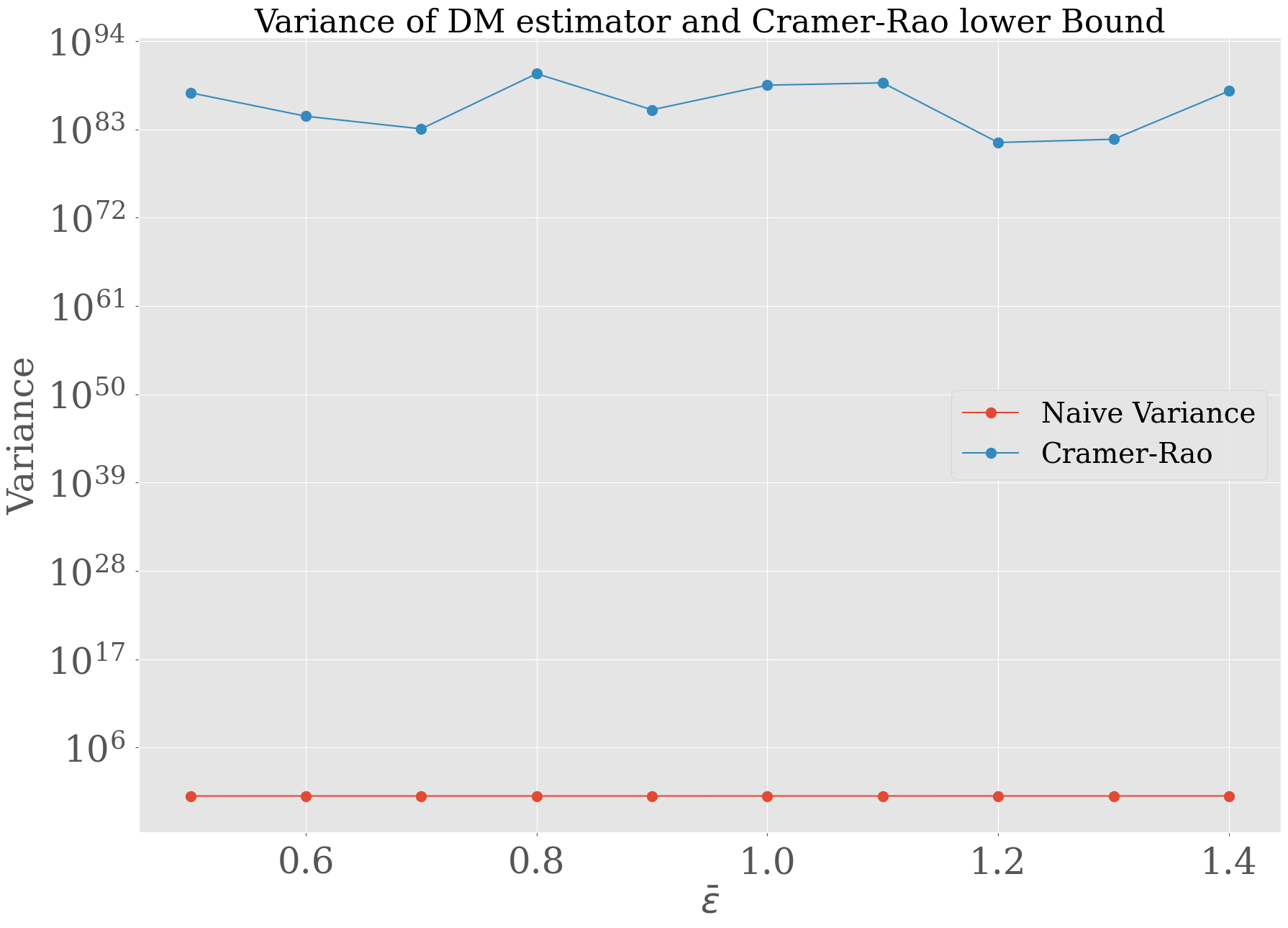}
        \caption{$\bar\epsilon$ varying from $0.5$ to $1.5$, and other model parameters fixed as in Section \ref{sec:sims setup}.}
        \label{fig:plot3}
    \end{subfigure}
    \caption{Asymptotic variance comparison using  Bernoulli randomized user-level experiments.  The Cram\'{e}r-Rao lower bound (cf. Theorem \ref{thm: farias}) remains above the scaled limit of $\widehat{\Var}_N$ regardless of how we vary the model parameters. }
 \label{fig:CR comparison2}
\end{figure}

\subsection{Power comparisons under sign-consistent treatments}
\label{sec: sims power}

In this section, we use the CLT from Theorem \ref{thm:ab_clt_cc} and the power calculation \eqref{eq:ub_power} to compute the false negative probability of the naive test statistic and any unbiased test statistic that obeys \eqref{eq:unbiased_clt}.
From Figures \ref{fig:power vary K}, \ref{fig:power vary lambda}, \ref{fig:power vary a}, and \ref{fig:power vary treatment}, we see that the decision rule using the test statistic $\hat{T}_N$ (cf.~\eqref{eq:tstat}) exhibits higher power than the same rule using any unbiased test statistic $\hat{U}_N$. In particular, the FNP of the test statistic $\hat{T}_N$ decays at a faster rate as $N$ increases in all cases.
\begin{figure}[ht!]
  \centering
    \includegraphics[width=0.49\textwidth]{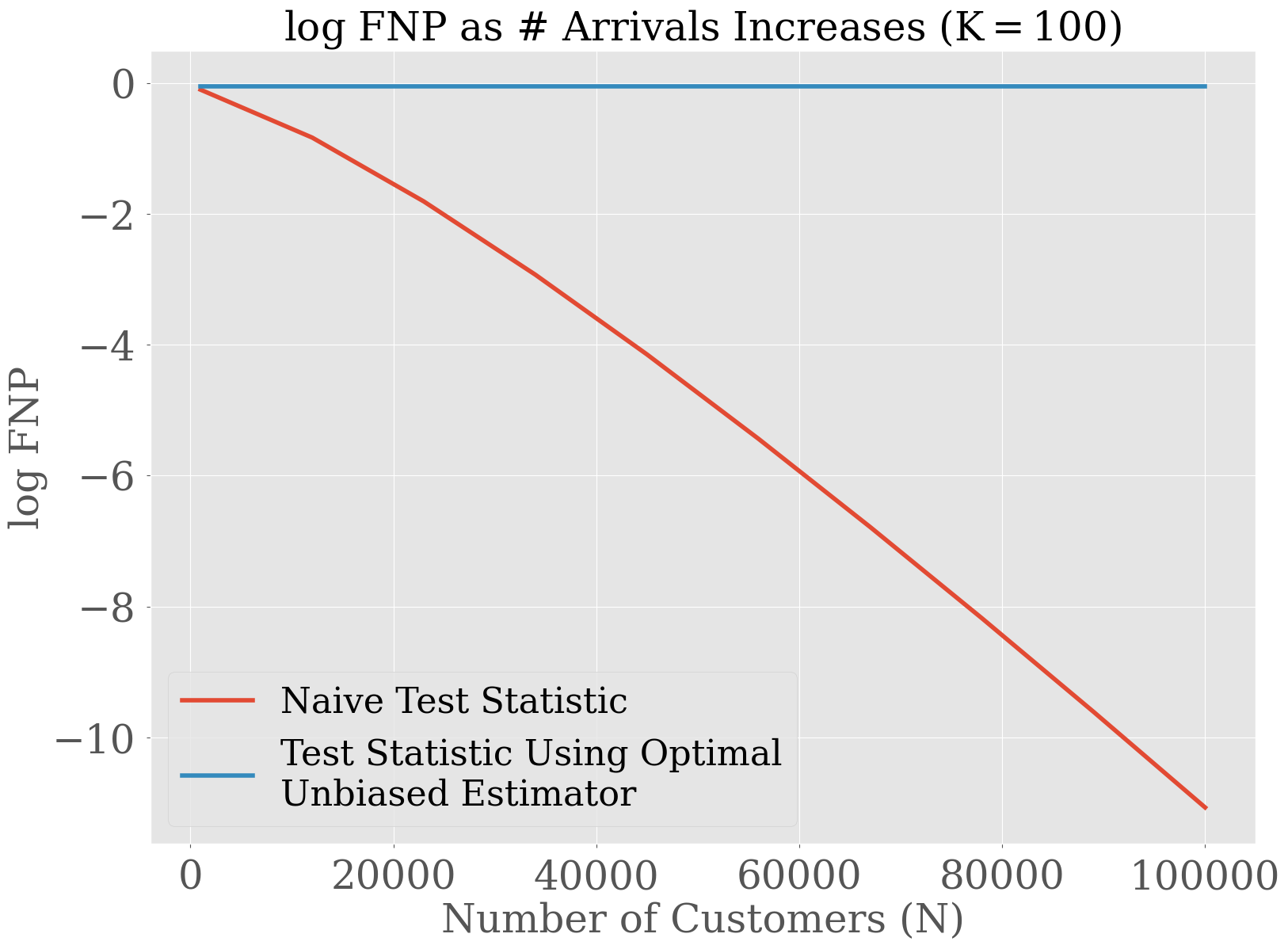}
    \includegraphics[width=0.49\textwidth]{Sections/figures/power_vary_listings/K=200.png}
    \includegraphics[width=0.49\textwidth]{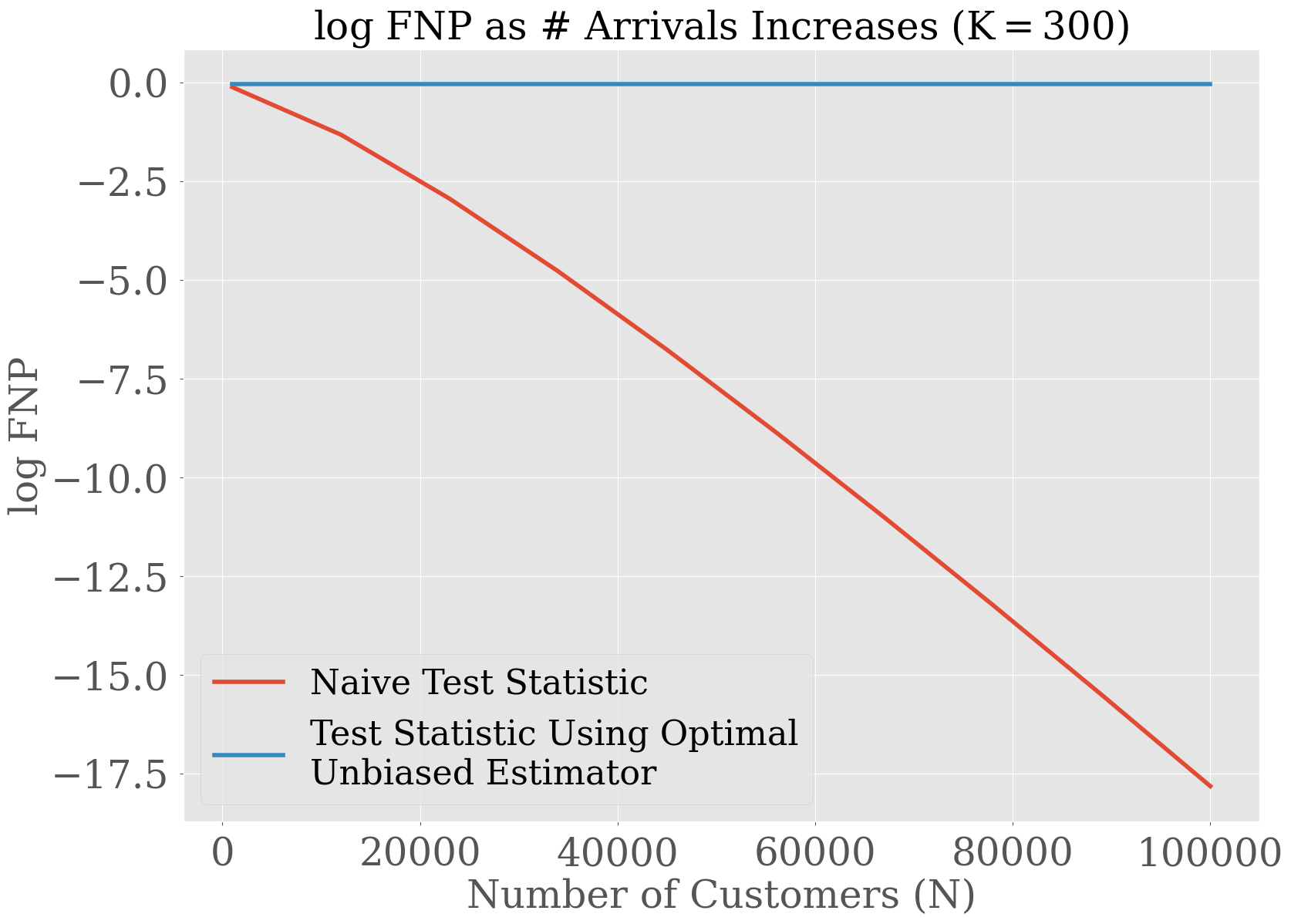}
    \includegraphics[width=0.49\textwidth]{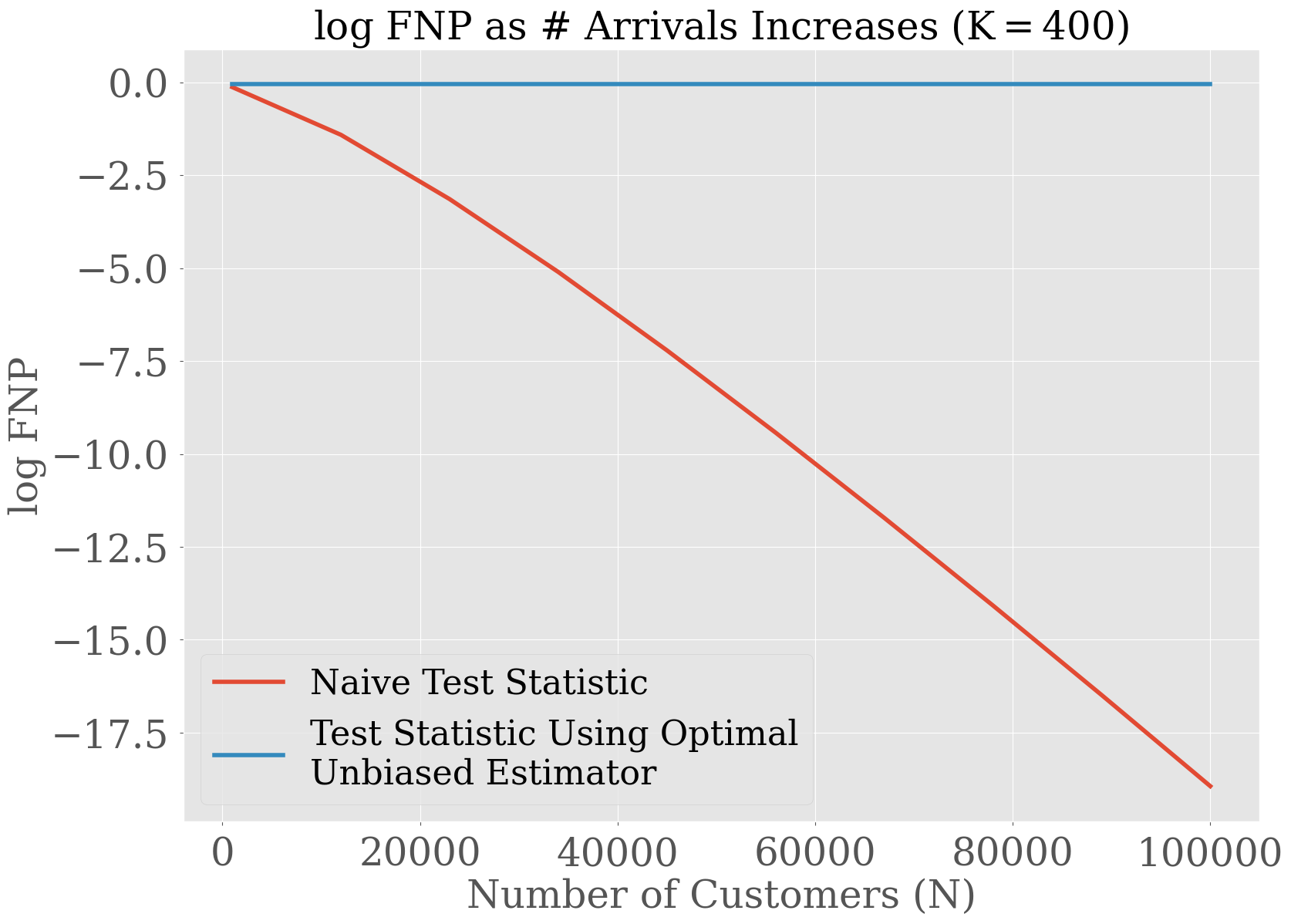}    
    \caption{  Log false negative probability of decision rule \eqref{eq:decision} using the test statistic $\hat{T}_N$ (cf.~\eqref{eq:tstat}), compared to the log false negative probability  of an unbiased test statistic (cf.~\eqref{eq:ub_power}) using an unbiased estimator that obeys \eqref{eq:unbiased_clt}. Bernoulli randomized user-level experiments with $N\in[10^3,10^5]$, $K \in\{100,200,300,400\}$, and other model parameters fixed as in Section \ref{sec:sims setup}. }
    \label{fig:power vary K}
\end{figure}

\begin{figure}[ht!]
  \centering
    \includegraphics[width=0.49\textwidth]{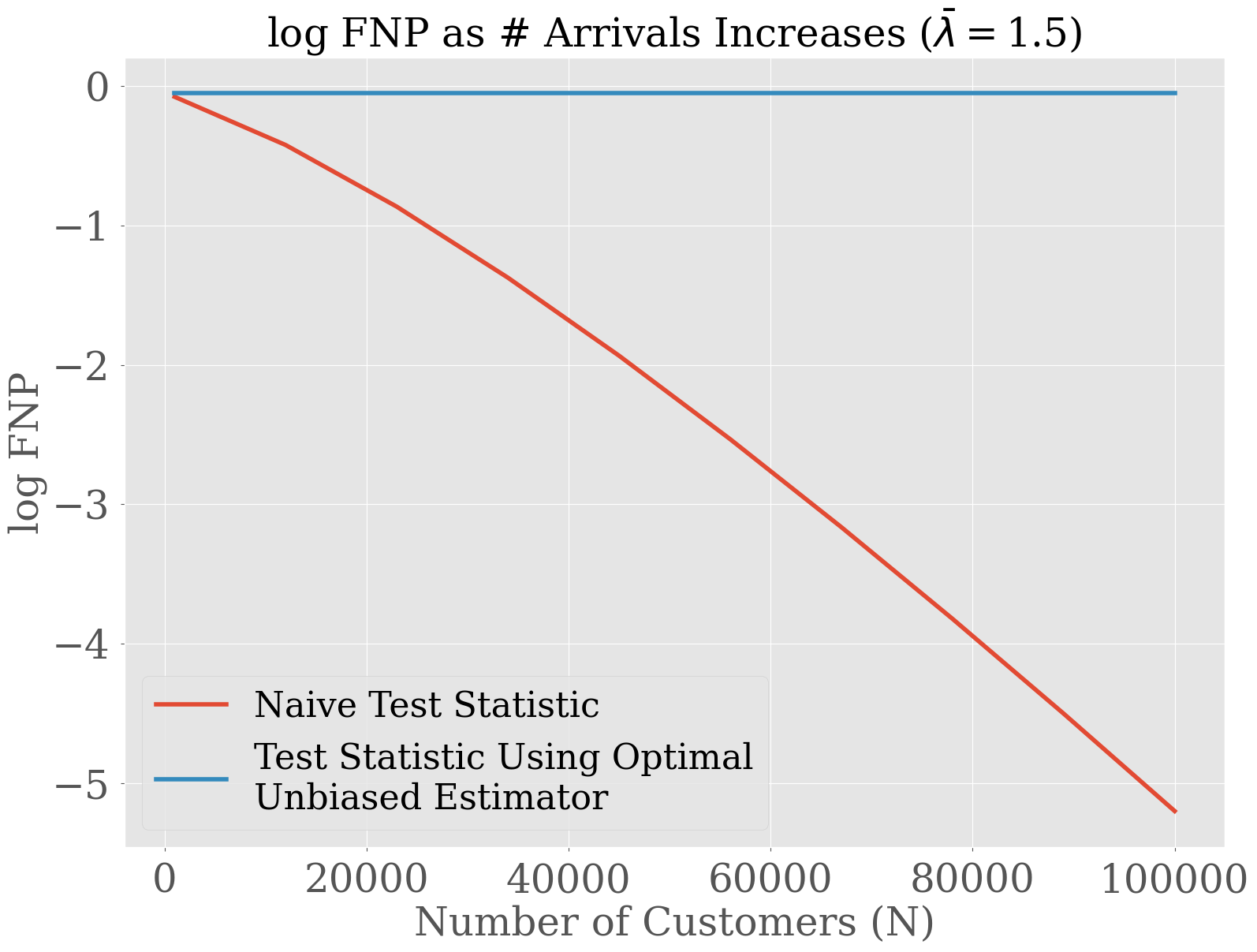}
    \includegraphics[width=0.49\textwidth]{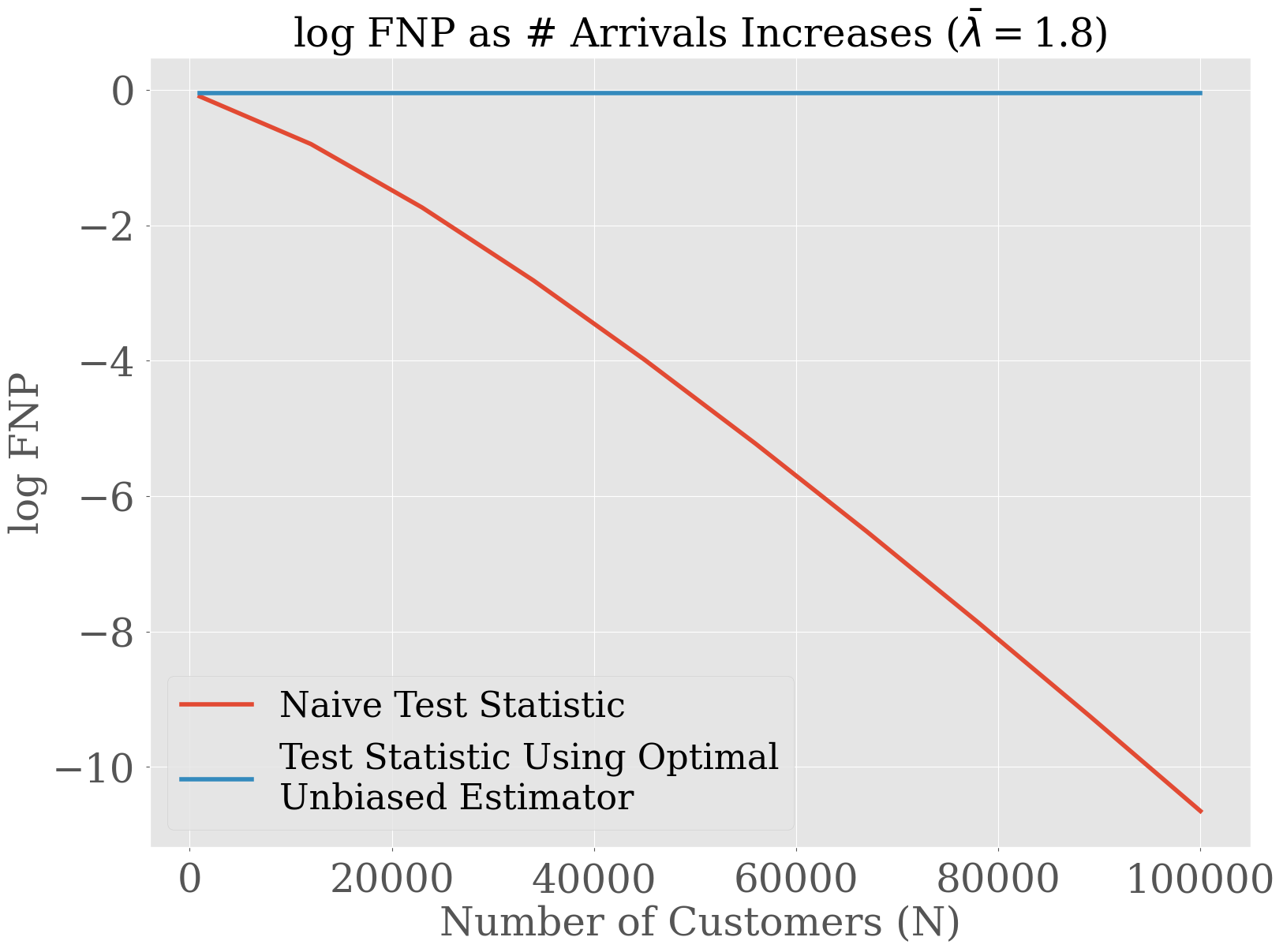}
    \includegraphics[width=0.49\textwidth]{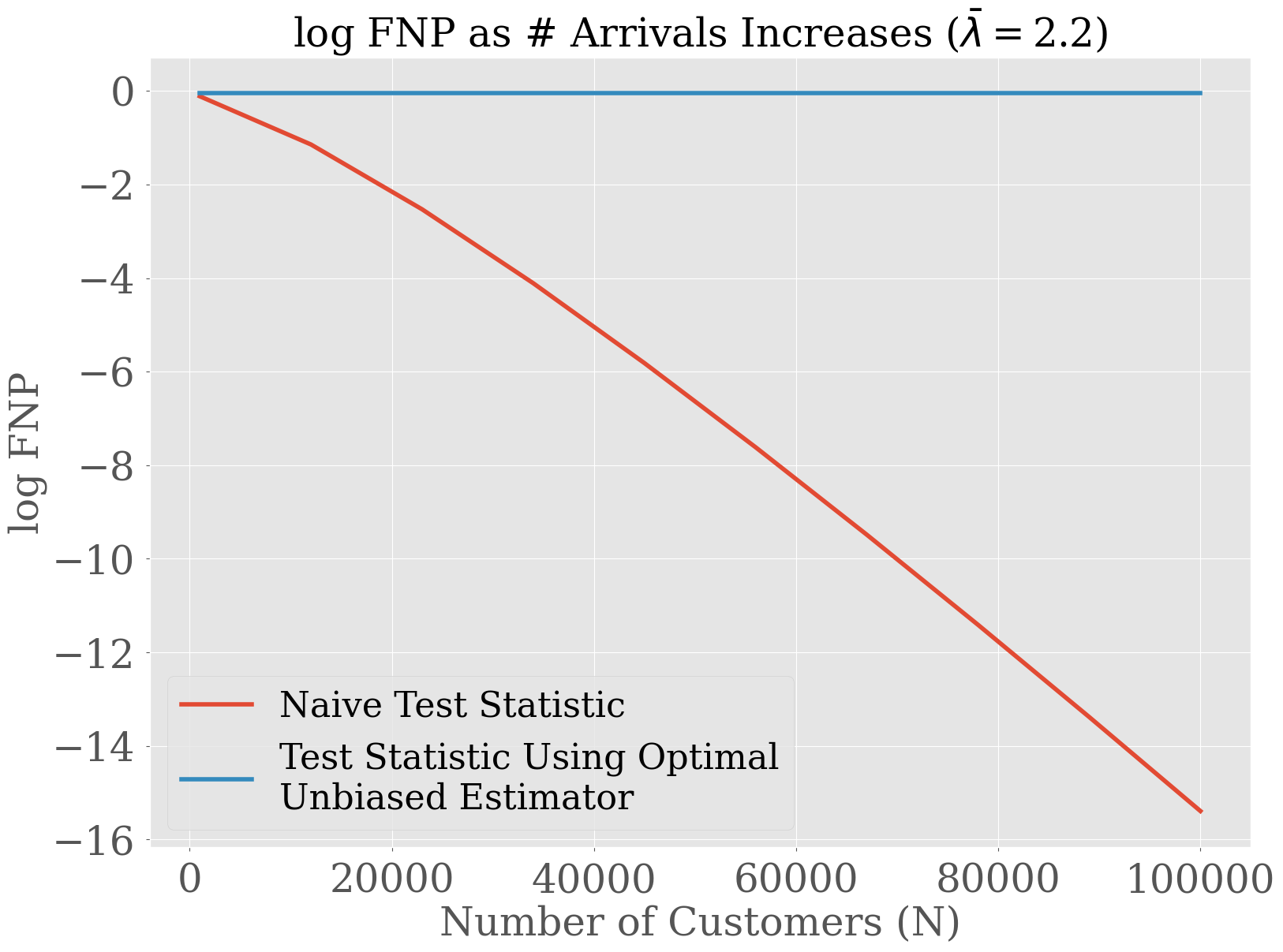}
    \includegraphics[width=0.49\textwidth]{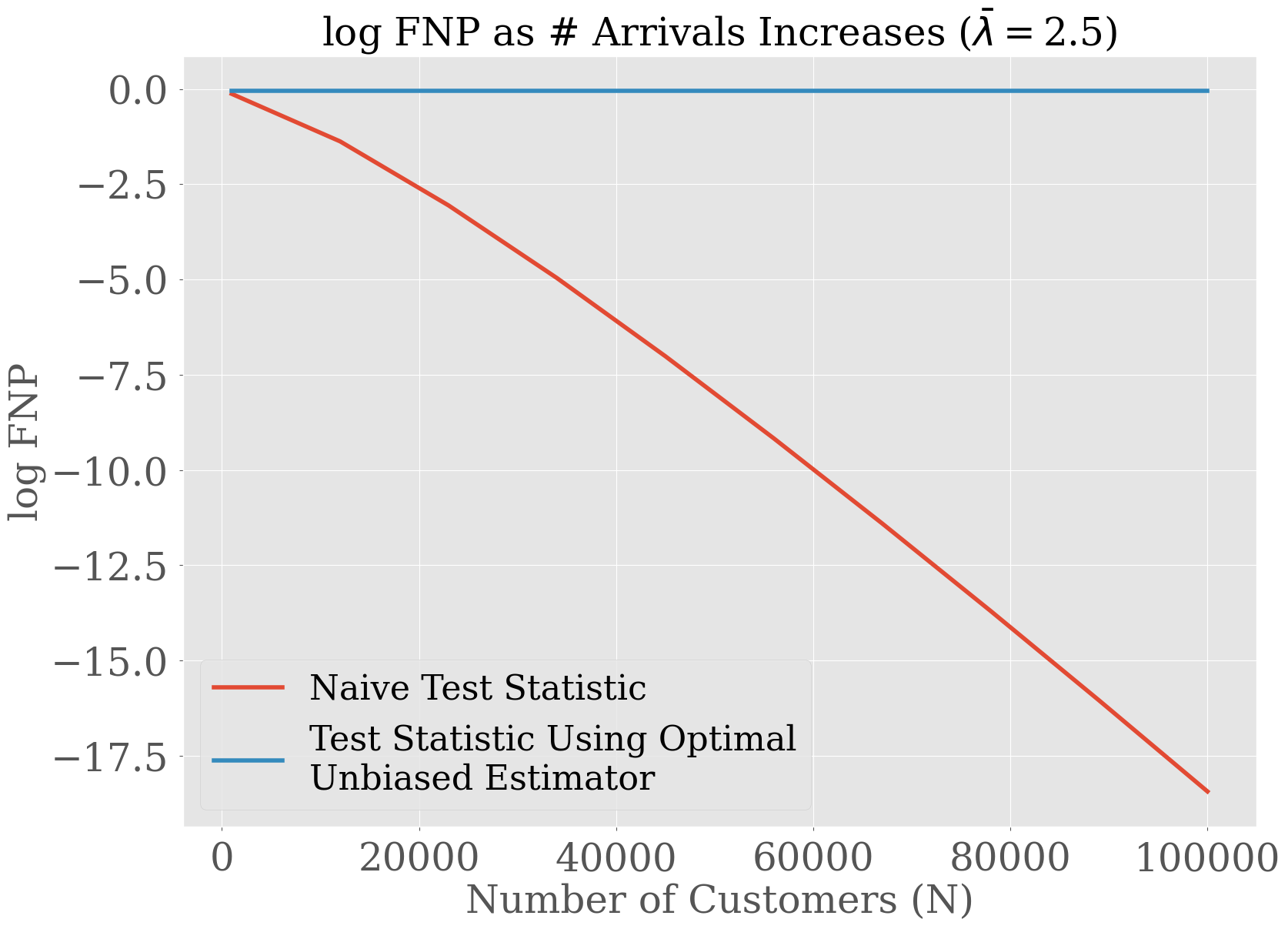}
    \caption{Log false negative probability of decision rule \eqref{eq:decision} using the test statistic $\hat{T}_N$ (cf.~\eqref{eq:tstat}), compared to the log false negative probability  of an unbiased test statistic (cf.~\eqref{eq:ub_power}) using an unbiased estimator that obeys \eqref{eq:unbiased_clt}. Bernoulli randomized user-level experiments with $N\in[10^3,10^5]$, $\bar\lambda\in\{1.5,1.8,2.2,2.5\}$, and other model parameters fixed as in Section \ref{sec:sims setup}..}
    \label{fig:power vary lambda}
\end{figure}

\begin{figure}[ht!]
  \centering
    \includegraphics[width=0.49\textwidth]{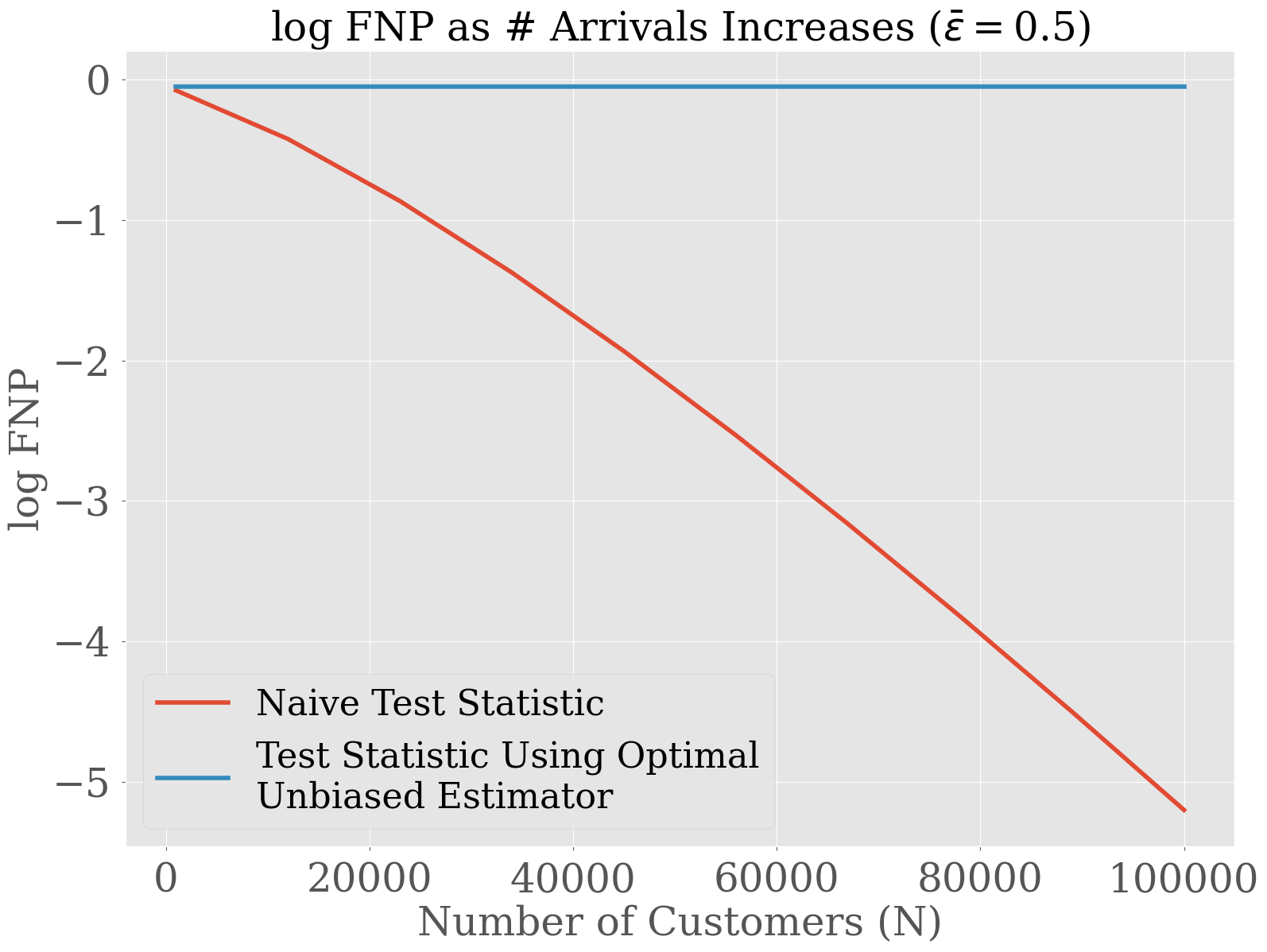}
    \includegraphics[width=0.49\textwidth]{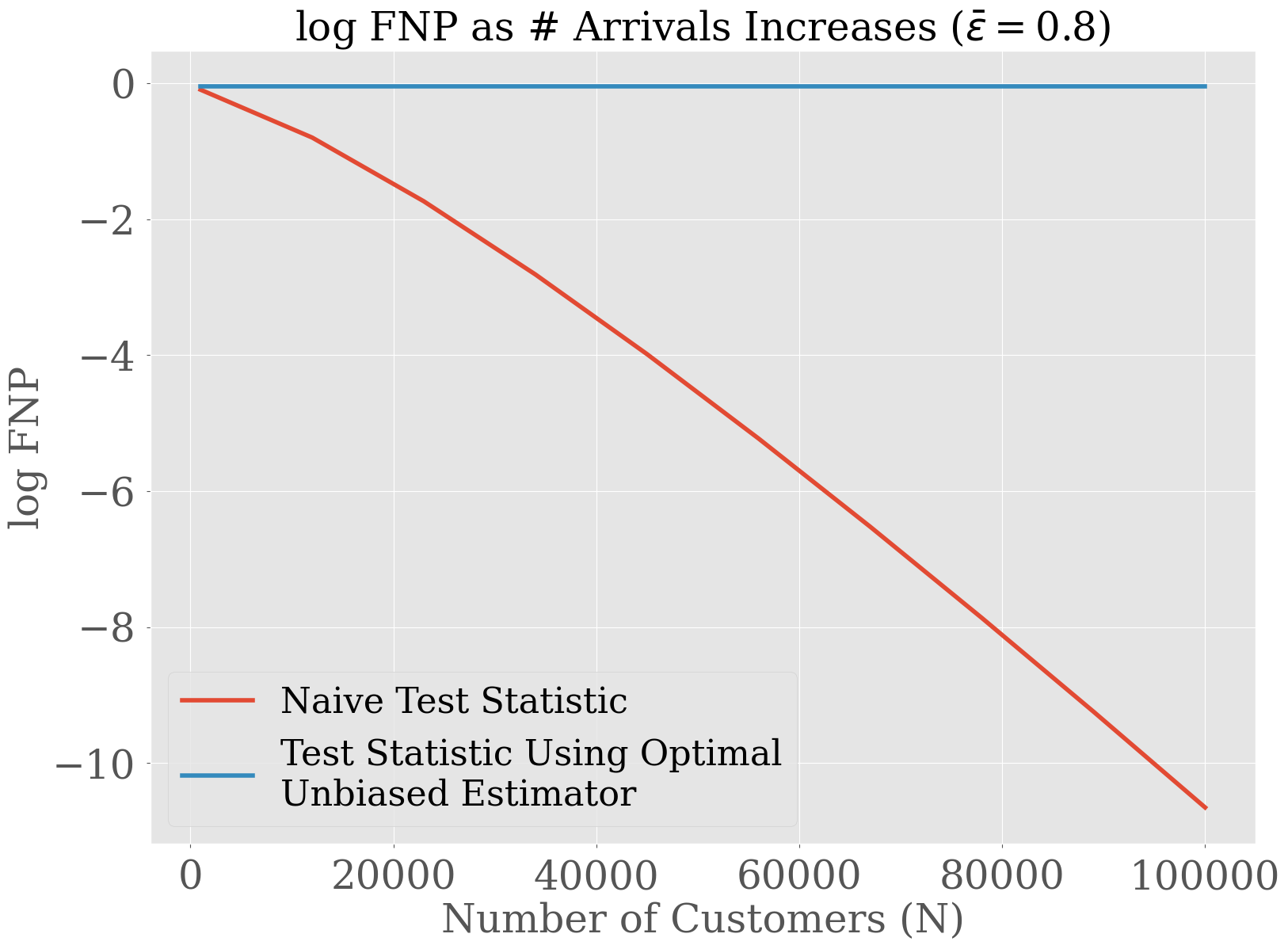}
    \includegraphics[width=0.49\textwidth]{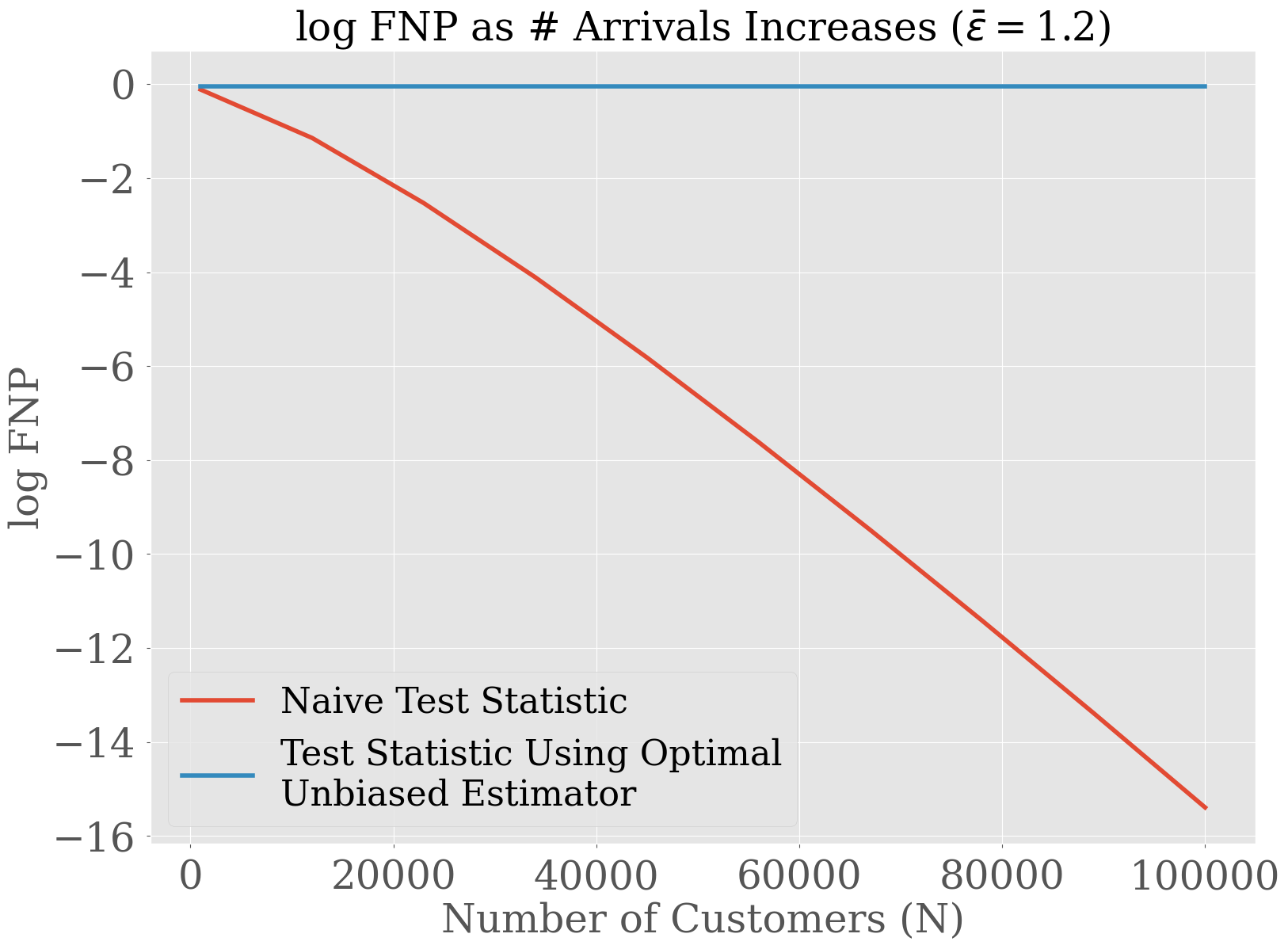}
    \includegraphics[width=0.49\textwidth]{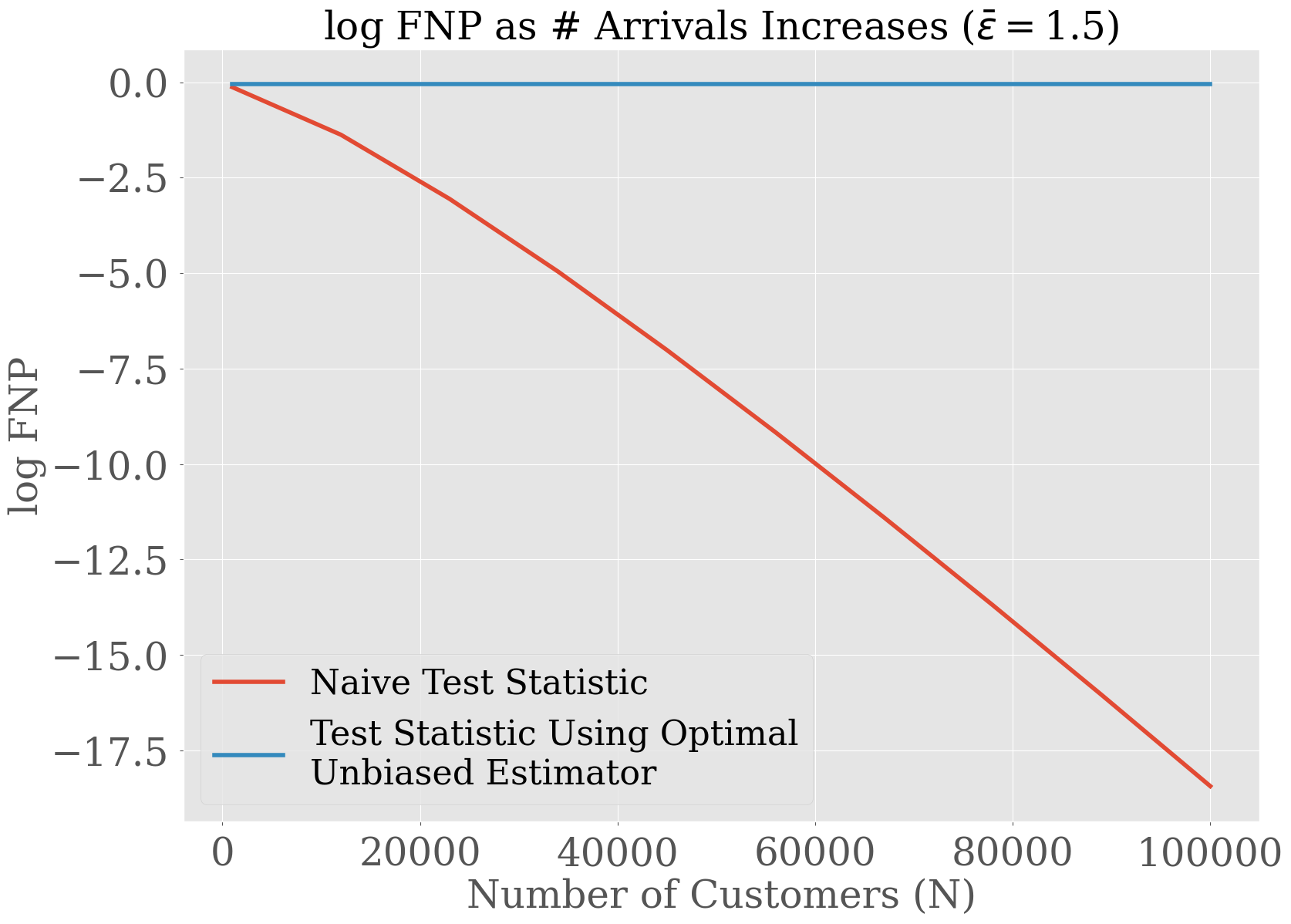}
    \caption{Log false negative probability of decision rule \eqref{eq:decision} using the test statistic $\hat{T}_N$ (cf.~\eqref{eq:tstat}), compared to the log false negative probability  of an unbiased test statistic (cf.~\eqref{eq:ub_power}) using an unbiased estimator that obeys \eqref{eq:unbiased_clt}. Bernoulli randomized user-level experiments with $N\in[10^3,10^5]$, $\bar\epsilon \in\{0.5,0.8,1.2,1.5\}$, and other model parameters fixed as in Section \ref{sec:sims setup}.}
    \label{fig:power vary a}
\end{figure}

\begin{figure}
  \centering
    \includegraphics[width=0.49\textwidth]{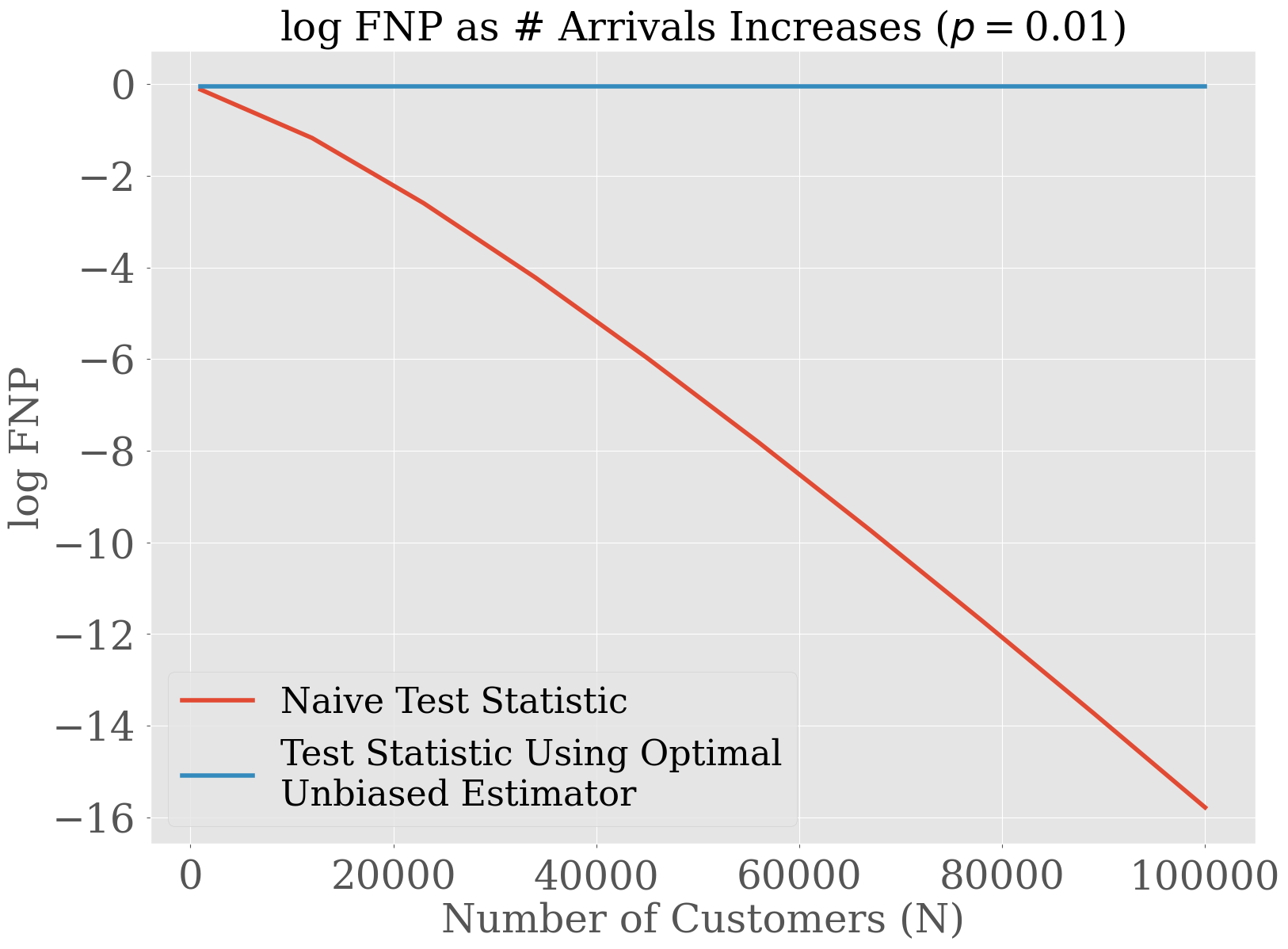}
    \includegraphics[width=0.49\textwidth]{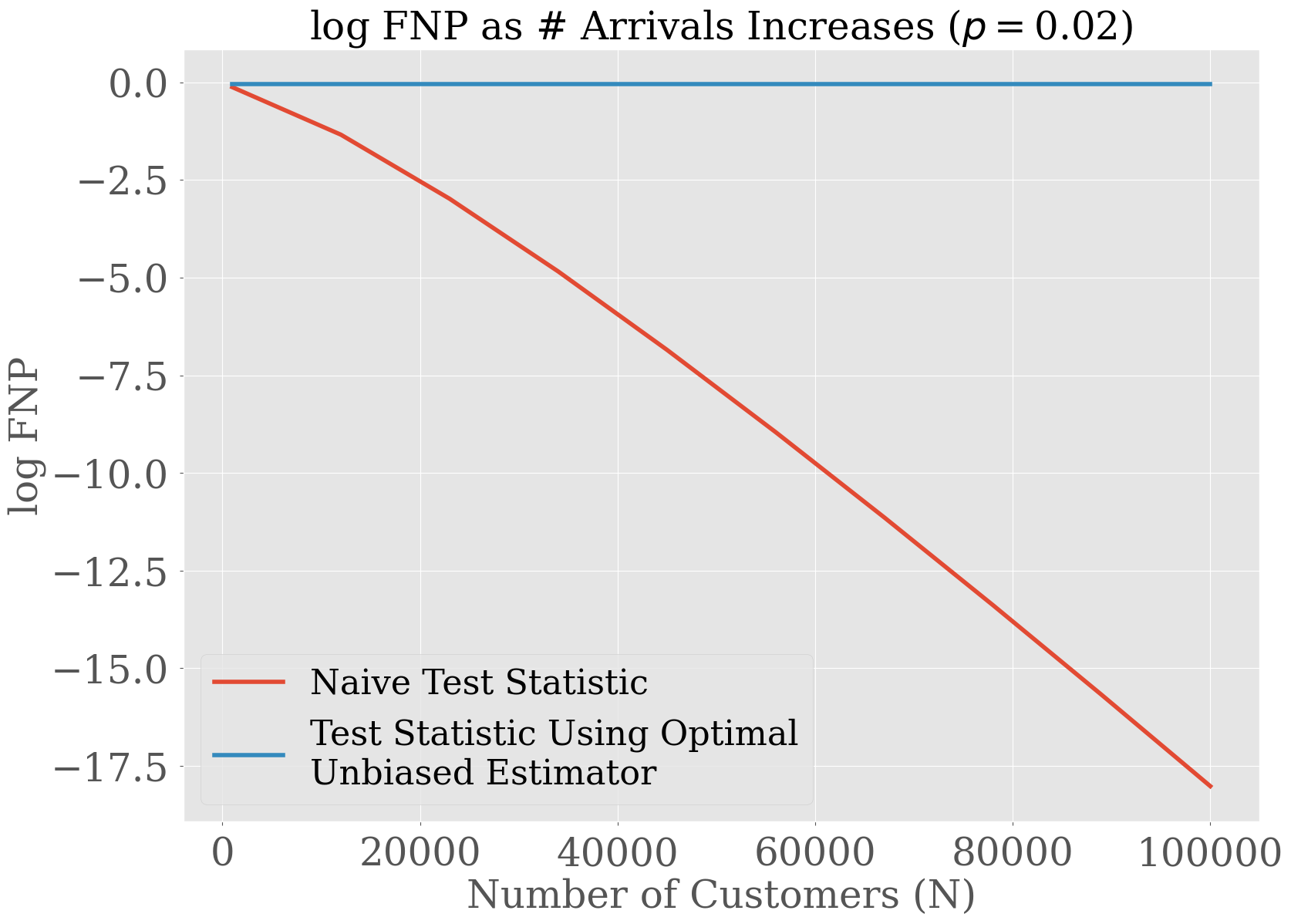}
    \includegraphics[width=0.49\textwidth]{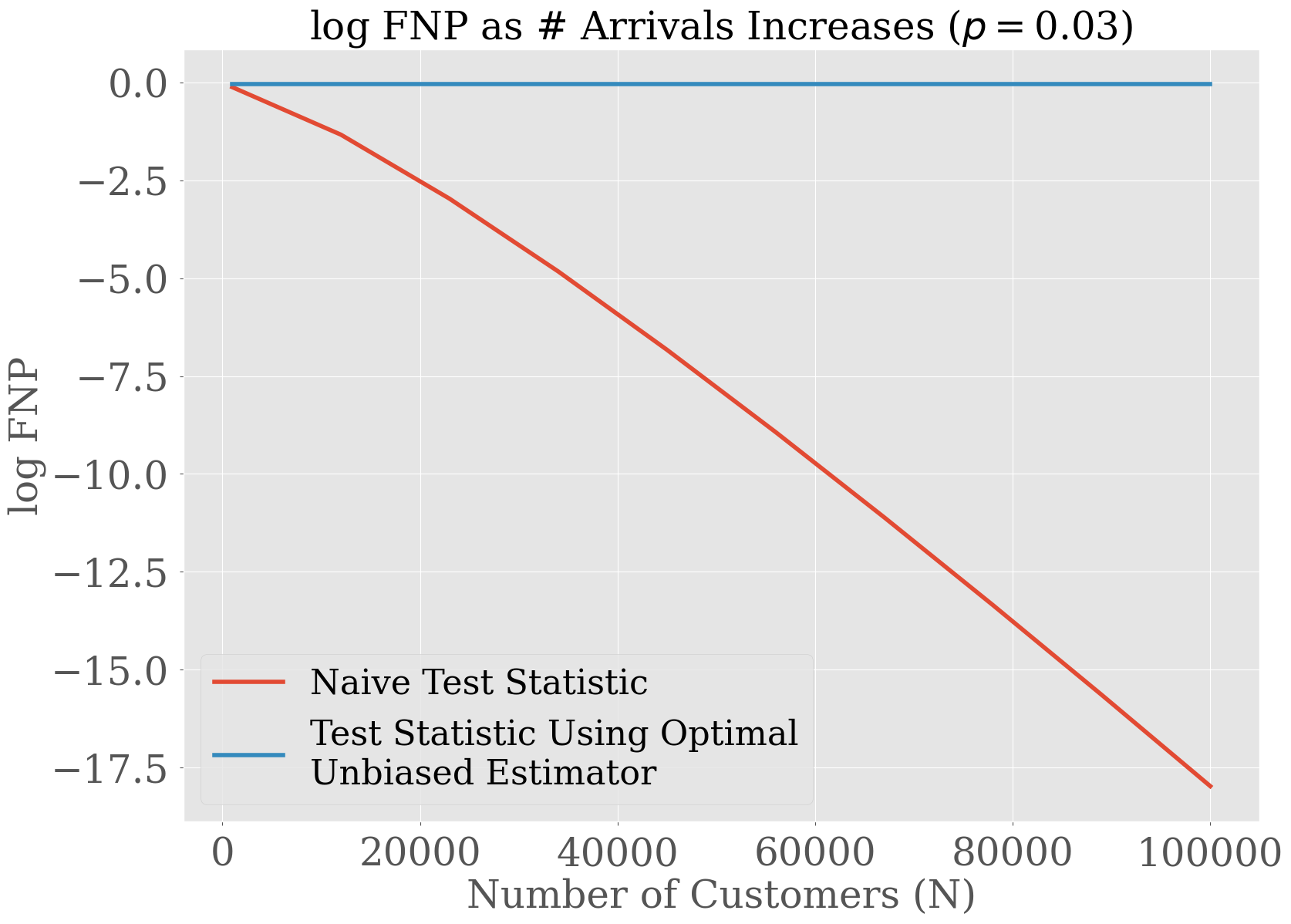}
    \includegraphics[width=0.49\textwidth]{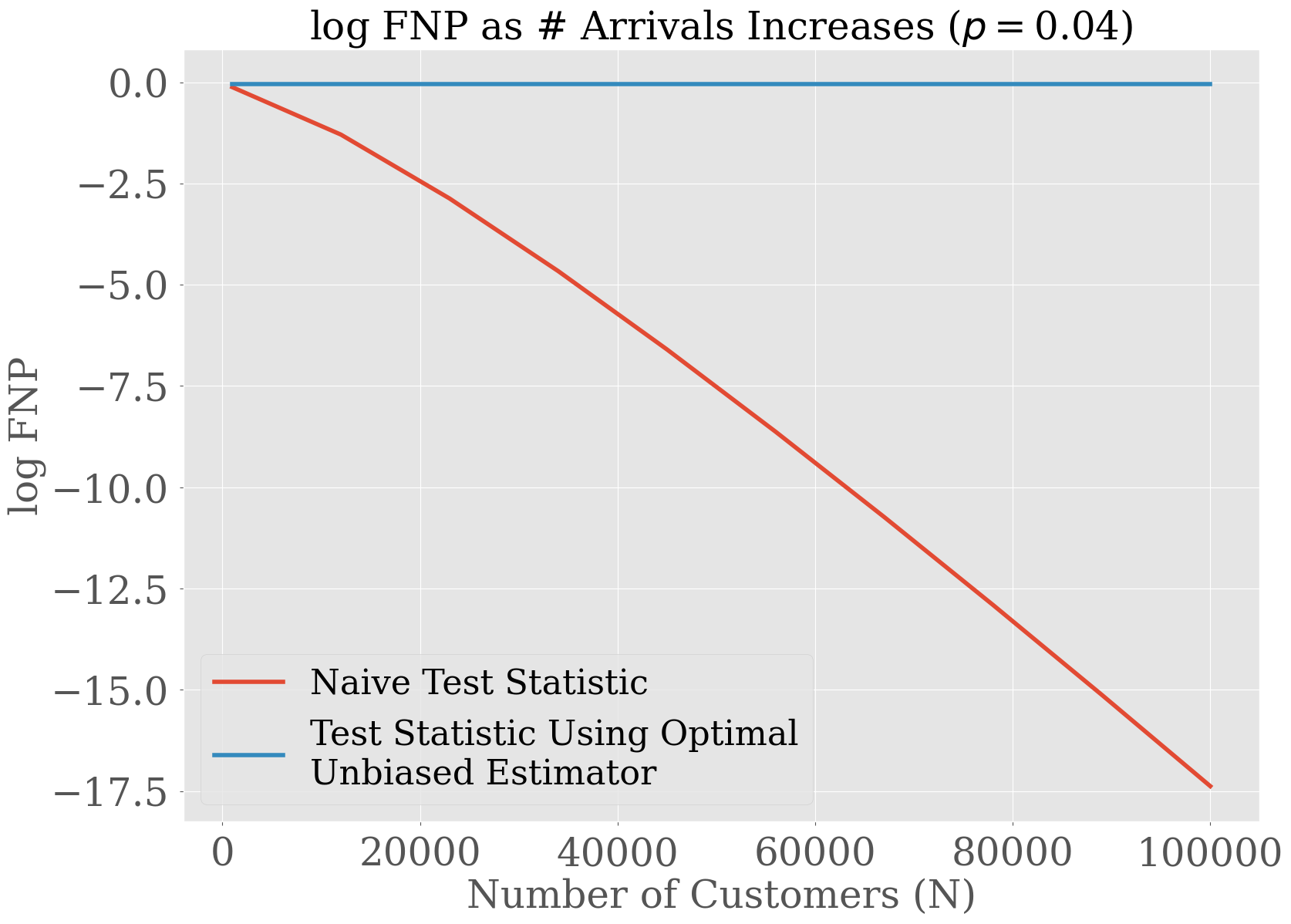}
    \caption{Log false negative probability of decision rule \eqref{eq:decision} using the test statistic $\hat{T}_N$ (cf.~\eqref{eq:tstat}), compared to the log false negative probability  of an unbiased test statistic (cf.~\eqref{eq:ub_power}) using an unbiased estimator that obeys \eqref{eq:unbiased_clt}. Bernoulli randomized user-level experiments with $N\in[10^3,10^5]$, $p \in\{0.01,0.02,0.03,0.04\}$, and other model parameters fixed as in Section \ref{sec:sims setup}.}
    \label{fig:power vary treatment}
\end{figure}

\end{document}